\numberwithin{equation}{section}
\def \bcompaula {\begin{color}{red}} 
 \def \ecompaula {\end{color}}
\newcommand{\recn}[1]{\mathtt{r}_{#1}}
\newcommand{\NP}{{\bf NP}}
\newcommand{\B}{\mathit{itob}}
\newcommand{\transfKB}{ \mathit{clear}}
\newcommand{\TransfKB}[1]{ \transfKB (#1) }
\newcommand{\wtei}[4]{#1 \vdash #2 :_{#4} #3}
\newcommand{\comma}{, }
\newif\ifproofs
\newif\ifcomments
\newif\iflong
\newcommand{\mycase}[1]{{\bf #1 }.}
\definecolor{mymagenta}{rgb}{0.5,0,0.5}
\definecolor{myred}{rgb}{0.5,0,0}
\definecolor{mygreen}{rgb}{0,0.4,0}
\definecolor{myblue}{rgb}{0,0,0.6}
\newcommand{\set}[1]{\{#1\}}
\newcommand{\natset}{\mathbb{N}}
\newcommand{\mkkeyword}[1]{\mathtt{\color{myblue}#1}}
\newcommand{\mkconstant}[1]{\mathtt{\color{mymagenta}#1}}
\newcommand{\ttcomma}{\texttt{\upshape,}}
\newcommand{\defrule}[1]{%
  \hypertarget{rule:{#1}}{%
    \text{\scriptsize[\textsc{#1}]}%
  }%
}
\newcommand{\refrule}[1]{%
  \hyperlink{rule:{#1}}{%
    \text{\upshape\scriptsize[\textsc{#1}]}%
  }%
}
\newcommand{\lambdabb}{\lambdab}
\newcommand{\lambdabbminus}{(\lambdab)^-}
\newcommand{\modalone}{\bullet}
\newcommand{\lambdab}{\lambda^{\modalone}_{\rightarrow}}
\newcommand{\lambdaKB}{\lambda^{\text KB}}
\newcommand{\lambdaNakano}{\lambda_{\text Nak}}
\newcommand{\fromtypeNak}{\mathit{tree}}
\newcommand{\toB}[1]{\fromtypeNak(#1)}
\newcommand{\munf}{\mathit{nf}_{\!\!\!\mu}}
\newcommand{\Munf}[1]{\munf (#1)}
\newcommand{\introarrow}{$\arrow$I}
\newcommand{\elimarrow}{$\arrow$E}
\newcommand{\introbullet}{$\modalone$I}
\newcommand{\treeEq}{=}
\newcommand{\axiom}{axiom}
\newcommand{\const}{const}
\newcommand{\prodconst}{$\times$ \const}
\newcommand{\natzero}{$\nat \zero$}
\newcommand{\natsucc}{$\nat \suc$}
\newcommand{\Constant}{\mkconstant{k}}
\newcommand{\var}{\varX}
\newcommand{\varX}{{x}}
\newcommand{\varY}{{y}}
\newcommand{\varZ}{{z}}
\newcommand{\Expression}{\ExpressionE}
\newcommand{\ExpressionE}{e}
\newcommand{\ExpressionF}{f}
\newcommand{\Type}{\TypeT}
\newcommand{\TypeT}{t}
\newcommand{\TypeS}{s}
\newcommand{\llt}{\mathit{LLT}}
\newcommand{\LLT}[1]{\llt (#1)}
\newcommand{\bt}{\mathit{BT}}
\newcommand{\BT}[1]{\bt(#1)}
\newcommand{\hnf}{\mathit{hnf}}
\newcommand{\Hnf}[1]{\hnf(#1)}
\newcommand{\TypeN}{\TypeNT}
\newcommand{\TypeNT}{T}
\newcommand{\TypeNS}{S}
\newcommand{\ContextNak}{\Sigma}
\newcommand{\Context}{\mathrm{E}}
\newcommand{\PContext}{\mathrm{C}}
\newcommand{\Hole}{[~]}
\newcommand{\MetaType}{T}
\newcommand{\MetaTypeS}{S}
\newcommand{\TypeVar}{X}
\newcommand{\TypeVarY}{Y}
\newcommand{\TypeVarA}{X_1}
\newcommand{\TypeVarC}{X_3}
\newcommand{\TypeVarD}{X_4}
\newcommand{\substT}{\tau}
\newcommand{\substTE}{\sigma}
\newcommand{\substE}{\substR}
\newcommand{\substR}{\theta}
\newcommand{\update}[3]{#1 \{ #2 \mapsto  #3 \}}
\newcommand{\singleTS}[2]{\{#1 \mapsto #2 \}}
\newcommand{\newTSubs}[2]{#1 \mapsto #2}
\newcommand{\appTS}[2] { #1 (#2)}
\newcommand{\mapTS}[4]{\{ #1 \mapsto  #2, \ldots,  #3 \mapsto #4 \}}
\newcommand{\IntVar}{N}
\newcommand{\IntVarM}{M}
\newcommand{\IntVara}{N_1}
\newcommand{\IntVarc}{N_2}
\newcommand{\IntVard}{N_3}
\newcommand{\IntVare}{N_4}
\newcommand{\IntExp}{E}
\newcommand{\setC}{\mathcal{C}}
\newcommand{\eqc}[1]{{#1}^{=}}
\newcommand{\intc}[1]{#1^{\mathbb{N}}}
\newcommand{\setD}{\mathcal{V}}
\newcommand{\setCone}{\setC_1}
\newcommand{\setCtwo}{\setC_2}
\newcommand{\setE}{\mathcal{E}}
\newcommand{\setT}{\mathcal{T}}
\newcommand{\typec}[4] {#1 \vdash #2 : #3 \ \mid \  #4}
\newcommand{\equal}{  \stackrel{{ ?}}{{=}}  }
\newcommand{\mayor}{ \ \stackrel{{ ?}}{{\geq} \ }}
\newcommand{\menor}{\ \stackrel{{ ?}}{{<}} \ }
\newcommand{\binaryconstructor}{ \; \mathtt{ op} \;}
\newcommand{\cond}{p}
\newcommand{\assign}[2]{#1 \leftarrow #2}
\newcommand{\unify}{\mathtt{ unify}}
\newcommand{\Unify}[2]{\unify (#1, #2)}
 \newcommand{\isNatOpBox}[1]{\mathtt{isNatorOp?}(#1)}
\newcommand{\equalcons}[2]{\mathtt{EqualCons?}(#1, #2)}
\newcommand{\pone}{\mathtt{True}}
\newcommand{\ptwo}{ \mathtt{Diff_{\infinite}}?}
\newcommand{\pthree}{\infinite{\mbox{\!-}}\mathtt{free}?}
\newcommand{\pfour}{\mathtt{tailfinite?}}
\newcommand{\truecond}[1]{\pone (#1)}
\newcommand{\equalinfinite}[1]{ \ptwo (#1)}
\newcommand{\freeinfinite}[1]{\pthree (#1)}
\newcommand{\tailfinite}[1]{\pfour (#1)}
\newcommand{\guards}{\mathtt{guard}}
\newcommand{\Guards}[1]{\guards(#1)}
\newcommand{\countb}{\mathtt{count}}
\newcommand{\Countb}[2]{\countb(#1,#2)}
\newcommand{\finitealt}{\mathtt{arrows}}
\newcommand{\Finitealt}[2]{\finitealt(#1,#2)}
\newcommand{\infertype}{\mathtt{ infer}}
 \newcommand{\allconstraints}{\mathtt{ SubC}}
  \newcommand{\subtypes}{\mathtt{ SubT}}
 \newcommand{\subtrees}{\mathtt{trees}}
 \newcommand{\psubtrees}{\mathtt{ptrees}}
  \newcommand{\Subtrees}[1]{\subtrees(#1)}
 \newcommand{\Psubtrees}[1]{\psubtrees(#1)}
 \newcommand{\size}[1]{\mid \! \! #1 \! \! \mid } 
\newcommand{\unit}{\mkconstant{unit}}
\newcommand{\pair}{\mkconstant{pair}}
\newcommand{\fst}{\mkconstant{fst}}
\newcommand{\snd}{\mkconstant{snd}}
\newcommand{\fix}{\mkconstant{fix}}
\newcommand{\await}{\mkconstant{await}}
\newcommand{\Await}[1]{\await~#1}
\newcommand{\natrec}{\mkconstant{natrec}}
\newcommand{\listrec}{\mkconstant{lrec}}
\newcommand{\consl}{\mkconstant{consl}}
\newcommand{\nil}{\mkconstant{nil}}
\newcommand{\case}{\mkconstant{case}}
\newcommand{\inl}{\mkconstant{inl}}
\newcommand{\inr}{\mkconstant{inr}}
\newcommand{\Unit}{\mkbasictype{Unit}}
\newcommand{\ifkw}{\mkkeyword{if}}
\newcommand{\thenkw}{\mkkeyword{then}}
\newcommand{\elsekw}{\mkkeyword{else}}
\newcommand{\Fun}[1]{\lambda#1.}
\newcommand{\Pair}[2]{\langle #1\ttcomma#2\rangle}
\newcommand{\Fst}[1]{(\fst~#1)}
\newcommand{\Snd}[1]{(\snd~#1)}
\newcommand{\Ebullet}[1]{\modalone #1}
\newcommand{\skipeven}{\mkconstant{skip}}
\newcommand{\foldr}{\mkconstant{foldr}}
\newcommand{\mkbasictype}[1]{\mkkeyword{#1}}
\newcommand{\arrow}{\to}
\newcommand{\tbullet}[1][]{%
  \modalone%
  \ifblank{#1}{}{^{#1}}%
}
\newcommand{\subst}[2]{ [#1 /#2 ]}
\newcommand{\dom}{{dom}}
\newcommand{\teq}{=}
\newcommand{\nakteq}{\simeq}
\newcommand{\wte}[3]{#1 \vdash #2 : #3}
\newcommand{\red}{\longrightarrow}
\newcommand{\TypeContext}{\Upgamma}
\newcommand{\TypeContextD}{\Delta}
\newcommand{\ConsCtx}{\kappa}
\newcommand{\typeOf}[3]{\mathtt{typeOf}(#1,#2,#3)}
\newcommand\ldana{(\![}
\newcommand\rdana{]\!)}
\newcommand\ti[1]{{\ldana{#1}\rdana}}
\newcommand\indti[2]{{\ldana{#1}\rdana}_{#2}}
\newcommand{\typerank}{\mathit{rank}}
\newcommand{\ind}{i}
\newcommand{\indj}{j}
\newcommand{\funsubst}{\rho}
\newcommand{\EE}{\mathbb{E}}
\newcommand{\setWN}{\mathbb{WN}}
\newcommand{\setNVAR}{\mathbb{HV}}
\newcommand{\HH}{\setWN}
\newcommand{\WNVAR}{\setNVAR }
\newcommand{\modelsi}{\models_{\ind}}
\newcommand{\modelsj}{\models_{\indj}}
\newcommand{\infinite}{\mkbasictype{{\modalone^\infty}}}
\newcommand{\ListN}{\List{\nat}}
\newcommand{\ListNtwo}{\Listtwo{\nat}}
\newcommand{\badListN}{\badList{\nat}}
\newcommand{\List}[1]{\mkbasictype{Str1}_{#1}}
\newcommand{\Listtwo}[1]{\mkbasictype{Str2}_{#1}}
\newcommand{\badList}[1]{\mkbasictype{Str}'_{\nat}}
\newcommand{\Listhalf}[1]{\mkbasictype{StrHalf}_{#1}}
\newcommand{\Lista}[1]{\mkbasictype{List}_{#1}}
\newcommand{\Etype}{\mkbasictype{E}}
\newcommand{\dtwo}{\mkbasictype{F}}
\newcommand{\dthree}{\mkbasictype{G}}
\newcommand{\coList}[1]{\mkbasictype{coList}_{#1}}
\newcommand{\addlist}{\mkconstant{sum}}
\newcommand{\eqels}{\mkconstant{eq?}}
\newcommand{\coNat}{\mkbasictype{coNat}}
\newcommand{\cozero}{\mkconstant{z}}
\newcommand{\cosucc}{\mkconstant{s}}
\newcommand{\add}{\mkconstant{add}}
\newcommand{\minus}{\mkconstant{minus}}
\newcommand{\fixtype}{\mkbasictype{D}}
\newcommand{\nat}{\mkbasictype{Nat}}
\newcommand{\suc}{\succesor \ }
\newcommand{\succesor}{\mkconstant{succ}}
\newcommand{\Suc}[1]{(\succesor~#1)}
\newcommand{\bool}{\mkbasictype{Bool}}
\newcommand{\stream}[2]{\mkbasictype{Stream} ({#1}, {#2})}
\newcommand{\interleave}{\mkconstant{interleave}}
\newcommand{\map}{\mkconstant{map}}
\newcommand{\maap}{\mkconstant{maap}}
\newcommand{\nats}{\mkconstant{nats}}
\newcommand{\suces}{\suc}
\newcommand{\naats}{\mkconstant{naats}}
\newcommand{\merge}{\mkconstant{merge}}
\newcommand{\hamming}{\mkconstant{ham}}
\newcommand{\pairup}{\mkconstant{pairup}}
\newcommand{\filter}{\mkconstant{filter}}
\newcommand{\ones}{\mkconstant{ones}}
\newcommand{\zero}{\mkconstant{0}}
\newcommand{\iterate}{\mkconstant{iterate}}
\newcommand{\fib}{\mkconstant{fib}}
\newcommand{\get}{\mkconstant{get}}
\newcommand{\take}{\mkconstant{take}}
\newcommand{\idfun}{\mkconstant{I}}
\newcommand{\omegaterm}{\mkconstant{\Upomega}}
\newcommand{\app}[2]{\mbox{$(#1 \ #2)$}}
 \theoremstyle{definition}
 \newtheorem{definition}{Definition}[section]
 \newtheorem{lemma}[definition]{Lemma}
 \newtheorem{theorem}[definition]{Theorem}
 \newtheorem{corollary}[definition]{Corollary}
 \newcommand{\ri}[2]{{#2} of {#1}}
\newcommand{\setcat}{{Set}}
\newcommand{\topos}{\mathcal{S}}
\newcommand{\sem}[1]{[\! \![#1] \! \!]}
\newcommand{\Next}{\blacktriangleright}
\newcommand{\isomarrow}{\xi}
\newcommand{\isomprod}{\theta}
\newcommand{\next}{{next}}
\newcommand{\nextn}[1]{{next}^{#1}}
\newcommand{\incFunctor}{\mathcal{I}}
\newcommand{\IncFunctor}[1]{\incFunctor (#1)}
\newcommand{\isomarrown}[1]{\xi^{#1}}
\newcommand{\isomprodn}[1]{\theta^{#1}}
\newcommand{\eval}{{eval}}
\newcommand{\curry}{{curry}}
\newcommand{\expon}[2]{{#2}^{#1}}
\newcommand{\oeqctx}{\approx}
\begin{document}

\title[A Light Modality for Recursion]{A Light Modality for Recursion}

\author[Paula Severi]{Paula Severi}
\address{Department of Computer Science, University of Leicester, UK}

\keywords{Typed lambda calculus, infinite data, productivity, 
denotational semantics, type inference}


  \begin{abstract}
    We investigate a modality for controlling 
     the behaviour of recursive functional 
     programs on infinite structures 
     which is completely silent in the syntax.
     The latter means that 
     programs do not contain  ``marks'' showing 
     the application of the introduction and elimination rules for the modality.
     This shifts the burden of controlling recursion
    from the programmer  to the compiler.  
    To do this, we introduce a typed lambda calculus  \`a la Curry
    with a
     silent modality  and  guarded 
    recursive types. 
    The  typing discipline      guarantees  normalisation and can be
    transformed into an algorithm which infers the type of a program.
    \end{abstract}

\maketitle

\section{Introduction}
\label{section:Introduction}

The quest of finding a typing discipline that 
guarantees that functions on coinductive data types
are productive has prompted a variety of works
that rely on a modal operator  \cite{Nakano00:lics,KrishnaswamiB11,severidevriesICFP2012,DBLP:conf/popl/CaveFPP14,AM13,CBGB15,DBLP:conf/fossacs/BizjakGCMB16,Guatto2018}.
Typability in these systems guarantees normalization and non-normalizing  programs
such as $\fix~\idfun$ where $\idfun = \lambda x. x$ are not typable.
All these works  
except for Nakano's~\cite{Nakano00:lics}
have explicit constructors and destructors  in the syntax
of programs  
~\cite{KrishnaswamiB11,severidevriesICFP2012,DBLP:conf/popl/CaveFPP14,CBGB15,Guatto2018}.  
This has the advantage that type checking and/or type inference are
easy but  it has the disadvantage that
they do not really liberate 
the programmer from the task of controlling recursion
since one has 
to know when to apply the introduction and elimination rules
for the modal operator.

We present  a  typed lambda calculus with a temporal
modal operator $\modalone$  called $\lambdab$ 
which 
has the  advantage of having the modal operator silent in 
the syntax of programs
 The system $\lambdab$ does not  need to make use of 
  a subtyping relation as Nakano's.
Even without a subtyping relation, 
 the modal operator is still challenging to deal with because types are intrinsically
non-structural, not corresponding to any expression form in the
calculus.

 The modal operator $\modalone$, also called {\em the delay operator},
 indicates that the data on the recursive calls
will only be available later.
Apart from the modal operator, we also include  guarded 
recursive types 
which generalize the  recursive equation 
$\List{\Type} = \Type \times \bullet \List{\Type}$
for streams 
\cite{KrishnaswamiB11,severidevriesICFP2012}.
This allows us to type  productive functions 
on streams such as
 \[
  \skipeven~xs = 
  \Pair{\fst~xs}{\skipeven~(\snd~(\snd~xs))}
  \]
  which deletes the elements at even positions of a stream
  using the  type $\ListN \arrow \ListNtwo$
  where 
  $\Listtwo{\Type} = \Type \times \bullet \bullet \Listtwo{\Type}$.
  The temporal modal operator $\modalone$ allows to type many productive functions 
which are rejected by  the proof assistant Coq
\cite{Coquand93}.

Lazy functional programming is acknowledged as a paradigm that fosters
software modularity~\cite{Hughes89} and enables programmers to specify
computations over possibly infinite data structures in elegant and
concise ways.    
   We give some examples that show how modularization 
   and compositionality  can be
   achieved using the modal operator.
   An important recursive pattern used in functional programming for modularisation
  is  $\foldr$ 
   defined by 
  \[
  \foldr~f~xs = f~(\fst~xs)~(\foldr~f~(\snd~xs))
  \]
  The  type of $\foldr$ which is 
  $(\Type \arrow \bullet \TypeS  \arrow \TypeS) 
   \arrow \List{\Type} \arrow \TypeS$, is telling us
   what is the safe way to build functions.
   While it is possible to type 
   \[
  \iterate~f= \foldr~(\Fun{x y}{\Pair{f~x}{y} }) 
\] 
by assigning the type $\Type \arrow \Type$ to $f$,   and assuming $\TypeS=\List{\Type}$, 
it is not possible to type the unproductive function 
  $\foldr~(\Fun{x y}{y})$.
   This is because  $\Fun{x y}{y}$ does not have type
  $ (\Type \arrow \bullet \TypeS  \arrow \TypeS)$.

In spite of the fact that $\ListN \arrow \ListNtwo$
    is the type of $\skipeven$ with the minimal number of bullets,
     it does not give  enough information to  
   the programmer to know that the composition of $\skipeven$
   with itself  is still typeable. 
   The type inference algorithm for  $\lambdab$   can 
    infer a more general type for $\skipeven$, 
    which looks like
   \[
   \stream{\IntVar}{\TypeVar}
    \arrow \stream{\IntVar+1}{\TypeVar}
   \]  
   where 
   $\stream{\IntVar}{\TypeVar} =
   \TypeVar \times \bullet^{\IntVar} \stream{\IntVar}{\TypeVar}
   $
   and 
   $\IntVar$ and $\TypeVar$
   are integer and type variable, respectively.
   It is clear now that from
    the above type, a programmer can deduce  that
   it is possible to do  
   the composition of $\skipeven$ with itself.

\subsection{Outline}
\cref{section:calculus} defines  the 
 typed lambda calculus $\lambdab$ with the silent modal operator
 and shows some examples.
Section \ref{section:properties} proves subject reduction and normalisation.
It also shows  that the  L\`evy-Longo and B\"ohm trees of a typeable expression 
have no $\bot$
for $\infinite$-free and tail finite types respectively.
Section \ref{section:comparision} 
compares $\lambdab$ with the type systems of
\cite{Nakano00:lics} and \cite{KrishnaswamiB11}.
Section \ref{section:semantics} gives  an adequate denotational semantics
for $\lambdab$ in the topos of trees as a way of linking our system to the work 
by Birkedal et al \cite{LMCSBirkedaletal}.
Section \ref{section:typeinference} shows 
 decidability of the type inference problem for $\lambdab$.
This problem is solved by an algorithm 
which  has the interesting feature of 
combining unification of types with 
integer linear programming.
Sections~\ref{section:relatedwork} and~\ref{section:conclusions}
 discuss related and future work, respectively. 
 
 \subsection{Contribution}
This paper improves and  extends  \cite{DBLP:conf/fossacs/Severi17}  in several ways:
\begin{enumerate}

\item We prove that the typed lambda calculus of Krishnaswami and Benton \cite{KrishnaswamiB11}
is included in $\lambdab$ and that $\lambdab$ is included in the one
of Nakano \cite{Nakano00:lics}.

\item We give conditions on the types that guarantee that 
 the L\`evy-Longo and B\"ohm trees of a typable expression have no $\bot$.

\item Besides   proving soundness for the denotational semantics, 
we also prove that it is adequate with respect to an observational equivalence.

\item We obtain  automatic ways of guaranteeing that
the program satisfies the properties of  normalization,  
  having L\`evy-Longo and   B\"ohm tree without $\bot$. 

\end{enumerate}

\section{The Typed Lambda Calculus with the Delay Modality}
\label{section:calculus}

This section defines the type discipline for $\lambdab$.
Following~\cite{Nakano00:lics}, we introduce the \emph{delay} type
constructor $\tbullet$, so that an expression of type $\tbullet\Type$
denotes a value of type $\Type$ that is available ``at the next moment
in time''.
This constructor is key to control recursion and attain normalisation
of expressions.

\subsection{Syntax}
\label{section:syntax}

The syntax for   {\em expressions}  and 
{\em types} 
is given by the following grammars.
 \[
 \begin{array}{@{}c@{\quad \quad \quad}c@{}}
  \begin{array}[t]{r@{~~}c@{~~}l@{\quad}l}
    \Expression & ::=^{ind} & & \textbf{Expression} \\
    &  & \var       & \text{(variable)} \\
      & |  & \Constant & \text{(constant)} \\
    & | & \Fun\var\Expression & \text{(abstraction)} \\
    & | & \Expression\Expression & \text{(application)} 
%
  \end{array}
  &
  \begin{array}[t]{r@{~~}c@{~~}l@{\quad}l}
    \Type & ::=^{coind} & & \textbf{Pseudo-type} \\
    &  & \TypeVar & \text{(type variable)}\\
    & | & \nat & \text{(natural type)} \\
    & | & \Type \times \Type & \text{(product)} \\
    & | & \Type \arrow \Type & \text{(arrow)}  \\
    & | & \tbullet\Type & \text{(delay)} 
  \end{array}
 \end{array}
 \]

In addition to the usual constructs of the $\lambda$-calculus,
expressions include constants, ranged over by $\Constant$. Constants are the 
constructor for pairs $\pair$, the projections $\fst$ and  $\snd$
and $\zero$ and $\suc$.
We do not need a primitive constant for the fixed point operator
because it can be expressed and typed inside the language.
Expressions are subject to the usual conventions of the
$\lambda$-calculus. In particular, we assume that the bodies of
abstractions extend as much as possible to the right, that
applications associate to the left, and we use parentheses to
disambiguate the notation when necessary.
We write $\Pair{\Expression_1}{\Expression_2}$ in place of
$\pair~\Expression_1~\Expression_2$. 


 The syntax of \emph{pseudo-types} is
defined  co-inductively. 
A type is
a possibly infinite tree, where each internal node is labelled by a
type constructor $\rightarrow$, 
$\times$ or   $\bullet$  
and has as many children as the arity of the
constructor.
The leaves of the tree (if any) are labelled by basic types which in this case 
are type variables or $\nat$.
The type variables are needed in $\lambdab$ to express the general type of functions such as the identity.
We use a co-inductive syntax to describe infinite data structures (such
as streams).
The syntax for pseudo-types include the types of the simply typed lambda
calculus ({\em arrows} and {\em products})
and the  \emph{delay} type
constructor $\tbullet$~\cite{Nakano00:lics}.

For simplicity, we only include $\nat$ and type variables  as basic types.
One could easily add other basic types such as $\bool$ and $\Unit$
together with constants for their values.

%

\begin{definition}[Types]\label{def:types}
 We say that a pseudo-type $\Type$  is 
  \begin{enumerate}
 
  \item\label{def:types2} 
    {\em regular}  if  
    its tree representation 
     has finitely many distinct sub-trees;
     
  \item\label{def:types3} 
  {\em guarded} if 
  every infinite path in its tree
    representation  has infinitely many $\bullet$'s;

\item  a \emph{type} if
it is regular and  guarded.

  \end{enumerate}
\end{definition}

The regularity condition  implies that we only consider types
admitting a finite representation.
It is equivalent to representing types 
with  $\mu$-notation and a  strong equality
which allows for an infinite number of unfoldings.
This is also called the {\em  equirecursive approach} since it views
types as the unique solutions 
of recursive equations~\cite{DBLP:journals/jfp/GapeyevLP02} \cite[Section 20.2]{Pierce02}.
The existence and
uniqueness of a solution for a pseudo-type satisfying condition \ref{def:types2}  
 follow from known
results~(see \cite{Courcelle83} and also 
Theorem 7.5.34 of \cite{BookCC}). For example, there 
are unique pseudo-types $\badListN$, $\ListN$, and $\infinite$ that respectively
satisfy the equations $\badListN=\nat\times\badListN$,
$\ListN=\nat\times \bullet \ListN$, and $\infinite=\bullet\infinite$.

The guardedness condition intuitively means that not all parts of an
infinite data structure can be available at once: those whose type is
prefixed by a $\bullet$ are necessarily \lq\lq delayed\rq\rq\ in the
sense that recursive calls on them must be deeper.
For example, $\ListN$ is a type 
that denotes streams of natural numbers, where each subsequent element
of the stream is delayed by one $\bullet$ compared to its
predecessor. Instead $\badListN$ is not a type: 
it 
would denote an infinite stream of natural numbers, whose elements are
all available right away.
If the types are written in $\mu$-notation, the guardedness condition means that
all occurrences of $\TypeVar$ in the body $\Type$  of $\mu \TypeVar. \Type$
are in the scope of
a $\bullet$.

The type $\infinite$ is somehow degenerated in that it contains no
actual data constructors. Unsurprisingly, we will see that
non-normalising terms such as
$\omegaterm = (\lambda \var. \var\ \var)(\lambda \var. \var\ \var)$ can
only be typed with $\infinite$
(see Theorem \ref{theorem:weakheadnormalization}).  Without condition~\ref{def:types3},
$\omegaterm$ could be given any type 
since the recursive pseudo-type 
$\fixtype = \fixtype \rightarrow \Type$
would become a type.

 We adopt the usual conventions regarding arrow types (which associate
to the right) and assume the following precedence among
type  constructors:
$\arrow$, $\times$ and  $\bullet$  with
 $\bullet$  having the highest precedence.
 Sometimes we will write $\tbullet[n]\Type$ in place of
  $\smash{\underbrace{\tbullet\cdots\tbullet}_{n\text{-times}}\Type}$.
\\

\subsection{Operational Semantics}
\label{section:reduction}

Expressions reduce according to a standard \emph{call-by-name}
semantics:  
\[
  \begin{array}{c}
    \inferrule[\defrule{r-beta}]{}{
      (\Fun\var\ExpressionE_1)~\Expression_2
      \red
      \Expression_1 \subst{\Expression_2}\var
    }
\quad
  \inferrule[\defrule{r-first}]{}{
      \fst~{\Pair{\Expression_1}{\Expression_2}}
      \red
      \Expression_1
    }    \quad 
    \inferrule[\defrule{r-second}]{}{
      \snd~{\Pair{\Expression_1}{\Expression_2}}
      \red
      \Expression_2
    }  
    \quad
      \inferrule[\defrule{r-ctxt}]{
      \ExpressionE \red \ExpressionF
    }{
      \Context[\ExpressionE]
      \red
      \Context[\ExpressionF]
    } 
\end{array}
\]
where the \emph{evaluation contexts} are 
  $\Context ::=
  \Hole
  \mid ({\Context}~{\Expression})
  \mid \Fst{\Context} \mid \Snd{\Context} \mid \Suc{\Context}
  $. 
{\em Normal forms} are defined as usual
as  expressions that do not reduce.
The reflexive and transitive closure of $\red$ is denoted by $\red^{*}$.

Note that 
$\red$ does not allow  to reduce expressions which are inside the body of an abstraction or 
  component of a pair as  the usual $\beta$-reduction of lambda calculus does.
In the standard terminology of lambda calculus,
 $\red$ actually corresponds to {\em weak head reduction} and 
the  normal forms of $\red$ are actually  called {\em weak head normal forms}.

\subsection{Type System}
\label{section:types}
First we assume a set $\ConsCtx$ containg the types for the constants:
\[
\begin{array}{l@{\,\,\,}l}
\begin{array}[t]{l@{\!\,\,\,}l@{\!\,\,\,}l}
  \pair & : & \Type\arrow\TypeS\arrow\Type\times\TypeS  \\
  \fst & : & \Type\times \TypeS\arrow\Type \\
   \snd & : & \Type\times \TypeS\arrow\TypeS
\end{array}
&
\begin{array}[t]{l@{\!\,\,\,}l@{\!\,\,\,}l}
\zero & : & \nat \\
\suc & : & \nat \arrow \nat
\end{array}
\end{array}
\]

The {\em type assignment system} $\lambdab$ is  
defined by the following  rules. 

\[
      \begin{array}{@{}c@{}}
        \inferrule[\defrule\axiom]{
        \varX: \Type \in \TypeContext
        }{
        \wte{\TypeContext}{\varX}{\Type}
        }
        \qquad
        \inferrule[\defrule\const]{
         \Constant: \Type \in \ConsCtx
        }{
        \wte\TypeContext\Constant\Type
        }
        \quad 
        \inferrule[\defrule\introbullet]{
        \wte{\TypeContext}{\Expression}\Type
        }{
        \wte{\TypeContext}{\Expression}{\tbullet\Type}
        }
\quad
        \inferrule[\defrule\introarrow]{
        \wte{\TypeContext,\varX: \tbullet[n]\Type }{\Expression}{\tbullet[n]\TypeS}
        }{
        \wte{\TypeContext}{\Fun \varX \Expression}{\tbullet[n](\Type \arrow \TypeS)}
        }
        \quad
        \inferrule[\defrule\elimarrow]{
          \wte{\TypeContext}{\Expression_1}{\tbullet[n](\Type \arrow \TypeS)}
          \ \ \ 
          \wte{\TypeContext}{\Expression_2}{\tbullet[n]\Type}
        }{
        \wte{\TypeContext}{\Expression_1\Expression_2}{\tbullet[n]\TypeS}
        }
      \end{array}
  \]

\vspace{0.5cm}

 The  rule~\refrule{\introbullet} 
 introduces   the modality. 
 The rule is  unusual in the sense
 that the expression remains
 the same. We do not have a constructor for  $\tbullet$  at the level
 of expressions.

The rule ~\refrule{\introbullet} does not have any restrictions, i.e. 
 if
it is known that a value will only be available with delay $n$, then
it will also be available with any delay $m \geq n$, but not earlier.
Using rule $\refrule\introbullet$ and the recursive type 
$\fixtype= \tbullet \fixtype \rightarrow t$,
 we can derive that the fixed point
combinator\label{fix}
$
\fix = \lambda \varY. (\lambda \var. \varY\ (\var\ \var) ) (\lambda
\var. \varY\ (\var\ \var) )
$
has type $(\tbullet \Type \rightarrow \Type)\rightarrow \Type$
by assigning 
the type $\fixtype \rightarrow t$
to the first occurrence of $\lambda \var. \varY\ (\var\ \var) $
and  $\tbullet \fixtype \rightarrow t$
to the second one~\cite{Nakano00:lics}.

The  rules\refrule\introarrow{} and ~\refrule{\elimarrow} 
 allow for an arbitrary delay in front of the
types of the entities involved. Intuitively, the number of
$\tbullet$'s represents the delay at which a value becomes
available. So for example, rule~\refrule\introarrow{} says that a
function which accepts an argument $\var$ of type $\TypeT$ delayed by
$n$ and produces a result of type $\TypeS$ delayed by the same $n$ has
type $\tbullet[n](\TypeT \arrow \TypeS)$, that is a function delayed
by $n$ that maps elements of $\TypeT$ into elements of $\TypeS$.

\subsection{Inductive data types}
 \label{section:naturalnumbers}
 
 The temporal modal operator can only be used for defining co-inductive data types.
 Inductive data types need to be introduced separately into the calculus.
For example, we can introduce primitive recursion on $\nat$
 by means of a function 
 $\natrec: \Type\arrow (\nat \arrow \Type \arrow \Type) \arrow \nat \arrow \Type$
 with the reduction rule:  
 \[
\begin{array}{ll}
      \natrec~{\ExpressionF_1}~\ExpressionF_2~\zero
      \red
      \ExpressionF_1
      \quad
      \natrec~{\ExpressionF_1}~\ExpressionF_2~(\suc~\Expression)
      \red
      \ExpressionF_2~\Expression~(\natrec~{\ExpressionF_1}~\ExpressionF_2~\Expression)
    \end{array}
\]

We can also introduce a type $\Lista{\Type}$ for the set of finite lists as an inductive
data type with $\nil :\Lista{\Type}$ and $\consl: \Type \arrow \Lista{\Type} \arrow \Lista{\Type}$
and $\listrec: \TypeS\arrow (\Type \arrow \Lista{\Type} \arrow \TypeS \arrow \TypeS) \arrow \Lista{\Type} \arrow \TypeS$
 with the reduction rule:
 \[
\begin{array}{ll}
      \listrec~{\ExpressionF_1}~\ExpressionF_2~\nil
      \red
      \ExpressionF_1
      \quad
      \listrec~{\ExpressionF_1}~\ExpressionF_2~(\consl~\Expression_1~\Expression_2)
      \red
      \ExpressionF_2~\Expression_1~\Expression_2~(\listrec~{\ExpressionF_1}~\ExpressionF_2~\Expression_2)
    \end{array}
\]

\subsection{Disjoint union}

We could add the  disjoint union $\Type_1 + \Type_2$ of two types $\Type_1$ and $\Type_2$ 
and  add constants:
\[
\begin{array}{ll}
\inl: \Type_1 \arrow \Type_1 + \Type_2
\\
\inr: \Type_2 \arrow \Type_1 + \Type_2
\\
\case: (\Type_1 +\Type_2) \arrow (\Type_1 \arrow \TypeS) \arrow (\Type_2 \arrow \TypeS) \arrow
\TypeS 
\end{array}
\]
the reduction rules:
\[
\begin{array}{ll}
      \case~{(\inl~\Expression)}~{\ExpressionF_1}{\ExpressionF_2}
      \red
      \ExpressionF_1~\Expression
      \quad
      \case~{(\inr~\Expression)}~{\ExpressionF_1}{\ExpressionF_2}
      \red
      \ExpressionF_2~\Expression
    \end{array}
\]
and the evaluation contexts are extended with $\Context ::= ... \mid
 \case~{\Context}~{\ExpressionF_1}{\ExpressionF_2}$.

 Using the disjoint union, we can define the type of  finite and infinite lists as:
 \[\coList{\Type} = \Unit + \bullet \Type \times \coList{\Type}\]

 \subsection{Type of conatural numbers}
 \label{section:conaturalnumbers}
 
 We can also define the type  of conatural numbers  as $\coNat = \Unit + \bullet \coNat$.
Then $\cozero = \inl~\unit$ has type $\coNat$ and $\cosucc = \inr $ has type $\bullet \coNat
\arrow \coNat$.
The set of conatural numbers include $\cosucc^\infty = \fix~\cosucc$.
The usual definition of addition given by the equations:
\[
 n+0 = n \quad  n+ (m+1) = (n+m) +1
 \]
 can be written as 
\[
\add~\varX~\varY = \case~\varY~(\Fun{\varY_1}{\varX})~(\Fun{\varY_1}{\cosucc~(\add~\varX~\varY_1)}) 
 \]
 and it can be typed in $\lambdab$.
 However, the usual definition of substraction given by the equations 
 \[
 n-0 = n \quad 0-n = 0 \quad (n+1) - (m+1) = n -m 
 \]
which is expressed as:
\[
\minus~\varX~\varY = \case~\varX~(\Fun{\varX_1}{\cozero})~
                                 (\Fun{\varX_1}
                                 {\case~\varY~ (\Fun{\varY_1}{\cozero})
                                                            (\Fun{\varY_1}{\minus~\varX_1~\varY_1)})}
\]
is not typable because 
$(\minus~\cosucc^{\infty}~\cosucc^{\infty})$ is not normalizing.

\subsection{Examples of Typable Expressions in $\lambdab$}
\label{section:examples}

We consider  the function $\skipeven$ of the introduction 
  that deletes the elements at even positions of a stream. 
  \begin{equation}
  \label{equation:skip}
  \skipeven = 
  \fix~\Fun{ f x } {\Pair{\Fst{x}}{f~\Snd{\Snd{x}} }}
  \end{equation}
  
  In order to assign the
   type $\ListN \arrow \ListNtwo$ to $\skipeven$, 
 the variable  $f$ has to be delayed once and
  the first occurrence of $\snd$
has to be delayed twice.
Note also  that when typing the application $f~\Snd{\Snd{x}}$
the rule 
 $\refrule{\elimarrow}$ is used with $n=2$.  

\vspace{0.4cm}

The following two functions have type 
$(\Type \arrow \TypeS) \arrow \List{\Type} \arrow \List{\TypeS}$ 
in $\lambdab$.

\begin{center}
\begin{tabular}{l}
$\map~f~x = \Pair{f~(\fst~x)}{(\map \ f \ (\snd~x)}$
\\
$\maap~f~x = \Pair{f~(\fst~x)}{\Pair{f~(\fst~(\snd~x))}{(\maap \ f \ (\snd~(\snd~x))}}$
\end{tabular}
\end{center}

\vspace{0.4cm}

The following  function has type
$\List{\nat} \arrow \List{\nat} \arrow \List{\nat}$ in $\lambdab$.
\[
\addlist~x~y = \Pair{(\fst~x)+ (\fst~y)}{ \addlist~(\snd~x)~(\snd~y)}
\]

\vspace{0.4cm}

The following two functions have type
$\List{\Type} \arrow \List{\Type} \arrow \List{\Type}$ in $\lambdab$.  \\

\begin{tabular}{l}
$\interleave~x~y =  \Pair{\fst~x}{(\interleave \ y \ (\snd~x)}$
\\
$\merge~x~y = \ifkw~(\fst~x) \leq (\fst~y)~\thenkw~ 
\Pair{ \fst~x}{ \merge~(\snd~x)~y}~\elsekw~\Pair{\fst~y}{ \merge~x~(\snd~y)}$
\end{tabular}

\vspace{0.4cm}

The following six functions have type $\List{\nat}$ in $\lambdab$.

\begin{center}
\begin{tabular}{l}
$\ones = \Pair{1}{\interleave~\ones~\ones}$ 
\\
$\nats = \Pair{0}{\map~\suces~\nats}$
\\
$\fib = \Pair{0}{\addlist~{\fib}~{\Pair{1}{\fib}}}$
\\
$\fib' = \Pair{0}{\Pair{1}{\addlist~{\fib'}~{(\snd~\fib')}}}$
\\
$\naats = \Pair{0}{\maap~\suces~\nats}$
\\
$\hamming = \Pair{1}{\merge~(\map~\Fun{\varX}{2 * \varX}~\hamming)~
                            (\merge~(\map~\Fun{\varX}{3 * \varX}~\hamming)
                                   (\map~\Fun{\varX}{5 * \varX}~\hamming)))}$
\end{tabular}
\end{center}

\vspace{0.5cm}

The functions $\merge$, $\ones$, $\fib$, $\fib'$,
$\naats$ and $\hamming$ cannot be typed by the proof assistant Coq because they do not satisfy
the syntactic guardness condition (all recursive calls should be guarded by constructors)
\cite{GimenezCasteranTutorialCoq}.
The function $\fib'$ cannot be typed with sized types \cite{DBLP:conf/lics/Sacchini13}.

It is possible to type in $\lambdab$ other programs shown typable in  
other papers on the modal operator
such as 
$\mkconstant{toggle}$, $\mkconstant{paperfolds}$ \cite{DBLP:conf/fossacs/BizjakGCMB16}.

\subsection{Untypable programs in $\lambdab$}
\label{section:untypable}
We define the  function $\get$  as follows:
\begin{equation}
\label{equation:get2}
\begin{array}{l}
\get = \natrec~\fst~\Fun{\varX \varY \varZ}{\varY~(\snd~\varZ)}
\end{array}
\end{equation}
The first argument of  $\natrec$ has type $\List{\Type} \arrow \Type$ while the second argument
has type $\nat \arrow (\List{\Type} \arrow \Type) \arrow \List{\Type} \arrow \bullet \Type$.
It is not difficult to see that  
$\get$ is not typable in $\lambdab$ unless
$\Type = \bullet \Type$. 
Note that if we re-define $\get$ on the set of conatural numbers and 
use $\fix$ instead of $\natrec$ then we obtain a function $\get'$ defined by:
\[
\get' = \fix (\lambda g. \lambda xy. \case~x~(\lambda x_1. \fst~y)~(\lambda x_1. g~x_1~(\snd~y)))
\]
which is  not typable in $\lambdab$ either.

We now consider the function $\take$ that takes the first $n$ elements of a stream
and returns a finite list in  $\Lista{\Type}$ defined by:
\[
\begin{array}{ll}
\take~0~\varX & =  \nil \\
 \take~(\suc~n)~\varX & = \consl~(\fst~\varX)~(\take~n~(\snd~\varX)) 
\end{array}
\]
It is expressed using $\natrec$ as follows:
\begin{equation}
\label{equation:take2}
\begin{array}{l}
\take = \natrec~\Fun{\varX}{\nil}~\Fun{\varX \varY \varZ}{(\consl~(\fst~\varZ)~(\varY~(\snd~\varZ)))}
\end{array}
\end{equation}
The first argument of $\natrec$ has type $\List{\Type} \arrow \Lista{\Type}$
and the second argument 
has type
$\nat \arrow (\List{\Type} \arrow \Lista{\Type}) \arrow \List{\Type} \arrow \bullet \Lista{\Type}$.
It is not difficult to see that $\take$ is not typable in $\lambdab$.

Though the  type systems by Clouston et al \cite{CBGB15}
and the one   by Atkey et al \cite{AM13}
have to introduce explicit constants for the introduction and elimination of $\bullet$,
they can type programs such as $\get$ and $\take$ which $\lambdabb$ cannot do.

\section{Properties of Typable Expressions}
\label{section:properties}

This section proves  the two most relevant
properties of typable expressions, which are subject reduction
(reduction of expressions preserves their types) and normalisation.
As informally motivated in \cref{section:calculus}, the type constructor
$\bullet$ controls recursion and guarantees normalisation of any
expression that has a type different from $\infinite$. 
This section also proves that any typable expression 
has a  L\`evy-Longo and B\"ohm tree without $\bot$ if the type is $\infinite$-free and
tail finite respectively.

\subsection{Inversion and Subject Reduction}

\begin{lemma}[Weakening]
\label{lemma:weakening}
 If
    $\wte{\TypeContext}{\Expression}{\Type}$ and $\TypeContext \subseteq \TypeContext'$
    then
    $\wte{\TypeContext'}{\Expression}{\Type}$.  
\end{lemma}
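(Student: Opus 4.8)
The plan is to prove the statement by induction on the derivation of $\wte{\TypeContext}{\Expression}{\Type}$, performing a case analysis on the last rule applied. Since $\lambdab$ is a Curry-style system in which every expression form corresponds to exactly one typing rule in a syntax-directed manner (except for \refrule\introbullet, which leaves the expression unchanged), each case reduces to a mechanical application of the induction hypothesis followed by the same rule in the larger context.

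First I would dispatch the base cases. For \refrule\axiom, the premise is $\varX : \Type \in \TypeContext$; since $\TypeContext \subseteq \TypeContext'$ gives $\varX : \Type \in \TypeContext'$, the rule \refrule\axiom applies directly to yield $\wte{\TypeContext'}{\varX}{\Type}$. For \refrule\const, the typing $\wte\TypeContext\Constant\Type$ depends only on the ambient signature $\ConsCtx$ and not on $\TypeContext$, so $\wte{\TypeContext'}{\Constant}{\Type}$ holds immediately. The cases \refrule\introbullet and \refrule{\elimarrow} are equally routine: their premises are typing judgements under the \emph{same} context $\TypeContext$, so I apply the induction hypothesis to each premise (using $\TypeContext \subseteq \TypeContext'$) and reapply the corresponding rule, which is insensitive to the surrounding context.

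The only case requiring care is \refrule\introarrow, which is the sole binding construct. Here the premise is $\wte{\TypeContext,\varX: \tbullet[n]\Type }{\Expression}{\tbullet[n]\TypeS}$, so the extended context $\TypeContext, \varX : \tbullet[n]\Type$ already mentions $\varX$. To apply the induction hypothesis I need $\TypeContext, \varX : \tbullet[n]\Type \subseteq \TypeContext', \varX : \tbullet[n]\Type$, which follows from $\TypeContext \subseteq \TypeContext'$ provided that $\varX$ does not already occur with a different type in $\TypeContext'$. I would therefore invoke the usual variable convention, choosing the bound variable $\varX$ fresh with respect to $\TypeContext'$ (renaming by $\alphacnv$ if necessary, which is harmless since the two judgements type the same expression up to $\alpha$-equivalence). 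The induction hypothesis then yields $\wte{\TypeContext', \varX : \tbullet[n]\Type}{\Expression}{\tbullet[n]\TypeS}$, and a final application of \refrule\introarrow gives $\wte{\TypeContext'}{\Fun\varX\Expression}{\tbullet[n](\Type \arrow \TypeS)}$, as required.

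I expect the \refrule\introarrow case to be the only genuine obstacle, and even there the difficulty is purely bookkeeping: handling the freshness of the bound variable correctly so that the context inclusion is preserved when passing under the binder. All other cases are immediate consequences of the induction hypothesis. This matches the standard pattern for weakening lemmas in type assignment systems, and no special feature of the delay modality $\tbullet$ interferes, since the modal superscripts $n$ are simply carried through unchanged.
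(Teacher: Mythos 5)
Your proposal is correct and takes exactly the route the paper does: the paper's entire proof is ``By an easy induction on the derivation,'' and your case analysis (with the standard $\alpha$-renaming care for the bound variable in the \refrule\introarrow{} case) is precisely the elaboration of that sketch.
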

\begin{proof} By an easy induction on the derivation.
  \end{proof}
  
\begin{lemma}[Delay]
\label{lemma:delay}
 If
    $\wte{\TypeContext_1, \TypeContext_2}{\Expression}{\Type}$ 
    then
    $\wte{\TypeContext_1, \bullet \TypeContext_2}{\Expression}{\bullet \Type}$.  
\end{lemma}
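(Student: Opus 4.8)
The plan is to proceed by induction on the derivation of $\wte{\TypeContext_1, \TypeContext_2}{\Expression}{\Type}$, reading the statement as universally quantified over the way the context is split into $\TypeContext_1$ and $\TypeContext_2$. This freedom to choose the split is what makes the induction go through in the abstraction case. Here $\bullet\TypeContext_2$ denotes the context obtained from $\TypeContext_2$ by prefixing a $\bullet$ to the type of each declared variable.

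The base cases are immediate. For \refrule{\axiom} with $\Expression = \varX$ and $\varX : \Type$ in the context, I would distinguish whether the declaration comes from $\TypeContext_1$ or from $\TypeContext_2$. If $\varX : \Type \in \TypeContext_1$, then $\varX : \Type$ still occurs in $\TypeContext_1, \bullet\TypeContext_2$, so \refrule{\axiom} gives $\wte{\TypeContext_1, \bullet\TypeContext_2}{\varX}{\Type}$ and one application of \refrule{\introbullet} yields $\wte{\TypeContext_1, \bullet\TypeContext_2}{\varX}{\bullet\Type}$. If instead $\varX : \Type \in \TypeContext_2$, then $\varX : \bullet\Type \in \bullet\TypeContext_2$, and \refrule{\axiom} produces $\wte{\TypeContext_1, \bullet\TypeContext_2}{\varX}{\bullet\Type}$ directly. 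The case \refrule{\const} is handled by applying \refrule{\const} in the new context and then \refrule{\introbullet}.

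The inductive cases for the modal and application rules need only a uniform increment of the delay count. For \refrule{\introbullet} the premise gives $\wte{\TypeContext_1,\TypeContext_2}{\Expression}{\Type'}$ with $\Type = \bullet\Type'$; the induction hypothesis gives $\wte{\TypeContext_1,\bullet\TypeContext_2}{\Expression}{\bullet\Type'}$, and a further \refrule{\introbullet} produces the required type $\bullet\bullet\Type'$. For \refrule{\elimarrow}, where $\Type = \tbullet[n]\TypeS$, I would apply the induction hypothesis to both premises, turning their delays $\tbullet[n](\cdots)$ into $\tbullet[n+1](\cdots)$, and then re-apply \refrule{\elimarrow} at level $n+1$; the resulting type $\tbullet[n+1]\TypeS$ is exactly $\bullet\Type$.

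The one case that requires care, and the main obstacle, is \refrule{\introarrow}. Here $\Expression = \Fun{\varX}{\Expression'}$ and $\Type = \tbullet[n](\TypeS_1 \arrow \TypeS_2)$, with premise $\wte{\TypeContext_1,\TypeContext_2, \varX : \tbullet[n]\TypeS_1}{\Expression'}{\tbullet[n]\TypeS_2}$. The trick is to invoke the induction hypothesis with the bound variable folded into the delayed part, that is, on the split $\TypeContext_1$ against $\TypeContext_2, \varX : \tbullet[n]\TypeS_1$. Since prefixing a $\bullet$ to $\tbullet[n]\TypeS_1$ yields $\tbullet[n+1]\TypeS_1$, this produces $\wte{\TypeContext_1, \bullet\TypeContext_2, \varX : \tbullet[n+1]\TypeS_1}{\Expression'}{\tbullet[n+1]\TypeS_2}$, and a single application of \refrule{\introarrow} at level $n+1$ gives $\wte{\TypeContext_1,\bullet\TypeContext_2}{\Fun{\varX}{\Expression'}}{\tbullet[n+1](\TypeS_1 \arrow \TypeS_2)}$, which is the desired $\wte{\TypeContext_1,\bullet\TypeContext_2}{\Expression}{\bullet\Type}$. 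The key observation making the whole argument sound is that the extra delay demanded by the lemma matches precisely the increment $n \mapsto n+1$ already permitted by the bullet-polymorphism of \refrule{\introarrow} and \refrule{\elimarrow}, so no typing power beyond these rules is needed.
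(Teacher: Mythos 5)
Your proposal is correct and follows essentially the same route as the paper: induction on the typing derivation, reading the lemma as quantified over all splits of the context, with the crucial \refrule{\introarrow} case handled by applying the induction hypothesis to the split that folds the bound variable $\varX : \tbullet[n]\Type_1$ into the delayed portion, so that \refrule{\introarrow} can be re-applied at level $n+1$. The paper's proof displays only this abstraction case and leaves the others implicit, whereas you spell out the base cases and the \refrule{\introbullet} and \refrule{\elimarrow} cases as well, but the underlying argument is identical.
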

\begin{proof} By induction on the derivation. We only show the 
case for the rule   \refrule{\introarrow}.
\[  
   \inferrule[\defrule\introarrow]{
        \wte{\TypeContext_1, \TypeContext_2,\varX: \tbullet[n]\Type_1 }{\Expression}{\tbullet[n]\Type_2}
        }{
        \wte{\TypeContext_1, \TypeContext_2}{\Fun \varX \Expression}{\tbullet[n](\Type_1 \arrow \Type_2)}
        }
  \]
  By induction hypothesis,
  $\wte{\TypeContext_1, \bullet \TypeContext_2,\varX: \tbullet[n+1]\Type_1 }{\Expression}{\tbullet[n+1]\Type_2}$.
  By applying the rule \refrule{\introarrow}, 
  we conclude that 
  $ \wte{\TypeContext_1, \bullet \TypeContext_2}{\Fun \varX \Expression}{\tbullet[n+1](\Type_1 \arrow \Type_2)}
  $.
    \end{proof}

\begin{lemma}[Inversion]\label{lem:inv}\mbox{}
  \begin{enumerate}

  \item\label{lem:inv1} If $\wte{\TypeContext}{\Constant}{\Type}$ then
   $\Type \teq  \tbullet^n \Type'$ and $\Constant: \Type' \in \ConsCtx$.

  \item\label{lem:inv2} If $\wte{\TypeContext}{\var}{\Type}$ then
    $\Type \teq  \tbullet^n \Type'$ and
    $\var: \Type' \in \TypeContext$.

  \item\label{lem:inv3} If
    $\wte{\TypeContext}{\Fun \varX \Expression}{\Type}$ then 
    $\Type  \teq \tbullet^n (\Type_1 \arrow \Type_2)$ and
    $\wte{\TypeContext, \varX:
      \tbullet^n\Type_1}{\Expression}{\tbullet^n\Type_2}$

  \item\label{lem:inv5} If
    $\wte{\TypeContext}{\Expression_1 \Expression_2}{\Type}$ then
    $\Type  \teq \tbullet^n\Type_2$ and
    $\wte{\TypeContext}{\Expression_2}{\tbullet^n\Type_1}$ and 
    $\wte{\TypeContext}{\Expression_1}{\tbullet^n(\Type_1 \arrow
      \Type_2)}$
    
  \end{enumerate}
\end{lemma}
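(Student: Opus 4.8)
The plan is to prove each of the four statements by induction on the structure of the typing derivation, analysing which rule is applied at the root. The key structural observation is that three of the five typing rules—\refrule{\axiom}, \refrule{\const}, \refrule{\introarrow}, \refrule{\elimarrow}—are \emph{syntax-directed} (each matches a unique expression form), while the rule \refrule{\introbullet} is the sole non-syntax-directed rule: it leaves the expression unchanged and merely prepends a $\bullet$ to the type. This immediately suggests the right proof shape. For a fixed expression form (say a variable $\var$), any derivation must consist of a chain of \refrule{\introbullet} applications sitting on top of a single application of the syntax-directed rule appropriate to that form. This is exactly why every clause asserts the type has the shape $\tbullet^n \Type'$ for some $n \geq 0$: the number $n$ counts precisely the trailing \refrule{\introbullet} steps.

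First I would treat the two base cases, \ref{lem:inv1} and \ref{lem:inv2}, together. For $\wte{\TypeContext}{\var}{\Type}$, I induct on the derivation. If the last rule is \refrule{\axiom}, then $n = 0$ and $\Type = \Type'$ with $\var : \Type' \in \TypeContext$, giving the conclusion directly. If the last rule is \refrule{\introbullet}, then $\Type = \tbullet \Type''$ where $\wte{\TypeContext}{\var}{\Type''}$ has a shorter derivation; the induction hypothesis yields $\Type'' \teq \tbullet^m \Type'$ with $\var : \Type' \in \TypeContext$, and then $\Type \teq \tbullet^{m+1}\Type'$ closes the case with $n = m+1$. No other rule can apply to a variable. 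The constant case \ref{lem:inv1} is identical, reading $\ConsCtx$ for $\TypeContext$ and \refrule{\const} for \refrule{\axiom}.

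For the abstraction case \ref{lem:inv3} and application case \ref{lem:inv5}, the same induction applies, but the \refrule{\introbullet} step needs slightly more care. Consider \ref{lem:inv3}: if the last rule is \refrule{\introarrow}, we read off $\Type = \tbullet^n(\Type_1 \arrow \Type_2)$ together with the premise $\wte{\TypeContext, \varX : \tbullet^n \Type_1}{\Expression}{\tbullet^n \Type_2}$, which is exactly the conclusion. If the last rule is \refrule{\introbullet}, then $\Type = \tbullet \Type''$ and the induction hypothesis gives $\Type'' \teq \tbullet^m(\Type_1 \arrow \Type_2)$ with $\wte{\TypeContext, \varX : \tbullet^m \Type_1}{\Expression}{\tbullet^m \Type_2}$; setting $n = m+1$, I still need $\wte{\TypeContext, \varX : \tbullet^{m+1}\Type_1}{\Expression}{\tbullet^{m+1}\Type_2}$, and this is supplied precisely by the Delay Lemma (\cref{lemma:delay}) applied with an empty $\TypeContext_1$ to the variable binding, or by directly invoking \refrule{\introbullet} on the result and reabsorbing. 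The application case \ref{lem:inv5} is wholly analogous, with the two premises $\wte{\TypeContext}{\Expression_1}{\tbullet^n(\Type_1 \arrow \Type_2)}$ and $\wte{\TypeContext}{\Expression_2}{\tbullet^n \Type_1}$ being propagated through the \refrule{\introbullet} step by prepending one more $\bullet$ to each.

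The one genuine subtlety—and the step I would flag as the main obstacle—is bookkeeping the interaction of \refrule{\introbullet} with the premises of \refrule{\introarrow} in clause \ref{lem:inv3}. Unlike the base cases where the premise involves the \emph{same} expression, here the induction hypothesis delivers a judgement about the \emph{body} $\Expression$ under a context whose hypothesis for $\varX$ is delayed by $m$, and I must promote both this hypothesis and the conclusion type by one further $\bullet$ while keeping the rest of $\TypeContext$ fixed. This is exactly the content of the Delay Lemma, so the proof goes through cleanly provided that lemma is invoked in the correct direction; attempting to push the $\bullet$ through without it would fail, since naively applying \refrule{\introbullet} to the premise would delay the \emph{entire} context including $\varX$'s binding in a way that does not match. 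All remaining verifications are routine, so I would state them briefly and defer to the induction hypothesis.
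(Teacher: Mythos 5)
Your proposal follows essentially the same route as the paper's proof: induction on the derivation, with the syntax-directed rules immediate and the \refrule{\introbullet} case closed by the induction hypothesis together with the Delay Lemma (\cref{lemma:delay}) in the abstraction clause.

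Two local corrections, however. First, your instantiation of the Delay Lemma is misstated: to pass from $\wte{\TypeContext, \varX:\tbullet^m\Type_1}{\Expression}{\tbullet^m\Type_2}$ to $\wte{\TypeContext, \varX:\tbullet^{m+1}\Type_1}{\Expression}{\tbullet^{m+1}\Type_2}$ you must take $\TypeContext_1 = \TypeContext$ and $\TypeContext_2 = \{\varX:\tbullet^m\Type_1\}$; with $\TypeContext_1$ empty the lemma delays the \emph{whole} context and yields $\wte{\bullet\TypeContext, \varX:\tbullet^{m+1}\Type_1}{\Expression}{\tbullet^{m+1}\Type_2}$, which is not the judgement you need and cannot be repaired by weakening. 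Your final paragraph describes the correct content (``keeping the rest of $\TypeContext$ fixed''), so this is a slip rather than a conceptual gap, but as written the step fails. Second, the alternative you offer --- ``directly invoking \refrule{\introbullet} on the result and reabsorbing'' --- does not work: \refrule{\introbullet} delays only the type in the conclusion and never touches the context, so it can produce $\tbullet^{m+1}\Type_2$ on the right but leaves $\varX$'s binding at $\tbullet^m\Type_1$. For the same reason your diagnosis of the naive approach is backwards: its defect is not that it ``delays the entire context'' but that it delays nothing in the context at all. The application clause \ref{lem:inv5}, by contrast, genuinely needs only \refrule{\introbullet} applied to each premise, since there the context is unchanged --- as you correctly observe.
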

\begin{proof} By case analysis and induction on the derivation.
We only show  \cref{lem:inv3}.
 A  
 derivation of
    $\wte{\TypeContext}{\Fun \varX \Expression}{\Type}$
     ends with an
    application of either \refrule{\introarrow}   or 
    \refrule{\introbullet}.
    The proofs for \refrule{\introarrow} is  immediate. 
    If the last applied rule is \refrule{\introbullet} then
    $\Type =  \tbullet\Type'$ and 
    $\wte{\TypeContext}{ \Fun \varX \Expression}\Type'$.
    By induction,
    $\Type' \teq  \tbullet^n (\Type_1 \arrow \Type_2)$ and
    $\wte{\TypeContext,  \varX:
      \tbullet^n\Type_1}{\Expression}{\tbullet^n\Type_2}$.
    Hence, 
    $\Type = \tbullet \Type'  \teq \tbullet^{n+1} (\Type_1 \arrow \Type_2)$ and
    $\wte{\TypeContext, \varX:
      \tbullet^{n+1}\Type_1}{\Expression}{\tbullet^{n+1}\Type_2}$  
      by \cref{lemma:delay}.
\end{proof}

\begin{lemma}[Substitution]\mbox{}
  \label{lemma:substitution}
  If
    $\wte{\TypeContext, \varX: \TypeS}{\Expression}{\Type}$ and
    $\wte{\TypeContext}{\ExpressionF}{\TypeS}$ then
    $\wte{\TypeContext}{\Expression\subst\ExpressionF\varX}{\Type}$.
\end{lemma}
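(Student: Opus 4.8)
The plan is to prove the Substitution Lemma by induction on the derivation of $\wte{\TypeContext, \varX: \TypeS}{\Expression}{\Type}$, doing a case analysis on the last rule applied. The substitution $\Expression\subst\ExpressionF\varX$ replaces all free occurrences of $\varX$ in $\Expression$ by $\ExpressionF$, and the goal in each case is to build a derivation of $\wte{\TypeContext}{\Expression\subst\ExpressionF\varX}{\Type}$ from the inductive hypotheses, reusing the same typing rule whenever the term former survives substitution.

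First I would handle the base cases. If the last rule is \refrule{\const}, then $\Expression = \Constant$, substitution leaves it unchanged, and the same \refrule{\const} instance (which does not mention $\varX$) still applies over $\TypeContext$. If the last rule is \refrule{\axiom}, there are two subcases for the variable $\varZ$ being typed. When $\varZ \neq \varX$, the substitution is the identity on $\varZ$ and $\varZ:\Type$ is already in $\TypeContext$, so \refrule{\axiom} gives the result directly. When $\varZ = \varX$, we have $\Type = \TypeS$ and $\Expression\subst\ExpressionF\varX = \ExpressionF$; the conclusion $\wte{\TypeContext}{\ExpressionF}{\TypeS}$ is then exactly the second hypothesis of the lemma. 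The inductive cases for \refrule{\elimarrow} and \refrule{\introbullet} are routine: the contexts on the premises are the same $\TypeContext,\varX:\TypeS$, so the inductive hypothesis applies to each premise, and reapplying the same rule to the substituted subterms yields the conclusion (substitution commutes with application, so $(\Expression_1\Expression_2)\subst\ExpressionF\varX = (\Expression_1\subst\ExpressionF\varX)(\Expression_2\subst\ExpressionF\varX)$).

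The main obstacle will be the \refrule{\introarrow} case, for the usual reason that the premise extends the context with the bound variable. Here $\Expression = \Fun{\varY}{\Expression'}$ and the premise is $\wte{\TypeContext,\varX:\TypeS,\varY:\tbullet[n]\Type_1}{\Expression'}{\tbullet[n]\Type_2}$. Relying on the $\alpha$-conversion convention for bound variables, I would assume $\varY \neq \varX$ and that $\varY$ does not occur free in $\ExpressionF$, so that $(\Fun{\varY}{\Expression'})\subst\ExpressionF\varX = \Fun{\varY}{(\Expression'\subst\ExpressionF\varX)}$ and no variable capture occurs. To apply the inductive hypothesis I need the side hypothesis to hold in the extended context, namely $\wte{\TypeContext,\varY:\tbullet[n]\Type_1}{\ExpressionF}{\TypeS}$; this follows from $\wte{\TypeContext}{\ExpressionF}{\TypeS}$ by \cref{lemma:weakening}, since $\TypeContext \subseteq \TypeContext,\varY:\tbullet[n]\Type_1$. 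After reordering the context (permuting $\varX$ and $\varY$, which is harmless as the rules are insensitive to the order of assumptions), the inductive hypothesis gives $\wte{\TypeContext,\varY:\tbullet[n]\Type_1}{\Expression'\subst\ExpressionF\varX}{\tbullet[n]\Type_2}$, and a final application of \refrule{\introarrow} produces $\wte{\TypeContext}{\Fun{\varY}{(\Expression'\subst\ExpressionF\varX)}}{\tbullet[n](\Type_1\arrow\Type_2)}$, as required.
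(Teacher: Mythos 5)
Your proof is correct, but it takes a different route from the paper. You do induction on the typing derivation with a case analysis on the last rule applied, which forces you to treat \refrule{\introbullet} as a separate (trivial) case; the paper instead does induction on the \emph{structure of the expression} and invokes the already-proved Inversion Lemma (\cref{lem:inv}) to decompose the hypothesis. In the abstraction case the paper gets, directly from inversion, that $\Type \teq \tbullet^n(\Type_1 \arrow \Type_2)$ together with the premise judgement for the body, so the non-syntax-directed rule \refrule{\introbullet} never surfaces as a case: all its applications have been absorbed into the $\tbullet^n$ prefix that inversion produces. Your approach buys self-containedness — it needs no inversion machinery, and handles the non-syntax-directedness locally, since in the \refrule{\introbullet} case neither the expression nor the context changes — at the cost of one extra case. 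The paper's approach buys brevity by reusing \cref{lem:inv}, which it needs anyway for subject reduction. A point in your favour: you make explicit two steps the paper silently elides, namely the appeal to \cref{lemma:weakening} to lift $\wte{\TypeContext}{\ExpressionF}{\TypeS}$ into the context extended with $\varY:\tbullet[n]\Type_1$, and the exchange of $\varX$ and $\varY$ in the context (harmless, since the rules only test membership); both are equally required, and equally unstated, in the paper's own abstraction case.
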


 \begin{proof} By  induction on the structure of expressions.
We only show the case when $\Expression = \lambda \varY. \Expression'$.
 It follows from \ri{\cref{lem:inv}}{\cref{lem:inv3}} that 
 $\Type  \teq \tbullet^n (\Type_1 \arrow \Type_2)$ and
    $\wte{\TypeContext, \varX: \TypeS, \varY: 
      \tbullet^n\Type_1}{\Expression'}{\tbullet^n\Type_2}$.
      By induction hypothesis,
      $\wte{\TypeContext,  \varY: 
      \tbullet^n\Type_1}{\Expression'\subst\ExpressionF\varX}{\tbullet^n\Type_2}$.
   By applying the rule \refrule{\introarrow},
       $\wte{\TypeContext}{\lambda \varY. \Expression'\subst\ExpressionF\varX}{\tbullet^n(\Type_1 \arrow \Type_2)}$.
  \end{proof}

\begin{lemma}[Subject Reduction]
\label{theorem:subjectreduction}
 If
    $\wte{\TypeContext}{\Expression}{\Type}$ 
    and $\Expression \red \Expression'$ then 
     $\wte{\TypeContext}{\Expression'}{\Type}$.
\end{lemma}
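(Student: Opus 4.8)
The plan is to argue by induction on the derivation of $\Expression \red \Expression'$, splitting on the last reduction rule used. The engine of every case is the Inversion lemma (\cref{lem:inv}), which strips the leading $\tbullet$'s off a typing judgement and recovers, up to the equirecursive equality $\teq$, the premises that must have produced it; the Delay and Substitution lemmas (\cref{lemma:delay} and \cref{lemma:substitution}) then supply the closure properties needed to reassemble a derivation for $\Expression'$.

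For \refrule{r-beta}, suppose $\wte{\TypeContext}{(\Fun \varX \Expression_1)\Expression_2}{\Type}$. Applying \cref{lem:inv5} to the application yields $\Type \teq \tbullet^n\Type_2$ together with $\wte{\TypeContext}{\Fun \varX \Expression_1}{\tbullet^n(\Type_1 \arrow \Type_2)}$ and $\wte{\TypeContext}{\Expression_2}{\tbullet^n\Type_1}$. Since an arrow type fixes its prefix of $\tbullet$'s uniquely, \cref{lem:inv3} on the abstraction gives exactly $\wte{\TypeContext, \varX : \tbullet^n\Type_1}{\Expression_1}{\tbullet^n\Type_2}$, and \cref{lemma:substitution} with $\TypeS = \tbullet^n\Type_1$ produces $\wte{\TypeContext}{\Expression_1\subst{\Expression_2}{\varX}}{\tbullet^n\Type_2}$, i.e.\ the type $\Type$. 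For \refrule{r-first} and \refrule{r-second} the redex is a projection applied to $\pair~\Expression_1~\Expression_2$, so I would invert the nested applications one layer at a time---first the projection against the pair, then $\pair~\Expression_1$, then $\pair$---reading the declared type of each constant from $\ConsCtx$ via \cref{lem:inv1} and matching it against the inverted arrow type. This forces the pair to carry a type $\tbullet^n(\Type_1 \times \Type_2)$ with $\wte{\TypeContext}{\Expression_1}{\tbullet^n\Type_1}$ and $\wte{\TypeContext}{\Expression_2}{\tbullet^n\Type_2}$, whence the projected component already has the required type $\Type \teq \tbullet^n\Type_1$ (resp.\ $\tbullet^n\Type_2$).

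The congruence rule \refrule{r-ctxt} reduces $\Context[\ExpressionE]$ to $\Context[\ExpressionF]$ from a subderivation $\ExpressionE \red \ExpressionF$, for which the induction hypothesis already guarantees type preservation in every context and at every type. I would then run an inner induction on the shape of the evaluation context $\Context$: the hole case is the induction hypothesis itself; the case $\Context = (\Context'~\Expression_3)$ is handled by inverting the application, re-typing $\Context'[\ExpressionF]$ at the same arrow type by the inner hypothesis, and rebuilding with \refrule{\elimarrow}; and the formers $\fst\,\Context'$, $\snd\,\Context'$ and $\succesor\,\Context'$ are the same step with the relevant constant in function position.

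The delicate point is not the overall shape, which is the standard induction-plus-inversion pattern, but the bookkeeping forced by the delay modality together with $\teq$. Each inversion returns its conclusion only up to the equirecursive equality and with an a priori unknown number of leading $\tbullet$'s, so throughout I must repeatedly argue that these counts are pinned down---using that arrow and product types determine their own $\tbullet$-prefix---and that the $\tbullet$'s distribute over the components precisely as \cref{lemma:delay} permits. Keeping this matching coherent in the projection cases, where several constant types from $\ConsCtx$ are composed, is the step I expect to demand the most care.
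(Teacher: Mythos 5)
Your proposal is correct and takes essentially the same route as the paper's proof: induction on the derivation of the reduction, with the \refrule{r-beta} case dispatched exactly as the paper does it---\ri{\cref{lem:inv}}{\cref{lem:inv5}} on the application, then \ri{\cref{lem:inv}}{\cref{lem:inv3}} on the abstraction, then \cref{lemma:substitution}. The paper writes out only the beta case, and your completion of the projection and \refrule{r-ctxt} cases (inverting the nested applications against the constant types in $\ConsCtx$, and an inner induction on the evaluation context) is the correct way to fill in what it omits, including the observation that arrow and product nodes pin down the $\tbullet$-prefix uniquely.
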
 

\begin{proof}
By induction on the definition of $\red$. We only do the case 
$(\Fun\varX\ExpressionE_1)~\Expression_2
      \red    
      \ExpressionE_1 \subst{\Expression_2}\varX$.
    Suppose 
    $
    \wte{\TypeContext}{(\Fun\varX\ExpressionE_1)~\Expression_2}{\Type}
    $.
    By \ri{\cref{lem:inv}}{\cref{lem:inv5}} we have that:
   
\[\Type \teq \tbullet^n\Type_2 \quad \quad
      \wte{\TypeContext}{\Expression_2}{\tbullet^n\Type_1}
      \quad \quad
      \wte{\TypeContext}{(\Fun\varX\Expression_1)}{\tbullet^n(\Type_1 \arrow \Type_2)}
   \]
   It follows from \ri{\cref{lem:inv}}{\cref{lem:inv3}}  that
 $   \wte{\TypeContext, \varX: \tbullet^n\Type_1}{\Expression_1}{\tbullet^n\Type_2}$.
    By applying  Substitution Lemma, 
    we deduce that 
    $\wte{\TypeContext}{\Expression_1\subst{\Expression_2}\varX}{\tbullet^n\Type_2}$.   
\end{proof}

Neither Nakano's  type system nor ours is closed under $\eta$-reduction.
For example,
$\varY: \bullet (\Type \arrow \TypeS) \vdash
\Fun\varX {\app{\varY}{\varX}} : (\Type \arrow \bullet \TypeS)$
 but $\varY: \bullet (\Type \arrow \TypeS) \not \vdash
\varY : (\Type \arrow \bullet \TypeS)$.
The lack of subject reduction for $\eta$-reduction  is natural  
  in the context of lazy evaluation where programs are closed terms and only  
  weak head normalised.

\subsection{Normalization}
\label{section:normalization}

We prove that any   expression 
which has a type $\Type$ such that  $\Type \not = \infinite$ 
reduces to a normal form  (Theorem \ref{theorem:weakheadnormalization}).
For this, we define  a type interpretation indexed on the
set of natural numbers for dealing with the temporal operator
$\bullet$. 
The time is discrete and   represented using the set
of natural numbers.
The semantics reflects the fact that one 
$\bullet$ corresponds to one unit of time
 by shifting
the interpretation from $i$ to $i+1$.

Before introducing 
the type interpretation, we give a few definitions.  
Let  $\EE$ be the set of expressions.
We define the following subsets of $\EE$:
\[
\begin{array}{r@{~}l}
 \setWN & =
\{ \Expression \mid \Expression \red^{*} \ExpressionF\ \&  \ 
\ExpressionF \mbox{ is a normal form} \}\\
\setNVAR & = 
\{ \Expression \mid \Expression \red^{*} \Context[\varX] \ \&  \ \varX \mbox{ is a variable} \}
\end{array}
\]

We will do induction on the rank of types. 
 For $\nat$ and type variables,  the rank is always 0.
For the other types, the rank measures the depth  of all what we can observe
at time $0$. We could also compute it by taking 
the maximal $0$-length of all the paths in the
tree representation of the type, where the 
$0$-length of a path is   the number
of type constructors different from $\bullet$
from the root to a leaf or $\bullet$.

\begin{definition}[Rank of a Type]
\label{definition:typerank}
The rank  
of a type $\Type$ (notation $\typerank (\Type)$) is defined as
follows.
\[
\begin{array}{r@{~}ll}
\typerank (\nat) & = 
\typerank(\TypeVar) =  
\typerank (\bullet \Type)  = 0 \\
\typerank (\Type \times \TypeS) & = max(\typerank ( \Type), \typerank(\TypeS)) +1 \\
\typerank (\Type \rightarrow \TypeS) & = max(\typerank ( \Type), \typerank(\TypeS)) +1 
\end{array}
\]
\end{definition}

The rank is well defined (and finite) because 
 the tree representation of
a type cannot have an infinite branch with 
no $\bullet$'s at all
(Condition~\ref{def:types3}
 in Definition \ref{def:types})
 and $\typerank (\bullet \Type)$ is set to $ 0$.

We now define the type interpretation
$\ti \Type  \in \natset \rightarrow \mathcal{P} (\EE)$, which is
an indexed set, where $\natset$ is the set of natural numbers and $ \mathcal{P}$ is the powerset constructor.

\begin{definition}[Type Interpretation]
\label{definition:firsttypeinterpretation}
We define 
$ \indti \Type \ind \subseteq \EE$ by induction on
 $(\ind, \typerank(\Type))$.
\[
\begin{array}{rcl}
\indti\nat \ind & = & \WNVAR \cup  
\{ \Expression \mid \Expression \red^* n \} \\[3pt]
\indti\TypeVar \ind & = & \WNVAR  \\[3pt]
 \indti{ \Type \times \TypeS} \ind & = &  
 \WNVAR \cup    \set{\Expression \mid 
       \Expression \red^* \Pair{\Expression_1}{\Expression_2}   \mbox{ and }
      \Expression_1\in\indti\Type \ind \mbox{ and } \ ~\Expression_2\in\indti\TypeS \ind}\\[3pt]
\indti{\Type \arrow \TypeS}{\ind} & =  &   \WNVAR \cup 
 \set{\Expression \mid 
 \Expression \red^{*} \lambda x. 
 \ExpressionF \mbox{ and }
 \Expression\Expression' \in \indti{\TypeS}{\indj}  \ \ 
 \forall \Expression' \in \indti{\Type}{\indj}, \indj \leq \ind}\\[3pt]
 &  & {} \cup \set{\Expression \mid 
 \Expression \red^{*} \Context[\Constant] \mbox{ and }
 \Expression\Expression' \in \indti{\TypeS}{\indj}  \ \ 
 \forall \Expression' \in \indti{\Type}{\indj}, \indj \leq \ind}
  \\[3pt]
     \indti{\tbullet\Type}{0} & = &     \EE  \\[3pt]
\indti{\tbullet\Type}{\ind+1}  & = & 
           \indti{\Type}{\ind}   
\end{array}
\]
\end{definition} 
\noindent
Note that $\indti{\infinite}{\ind}  =\EE $ for all $\ind \in \natset$.
In the interpretation of the arrow type,  
 the requirement  ``for all  $j \leq i$'' (and not just 
 ``for all $i$'') is crucial
for dealing with the contra-variance of the arrow type in the proof of \ri{\cref{lem:A}}{\cref{lemma:monotonicityofinterpretation}}.

 
\begin{lemma} \mbox{} 
\label{lemma:interpretationofmanybullets}
\begin{enumerate}
\item\label{lemma:interpretationofmanybullets1}  $\indti{\tbullet[n] \Type}{\ind} = \EE$ if $\ind < n$.
\item\label{lemma:interpretationofmanybullets2}  $\indti{\tbullet[n] \Type}{\ind}= \indti{\Type}{\ind -n}$ if $\ind \geq  n$.
\end{enumerate}
\end{lemma}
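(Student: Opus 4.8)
The plan is to proceed by induction on $n$, unfolding the two defining clauses $\indti{\tbullet\Type}{0} = \EE$ and $\indti{\tbullet\Type}{\ind+1} = \indti{\Type}{\ind}$ one $\bullet$ at a time. I would prove both items simultaneously within the same induction, since the inductive step for item~(1) at index $\ind+1$ appeals to item~(1) at the smaller index $\ind$, and similarly for item~(2); keeping them together avoids having to re-open the induction twice.

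For the base case $n = 0$ we have $\tbullet[0]\Type = \Type$: item~(1) holds vacuously because no $\ind$ satisfies $\ind < 0$, and item~(2) is just the identity $\indti{\Type}{\ind} = \indti{\Type}{\ind - 0}$. For the inductive step I would write $\tbullet[n+1]\Type = \tbullet(\tbullet[n]\Type)$ and split on whether $\ind = 0$ or $\ind = \indj + 1$. For item~(1), assuming $\ind < n+1$: if $\ind = 0$ then $\indti{\tbullet(\tbullet[n]\Type)}{0} = \EE$ directly from the first clause; if $\ind = \indj+1$ then $\indti{\tbullet(\tbullet[n]\Type)}{\indj+1} = \indti{\tbullet[n]\Type}{\indj}$ by the second clause, and since $\indj < n$ the induction hypothesis for item~(1) yields $\EE$. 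For item~(2), assuming $\ind \geq n+1$, necessarily $\ind = \indj+1$ with $\indj \geq n$, so $\indti{\tbullet(\tbullet[n]\Type)}{\indj+1} = \indti{\tbullet[n]\Type}{\indj}$, and the induction hypothesis for item~(2) rewrites this to $\indti{\Type}{\indj - n} = \indti{\Type}{\ind - (n+1)}$, as required.

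There is no serious obstacle here; the only point requiring care is the index bookkeeping across the two branches of the case split — namely, that $\ind < n+1$ together with $\ind = \indj+1$ forces $\indj < n$ (so that the hypothesis of item~(1) applies), while $\ind \geq n+1$ forces $\indj \geq n$ (so that the hypothesis of item~(2) applies). I would also note that the interpretation of $\tbullet[n]\Type$ never inspects the rank of $\Type$: the clauses for $\bullet$ merely peel off a modality and decrement the index without recursing on rank. Consequently the induction on $n$ is self-contained and does not interfere with the outer well-founded induction on $(\ind, \typerank(\Type))$ that defines the interpretation, so the two clauses may be applied as plain equalities throughout.
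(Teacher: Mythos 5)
Your proof is correct and follows essentially the same route as the paper, which proves both items by induction on $n$ (the paper leaves the unfolding of the two defining clauses and the index bookkeeping implicit, exactly the details you spell out). The simultaneous treatment of the two items is harmless but not essential, since each item's inductive step only invokes its own induction hypothesis.
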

\begin{proof}
Both items are proved by induction on $n$.
\end{proof}

\begin{lemma}\label{lem:A}
 For all types $\Type$ and $\ind \in \natset$, 
\begin{enumerate}
\item\label{lemma:wnvar}
$\WNVAR \subseteq \indti{\Type}{\ind}$.

\item 
\label{lemma:monotonicityofinterpretation}
$\indti{\Type}{\ind+1} \subseteq \indti{\Type}{\ind}$.

\end{enumerate}

\end{lemma}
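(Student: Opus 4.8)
The plan is to prove both items simultaneously by induction on the pair $(\ind, \Trank{\Type})$, ordered lexicographically (first on $\ind$, then on $\Trank{\Type}$), matching the well-founded order on which the interpretation itself is defined in \cref{definition:firsttypeinterpretation}. For each item I will split into cases according to the outermost constructor of $\Type$. The base cases are $\Type = \nat$ and $\Type = \TypeVar$, where the interpretation is (essentially) $\WNVAR$ and independent of the index; here \cref{lemma:wnvar} is immediate and \cref{lemma:monotonicityofinterpretation} holds trivially since both sides coincide. For the delay case $\Type = \tbullet \TypeS$ I will use the defining clauses directly: at index $0$ the interpretation is all of $\EE$, so both claims are free, and at index $\ind+1$ the interpretation equals $\indti{\TypeS}{\ind}$, reducing both statements to the induction hypothesis at a strictly smaller first component.

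For \cref{lemma:wnvar}, the interesting cases are product and arrow. In both, the interpretation is defined as $\WNVAR \cup (\dots)$, so the inclusion $\WNVAR \subseteq \indti{\Type}{\ind}$ holds by construction — this item is really just a direct reading of the definition, and the only reason it needs stating is to feed the arrow case of the next item. For \cref{lemma:monotonicityofinterpretation}, the product case goes through smoothly: an expression $\Expression$ in $\indti{\Type_1 \times \Type_2}{\ind+1}$ either lies in $\WNVAR$ (and hence in $\indti{\Type_1 \times \Type_2}{\ind}$ by \cref{lemma:wnvar}), or reduces to a pair $\Pair{\Expression_1}{\Expression_2}$ with components in $\indti{\Type_1}{\ind+1}$ and $\indti{\Type_2}{\ind+1}$; applying the induction hypothesis for \cref{lemma:monotonicityofinterpretation} at each component (smaller rank, same or smaller index) places the components in $\indti{\Type_1}{\ind}$ and $\indti{\Type_2}{\ind}$, so $\Expression \in \indti{\Type_1 \times \Type_2}{\ind}$.

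The arrow case is where I expect the real work, and it is exactly the contravariance point flagged in the remark after \cref{definition:firsttypeinterpretation}. Take $\Expression \in \indti{\Type \arrow \TypeS}{\ind+1}$ that reduces to an abstraction (the $\Context[\Constant]$ subcase is analogous). By definition this means $\Expression\Expression' \in \indti{\TypeS}{\indj}$ for every $\indj \leq \ind+1$ and every $\Expression' \in \indti{\Type}{\indj}$. To show $\Expression \in \indti{\Type \arrow \TypeS}{\ind}$ I must check the same condition but with the quantifier ranging over $\indj \leq \ind$; since $\{\indj \mid \indj \leq \ind\} \subseteq \{\indj \mid \indj \leq \ind+1\}$, the required property is just a restriction of what I already have, so it holds directly — the ``for all $\indj \leq \ind$'' formulation is precisely what makes the two index-sets nest and lets the step go through without touching the induction hypothesis at all. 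The subtle bookkeeping is only to confirm that the reduct ($\lambda x.\ExpressionF$ or $\Context[\Constant]$) is the same witness for both indices, which it is. I would close the arrow case by noting that the degenerate membership $\Expression \in \WNVAR$ again reduces to \cref{lemma:wnvar}. This confirms that no genuinely new inequality is needed for the arrow: the definition was engineered so that monotonicity in the index comes for free, with \cref{lemma:wnvar} supplying the only nontrivial ingredient.
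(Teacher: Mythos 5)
Your proof is correct, and its skeleton (lexicographic induction on $(\ind,\Trank{\Type})$, case analysis on the outermost constructor, the delay case dropping to a strictly smaller index, the product case invoking the inductive hypothesis at smaller rank) essentially coincides with the paper's. The genuine divergence is in the arrow case of \cref{lemma:monotonicityofinterpretation}, which is exactly the case the paper flags as the crux. The paper's proof re-indexes the hypothesis (writing $j\leq i+1$ as $j'+1$ with $j'\leq i$) and then applies the induction hypothesis $\indti{\TypeS}{j'+1}\subseteq\indti{\TypeS}{j'}$ to lower the index of the applications $\Expression\Expression'$. You instead observe that no induction hypothesis is needed at all: membership at index $\ind+1$ already quantifies over all $\indj\leq\ind+1$ and all arguments in $\indti{\Type}{\indj}$, so the defining condition for membership at index $\ind$ is literally a sub-collection of conditions you already possess, with the same reduct serving as witness. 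Your route is not only shorter but tighter: read literally, the paper's chain of inferences establishes $\Expression\Expression'\in\indti{\TypeS}{j'}$ only for arguments $\Expression'\in\indti{\Type}{j'+1}$, whereas membership at index $i$ demands it for all $\Expression'\in\indti{\Type}{j'}$ --- a larger set, by the very monotonicity being proved --- so the paper's write-up leaves a contravariance-shaped quantifier mismatch that your restriction argument never encounters. Both arguments ultimately rest on the same design decision in \cref{definition:firsttypeinterpretation} (the ``for all $\indj\leq\ind$'' clause); your version makes transparent that this clause renders index-monotonicity at arrow types definitional rather than inductive, with the induction hypothesis genuinely needed only in the delay and product cases, and with item (1) of \cref{lem:A} (together with the explicit $\WNVAR$ summand in the definition) covering the degenerate memberships.
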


\begin{proof} (\cref{lemma:wnvar}) follows by
 induction on $\ind$ and doing case analysis on 
 the shape of the type. 
(\cref{lemma:monotonicityofinterpretation}) follows by induction on
 $(\ind, \typerank(\Type))$. 
 Suppose $\Expression \in  \indti{\Type \arrow  \TypeS}{i+1}$.
 Then, 
$\Expression \Expression' \in \indti{\TypeS}{j}$ 
 for $j \leq i+1$.
 This is equivalent to saying that
 $\Expression \Expression' \in \indti{\TypeS}{j'+1}$
 for $j' \leq i$.
 By induction hypothesis\comma
$\indti{\TypeS}{j'+1} \subseteq \indti{\TypeS}{j'}$.
Hence, 
 $\Expression \in  \indti{\Type \arrow  \TypeS}{i}$.
 \end{proof}

\begin{lemma}
 \label{lemma:interpretationoftypes}
 If $\Type \not =  \infinite$ 
 then
 $\bigcap_{\ind \in \natset} \indti{\Type}{\ind} \subseteq \setWN$.
\end{lemma}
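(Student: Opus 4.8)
The plan is to strip the leading delays and then argue on the first ``real'' type constructor that appears. Since $\Type \neq \infinite$, the regularity and guardedness conditions of \cref{def:types} guarantee that $\Type$ has the form $\tbullet[n]\Type'$ where the head constructor of $\Type'$ is \emph{not} $\bullet$; that is, $\Type'$ is $\nat$, a type variable, a product, or an arrow. (Were there infinitely many leading $\bullet$'s, regularity would force the all-$\bullet$ path from the root to cycle, and uniqueness of solutions would give $\Type = \infinite$.) By \cref{lemma:interpretationofmanybullets2}, for every $\ind \geq n$ we have $\indti{\Type}{\ind} = \indti{\Type'}{\ind-n}$, so taking $\ind = n$ yields $\bigcap_{\ind \in \natset}\indti{\Type}{\ind} \subseteq \indti{\Type'}{0}$. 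It therefore suffices to prove the sharper statement that $\indti{\Type'}{0} \subseteq \setWN$ whenever $\Type'$ has a non-$\bullet$ head, which I would establish by case analysis on that head.

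The ``data'' cases are immediate once one reads off the shape forced by membership. If $\Expression \in \indti{\Type'}{0}$ and $\Expression \in \WNVAR$, then $\Expression$ reduces to some $\Context[\varX]$ whose head is a variable; no reduction rule fires on a variable head in evaluation position, so $\Context[\varX]$ is a normal form and $\Expression \in \setWN$. Otherwise the remaining clauses all force $\Expression$ to reduce to a weak head normal form of the matching shape: for $\Type' = \nat$ we get $\Expression \red^{*} n$ for a numeral $n$, for $\Type' = \TypeVar$ there is nothing further to check, and for $\Type' = \TypeS_1 \times \TypeS_2$ we get $\Expression \red^{*} \Pair{\Expression_1}{\Expression_2}$, which is a pair and hence a normal form (the components need not be inspected for this direction). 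In each case $\Expression$ reduces to a normal form, so $\Expression \in \setWN$.

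The arrow case $\Type' = \TypeS_1 \arrow \TypeS_2$ is the crux, and I expect it to be the main obstacle. Here $\indti{\TypeS_1 \arrow \TypeS_2}{0}$ is a union of three sets: $\WNVAR$ (handled as above), the terms reducing to an abstraction $\lambda x.\ExpressionF$ (a weak head normal form, hence in $\setWN$), and the terms reducing to some $\Context[\Constant]$ subject to the behavioural condition. For the last clause the plan is to argue that the witnessing reduct $\Context[\Constant]$ is itself a weak head normal form, namely a value headed by one of the constants $\pair,\fst,\snd,\suc,\zero$ that is stuck (a constructor still awaiting arguments, or a projection blocked on a non-pair head); reaching such a reduct already certifies $\Expression \in \setWN$. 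The genuinely delicate point, where I expect to spend the effort, is verifying that this clause never certifies a reducible head redex — in particular that it never places $\fst$ or $\snd$ in front of a genuine pair at the head — since that is exactly what makes the three shapes exhaustive of weakly normalising terms and lets the argument avoid any recursion on the codomain $\TypeS_2$ (whose head may again be $\bullet$ or $\infinite$, where $\typerank$ gives no decrease). Should a purely behavioural argument be preferred instead, one can probe the condition with a fresh variable, using $\WNVAR \subseteq \indti{\TypeS_1}{\ind}$ from \cref{lemma:wnvar} to obtain $\Expression\,\varX$ in the interpretation of $\TypeS_2$ and transferring normalisation back to $\Expression$ via the fact that weak head reduction of $\Expression\,\varX$ cannot terminate without first bringing $\Expression$ to a weak head normal form.
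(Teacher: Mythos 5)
Your skeleton is exactly the paper's: the published proof writes $\Type = \tbullet[n]\Type_0$ with $\Type_0$ not $\bullet$-headed, applies \cref{lemma:interpretationofmanybullets} for $\ind \geq n$, and then asserts $\indti{\Type_0}{\ind-n} \subseteq \setWN$ ``by the definition of the type interpretation''. Your decomposition step, and your handling of the $\WNVAR$, $\nat$, type-variable and product cases, correctly spell out what that assertion leaves implicit.

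The gap is the arrow case, which you leave as a plan, and the verification you defer cannot be carried out: under the literal reading of \cref{definition:firsttypeinterpretation}, the property you propose to check is false. Take $\Type' = \TypeVar \arrow \infinite$, which is a type different from $\infinite$ and so must be covered by the lemma. Since $\indti{\infinite}{\indj} = \EE$ for all $\indj$, the behavioural condition in the constant clause of $\indti{\TypeVar\arrow\infinite}{\ind}$ is vacuous, so that clause degenerates to ``$\Expression$ reduces to some $\Context[\Constant]$''. The expression $\Expression = \fst~\Pair{\omegaterm}{\omegaterm}$ is itself of this shape, being $\Context[\fst]$ for the evaluation context $\Context = (\Hole~\Pair{\omegaterm}{\omegaterm})$; hence $\Expression \in \indti{\TypeVar\arrow\infinite}{\ind}$ for every $\ind$, yet its only reduction step (arguments of $\pair$ are not in evaluation position) is $\Expression \red \omegaterm$, so $\Expression \notin \setWN$. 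This kills your first strategy --- the clause \emph{does} certify a reducible head redex --- and it kills the fallback too: probing with a fresh variable only yields $\Expression~\varX \in \indti{\infinite}{\indj} = \EE$, which carries no normalisation information, and, as you yourself note, $\typerank$ gives no decrease that would allow recursing on the codomain. The lemma, and with it the paper's two-line proof, is correct only if one reads the clause ``$\Expression \red^{*} \Context[\Constant]$'' as requiring that reduct to be a weak head \emph{normal form} (a stuck, constant-headed term); that is evidently the intended reading, since the paper treats every clause of the interpretation as directly exhibiting a normal form. Under that reading your arrow case is immediate by definition and the ``delicate point'' evaporates; under the literal reading it is unprovable. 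Either way, the case is not closed by what you wrote, so the proof is incomplete as it stands.
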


\begin{proof}  
Suppose $\Type = \bullet^n \Type_0$ and $\Type_0$
is either $\nat$, $\TypeVar$, $\Type_1 \arrow \Type_2$ or
$\Type_1 \times \Type_2$. Then, for all $i \geq n$, 
\[
\indti{\bullet^n \Type_0}{i} = \indti{\Type_0}{i-n}  \subseteq \HH
\]
by Lemma \ref{lemma:interpretationofmanybullets}  and the  definition of type interpretation.
 \end{proof}

\bigskip

In order to deal with open expressions we resort
to  substitution functions, as usual.
A substitution function is
a mapping from (a finite set of) variables to $\EE$. We use $\funsubst$ to range over substitution functions.
Substitution functions allows us to  
 extend the semantics to typing 
judgements 
(notation $\TypeContext \modelsi \Expression: \Type$). 

\begin{definition}[Typing  
Judgement Interpretation]
\label{definition:modelsi}
Let $\funsubst$ be a substitution function. 

\begin{enumerate} 

\item $\funsubst \modelsi \TypeContext$
if $\funsubst(\varX) \in 
  \indti{\Type}{\ind}$
 for all $\varX : \Type \in \TypeContext$.
 
\item $\TypeContext \modelsi \Expression: \Type$
if $\funsubst (\Expression) \in  \indti{\Type}{\ind}$
for all $\funsubst \modelsi \TypeContext$.
\end{enumerate}
\end{definition}

As expected we can show the soundness of our type system 
with respect to 
the indexed semantics. 

\begin{theorem}[Soundness]
\label{theorem:soundness} 
If $\wte{\TypeContext}\Expression\Type$
then $\TypeContext \modelsi \Expression:\Type$ for all $\ind \in \mathbb{N}$.
\end{theorem}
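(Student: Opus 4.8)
The plan is to prove the Soundness Theorem by induction on the derivation of $\wte{\TypeContext}\Expression\Type$, establishing the stronger statement that it holds simultaneously for all $\ind \in \natset$. I fix an arbitrary substitution function $\funsubst$ with $\funsubst \modelsi \TypeContext$ and must show $\funsubst(\Expression) \in \indti{\Type}{\ind}$; I proceed by cases on the last rule applied. For \refrule{\axiom} and \refrule{\const}, the conclusion is essentially immediate from $\funsubst \modelsi \TypeContext$ (for the variable case) and from an auxiliary check that each constant in $\ConsCtx$ lies in the interpretation of its declared type at every index (for the constant case). For \refrule{\introbullet}, if $\funsubst(\Expression) \in \indti{\Type}{\ind}$ holds for all $\ind$ by the induction hypothesis, I need $\funsubst(\Expression) \in \indti{\tbullet\Type}{\ind}$; when $\ind = 0$ this is trivial since $\indti{\tbullet\Type}{0} = \EE$, and when $\ind = \indj+1$ it reduces to $\funsubst(\Expression) \in \indti{\Type}{\indj}$, which is exactly the induction hypothesis at the smaller index $\indj$. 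This is precisely the point where having the statement quantified over all $\ind$ pays off.

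The two arrow rules are the substance of the argument. For \refrule{\introarrow}, the premise is $\wte{\TypeContext,\varX:\tbullet[n]\Type}{\Expression}{\tbullet[n]\TypeS}$ and the conclusion is $\wte{\TypeContext}{\Fun\varX\Expression}{\tbullet[n](\Type\arrow\TypeS)}$. Using \cref{lemma:interpretationofmanybullets}, it suffices to analyze the index $\ind$ relative to $n$: if $\ind < n$ then $\indti{\tbullet[n](\Type\arrow\TypeS)}{\ind} = \EE$ by \ri{\cref{lemma:interpretationofmanybullets}}{\cref{lemma:interpretationofmanybullets1}} and there is nothing to prove, so I may assume $\ind \geq n$ and must show $\funsubst(\Fun\varX\Expression) \in \indti{\Type\arrow\TypeS}{\ind-n}$. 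Since $\funsubst(\Fun\varX\Expression)$ weak-head reduces to an abstraction, I must verify that for every $\indj \leq \ind-n$ and every $\Expression' \in \indti{\Type}{\indj}$ the application lies in $\indti{\TypeS}{\indj}$. I would apply the induction hypothesis to the premise using the extended substitution $\update{\funsubst}{\varX}{\Expression'}$ at index $\indj$, which requires $\update{\funsubst}{\varX}{\Expression'} \modelsj (\TypeContext,\varX:\tbullet[n]\Type)$; the variable $\varX$ clause needs $\Expression' \in \indti{\tbullet[n]\Type}{\indj}$, and the remaining context clauses need $\funsubst(\varY) \in \indti{\Type_\varY}{\indj}$, both of which follow from $\funsubst\modelsi\TypeContext$ together with the monotonicity given by \ri{\cref{lem:A}}{\cref{lemma:monotonicityofinterpretation}} (passing from index $\ind$ down to $\indj$) and \cref{lemma:interpretationofmanybullets} to reconcile the $\tbullet[n]$ prefix. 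I also need to handle the case where $\funsubst(\Expression)$ is not literally an abstraction but reduces to one; the weak-head reduction semantics makes substitution commute appropriately here. The \refrule{\elimarrow} case is the mirror image: from the two premises I obtain $\funsubst(\Expression_1) \in \indti{\tbullet[n](\Type\arrow\TypeS)}{\ind}$ and $\funsubst(\Expression_2) \in \indti{\tbullet[n]\Type}{\ind}$, reduce to the case $\ind \geq n$ via \cref{lemma:interpretationofmanybullets}, and then feed $\funsubst(\Expression_2)$ as the argument $\Expression'$ into the functional characterization of $\funsubst(\Expression_1)$, taking $\indj = \ind-n$, to conclude the application lies in $\indti{\TypeS}{\ind-n} = \indti{\tbullet[n]\TypeS}{\ind}$.

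I anticipate the main obstacle to be the bookkeeping in the \refrule{\introarrow} case, specifically ensuring that the extended substitution validates the entire context $(\TypeContext,\varX:\tbullet[n]\Type)$ at the shifted index $\indj \leq \ind-n$ rather than at $\ind$. The monotonicity lemma \ri{\cref{lem:A}}{\cref{lemma:monotonicityofinterpretation}} is the essential tool here, letting me transport each $\funsubst(\varY) \in \indti{\Type_\varY}{\ind}$ downward to the index $\indj$ at which the induction hypothesis is consumed; the paper's own remark that the ``for all $\indj \leq \ind$'' quantifier in the arrow interpretation is crucial for contravariance signals exactly this delicacy. A secondary subtlety is the induction measure: because \refrule{\introbullet} decreases the index while the arrow rules decrease the index only after stripping $n$ bullets, the induction must be understood as being on the derivation structure with the index threaded through, and I would rely on the well-foundedness already exploited in the lexicographic measure $(\ind,\typerank(\Type))$ used to define the interpretation itself. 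With these lemmas in hand the remaining computations are routine.
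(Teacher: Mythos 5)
Your overall plan coincides with the paper's proof: structural induction on the typing derivation with the statement quantified over all $\ind$, the case split $\ind < n$ versus $\ind \geq n$ via \cref{lemma:interpretationofmanybullets}, and monotonicity plus closure under weak-head expansion as the supporting facts. However, your \refrule{\introarrow} case contains a genuine error: you apply the induction hypothesis to the premise at index $\indj$, and at that index it is too weak to close the case. What the induction hypothesis gives you there is $\update{\funsubst}{\varX}{\Expression'}(\Expression) \in \indti{\tbullet[n]\TypeS}{\indj}$, whereas the arrow clause demands that the application lie in $\indti{\TypeS}{\indj}$; and the inclusion between these two sets goes in the wrong direction. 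By \cref{lemma:interpretationofmanybullets}, $\indti{\tbullet[n]\TypeS}{\indj} = \EE$ when $\indj < n$, and $\indti{\tbullet[n]\TypeS}{\indj} = \indti{\TypeS}{\indj-n} \supseteq \indti{\TypeS}{\indj}$ when $\indj \geq n$. So, for instance, when $\indj = 0 < n$ and $\TypeS = \nat$, your invocation of the induction hypothesis tells you only that the substituted body is \emph{some} expression, while you must show that the application reduces to a numeral or to an expression with a head variable. The tell-tale sign is that you needed monotonicity to ``reconcile the $\tbullet[n]$ prefix'' in the variable clause: that reconciliation is needed only because your index is off by $n$.

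The paper's proof applies the induction hypothesis at index $\indj + n$ instead, with the substitution $\funsubst_0 = \funsubst \cup \{(\varX,\Expression')\}$. Then the variable clause requires exactly $\Expression' \in \indti{\tbullet[n]\Type}{\indj+n} = \indti{\Type}{\indj}$, which is your hypothesis on $\Expression'$ verbatim; the remaining context clauses follow from \cref{lem:C} because $\indj + n \leq \ind$; and the conclusion lands in $\indti{\tbullet[n]\TypeS}{\indj+n} = \indti{\TypeS}{\indj}$, which is precisely the goal, transferred to the application $(\funsubst(\Fun\varX\Expression))\,\Expression'$ by closure under expansion, \ri{\cref{lem:B}}{\cref{lemma:closedunderreductionandexpansion}}. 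Two smaller omissions of the same flavour: in \refrule{\elimarrow} you must also handle the case $\funsubst(\Expression_1) \in \WNVAR$, where no application property is available and one concludes instead from $\funsubst(\Expression_1\Expression_2) \in \WNVAR \subseteq \indti{\tbullet[n]\TypeS}{\ind}$ by \ri{\cref{lem:A}}{\cref{lemma:wnvar}}; and in \refrule{\introbullet} your step ``the induction hypothesis at the smaller index'' silently requires \cref{lem:C} to know that $\funsubst$ still models $\TypeContext$ at index $\ind-1$ before that hypothesis may be invoked.
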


The proof of the above theorem can be found in~\cref{appendix:normalisation}.

\begin{theorem}[Normalisation of Typable Expressions] 
\label{theorem:weakheadnormalization}
If $\wte\TypeContext\Expression\Type$ and $\Type\not=\infinite$ then
$\Expression$ reduces (in zero or more steps) to a (weak head) normal form.
\end{theorem}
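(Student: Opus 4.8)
The plan is to read off the semantic content of the typing judgement via the Soundness theorem and then close the gap to weak normalisation using \cref{lemma:interpretationoftypes}. The strategy is essentially to feed $\Expression$ into its own type interpretation at every index and observe that membership in every $\indti{\Type}{\ind}$ forces weak normalisation whenever $\Type \neq \infinite$.

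First I would apply \cref{theorem:soundness} to the hypothesis $\wte{\TypeContext}{\Expression}{\Type}$, obtaining $\TypeContext \modelsi \Expression : \Type$ for every $\ind \in \natset$. By \cref{definition:modelsi}, this says that $\funsubst(\Expression) \in \indti{\Type}{\ind}$ for every substitution function $\funsubst$ with $\funsubst \modelsi \TypeContext$. The task is therefore to exhibit one such $\funsubst$ whose action on $\Expression$ recovers $\Expression$ itself, so that the semantic statement collapses into a concrete membership $\Expression \in \indti{\Type}{\ind}$.

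The natural choice is the identity substitution, namely the $\funsubst$ sending each variable in the domain of $\TypeContext$ to itself; since every free variable of a typable expression is declared in $\TypeContext$, this $\funsubst$ satisfies $\funsubst(\Expression) = \Expression$. To check $\funsubst \modelsi \TypeContext$ I would observe that every variable $\varX$ lies in $\WNVAR$, because $\varX = \Hole[\varX]$ already has a variable in head position. By \ri{\cref{lem:A}}{\cref{lemma:wnvar}} we have $\WNVAR \subseteq \indti{\TypeS}{\ind}$ for every type $\TypeS$ and every $\ind$, so $\funsubst(\varX) = \varX \in \indti{\TypeS}{\ind}$ for each $\varX : \TypeS \in \TypeContext$. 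Hence $\funsubst \modelsi \TypeContext$ holds uniformly in $\ind$, and we conclude $\funsubst(\Expression) = \Expression \in \indti{\Type}{\ind}$ for every $\ind$, that is, $\Expression \in \bigcap_{\ind \in \natset} \indti{\Type}{\ind}$.

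Finally, invoking the hypothesis $\Type \neq \infinite$, \cref{lemma:interpretationoftypes} yields $\Expression \in \setWN$, which by definition of $\setWN$ means precisely that $\Expression$ reduces in zero or more steps to a normal form. I do not expect any genuine obstacle here: the substantial work has already been carried out in establishing Soundness and the containment $\bigcap_{\ind \in \natset} \indti{\Type}{\ind} \subseteq \setWN$. The only point deserving a moment's care is the verification that the identity substitution models $\TypeContext$ uniformly in $\ind$, but this is immediate from the fact that variables are head-variable normal forms together with $\WNVAR \subseteq \indti{\TypeS}{\ind}$.
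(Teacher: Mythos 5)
Your proposal is correct and follows essentially the same route as the paper's own proof: apply Soundness (\cref{theorem:soundness}), instantiate with the identity substitution (justified via $\WNVAR \subseteq \indti{\TypeS}{\ind}$ from \ri{\cref{lem:A}}{\cref{lemma:wnvar}}), and conclude with \cref{lemma:interpretationoftypes}. No gaps; the argument matches the paper step for step.
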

\begin{proof}
It follows from \cref{theorem:soundness} that
\begin{equation}
\label{equation:modelsi}
\TypeContext \modelsi \Expression : \Type
\end{equation}
for all $\ind \in \mathbb{N}$. 
Let $id$ be the identity substitution and
 suppose $\varX: \TypeS \in \TypeContext$. Then
\[
\begin{array}{lll}
id(\varX) = \varX & \in \WNVAR \\
      &  \subseteq \indti{\TypeS}{\ind} & \mbox{by \ri{\cref{lem:A}}{\cref{lemma:wnvar}}.}
      \end{array}
      \]
This means that $id \modelsi \TypeContext$
for all $\ind \in \mathbb{N}$.
From \cref{equation:modelsi} we have that
$id(\Expression) = \Expression \in \indti{\Type}{\ind} $ for all $\ind$.
Hence,
\[
\Expression \in \bigcap_{\ind \in \natset} \indti{\Type}{\ind}\]
It follows from \cref{lemma:interpretationoftypes} that
 $\Expression  \in\HH$.
\end{proof}

Notice that there are normalising expressions that cannot be typed, for example $\lambda x. \omegaterm \idfun$, where $\omegaterm = (\lambda \varY. \varY\ \varY)(\lambda \varY. \varY\ \varY)$ and $\idfun=\lambda z.z$. 
It is easy to show that $\fix~\idfun$ has type $\infinite$ and so does 
$\omegaterm$  by  \cref{theorem:subjectreduction}.
By Theorem \ref{theorem:weakheadnormalization}, it cannot have other types, and this implies that the application $\omegaterm \idfun$ has no type.

Notice also that there are  normalizing expressions of type $\infinite$, e.g. $x: \infinite \vdash x: \infinite$. However, there are no normalizaing closed expressions of type $\infinite$ as 
the next lemma shows.

\begin{lemma}
If $\vdash \Expression: \Type$ and $\Expression $ is normalizing then 
$\Type \not = \infinite$.
\end{lemma}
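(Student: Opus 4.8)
The plan is to argue by contradiction: assume $\vdash \Expression : \infinite$ with $\Expression$ normalizing and derive a contradiction. Since $\Expression$ is normalizing, $\Expression \red^{*} v$ for some (weak head) normal form $v$, and iterating \cref{theorem:subjectreduction} along the reduction gives $\vdash v : \infinite$. As $\Expression$ is closed and reduction preserves closedness, $v$ is a \emph{closed} weak head normal form. It therefore suffices to prove that no closed weak head normal form can be typed with $\infinite$. The one tree‑equality fact I will lean on throughout is that, since $\infinite = \bullet^{n}\infinite$, the tree $\infinite$ carries a $\bullet$ at every depth; hence $\infinite \teq \bullet^{n}\Type_0$ is impossible whenever the root of $\Type_0$ is a \emph{proper} constructor ($\arrow$, $\times$ or $\nat$), because the two trees would already differ at depth $n$.

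I would obtain this from a canonical‑forms statement proved by structural induction on the whnf $v$: if $v$ is a closed whnf with $\vdash v : \Type$, then $v$ is never of the form $\fst~e$ or $\snd~e$, and $\Type \teq \bullet^{n}\Type_0$ where the head of $v$ matches the root of $\Type_0$ — a $\lambda$‑abstraction or a partially applied constant for $\arrow$, a pair $\Pair{e_1}{e_2}$ for $\times$, and $\zero$ or $\suc~e$ for $\nat$. The induction runs by case analysis on the shape of $v$ using the Inversion Lemma (\cref{lem:inv}). The abstraction and bare‑constant cases are immediate: \cref{lem:inv3} forces an arrow type for $\lambda x.e$, and \cref{lem:inv1} forces $\bullet^{n}\nat$, $\bullet^{n}(\nat\arrow\nat)$ or an arrow type for $\zero,\suc,\pair,\fst,\snd$. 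For an application $v = v_1~e_2$, the head $v_1$ is itself a closed whnf and a proper subterm (because $(\PContext~\Expression)$ is an evaluation context), so the induction hypothesis applies to it; \cref{lem:inv5} gives $\Type \teq \bullet^{n}\Type_2$ together with a typing of $v_1$ at an arrow type, and by the induction hypothesis $v_1$ is either a $\lambda$‑abstraction — impossible, since then $v$ would be a $\beta$‑redex $(\lambda x. e_1)~e_2$ and not a normal form — or a partially applied constant, leaving $v_1 \in \{\suc,\pair,\pair~e_1,\fst,\snd\}$. In the first three subcases $\Type$ comes out as $\bullet^{n}\nat$, an arrow type, or $\bullet^{n}(\Type_1 \times \Type_2)$ respectively, all with proper root, so $\Type \neq \infinite$.

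The main obstacle is exactly the projection heads $v_1 = \fst$ and $v_1 = \snd$: these are the only subcases where inversion alone does not visibly exclude $\Type = \infinite$, since the codomain of a projection type may a priori be $\infinite$. Here I would invoke the induction hypothesis on the strictly smaller argument $e_2$, which is itself a closed whnf (as $\Fst{\PContext}$ and $\Snd{\PContext}$ are evaluation contexts): inversion on $\fst$ (resp. $\snd$) forces $e_2$ to have a product type $\bullet^{m}(\Type_1 \times \Type_2)$, so the canonical‑forms part of the induction hypothesis makes $e_2$ a pair $\Pair{e_1'}{e_2'}$; but then $v = \fst~\Pair{e_1'}{e_2'}$ (resp. $\snd~\Pair{e_1'}{e_2'}$) is a redex, contradicting that $v$ is a normal form. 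Hence projection heads never occur at the head of a closed whnf, and in every surviving case $\Type$ has a proper root after its leading $\bullet$'s, whence $\Type \neq \infinite$.

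The delicate point is purely organisational: the induction must be stated so that the full canonical‑forms data (not just ``$\Type$ is proper'') is available simultaneously for the head $v_1$ and for the argument $e_2$, since it is this extra information that rules out the projection cases and the applied‑$\lambda$ case. Once the statement is set up this way, the remaining work — the root comparisons between $\bullet^{n}\Type_0$ and $\bullet^{m}\Type_0'$, and the reading‑off of the matching between a head and its type — is routine.
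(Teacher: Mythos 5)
Your proof is correct and takes essentially the same route as the paper: reduce to a closed weak head normal form via subject reduction, then show by a canonical-forms analysis that no closed normal form can carry the type $\infinite$. The only difference is one of rigour, not of approach: the paper simply asserts the enumeration of closed normal forms (its list omits the partially applied constants $\pair$ and $\pair~\Expression$, and silently assumes projection-headed applications cannot occur), whereas your structural induction with the Inversion Lemma proves exactly these points.
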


\begin{proof}
 By Theorems  \ref{theorem:weakheadnormalization} and \ref{theorem:subjectreduction}, we can assume that $\Expression$ is a closed expression in normal form.
Then, 
 $\Expression$ could be either $\lambda x. \ExpressionF$ or  
  $\Pair{\Expression_1}{\Expression_2}$ or  $\fst$ or  $\snd$ or  $\suc$ or 
  $\suc^{n} \zero$. The type of all these expressions is different from $\infinite$.
\end{proof}

The function $\filter$ that selects the elements of a list $xs$ that satisfy $p$ defined as 
\[
\filter~p~xs = \ifkw~p~(\fst~xs)~\thenkw~\Pair{(\fst~xs)}{(\filter~xs)} \ 
\elsekw \ (\filter~xs)
\]
is normalizing (actually it is in normal form) and it 
can not  be typed in $\lambdab$.
Suppose towards a contradiction that $\filter$ is typable. Then,   
 $(\filter~\ones~(\Fun{x}{\eqels~{x}~{0}}))$  is also typable.
 But $(\filter~\ones~(\Fun{x}{\eqels~{x}~{0}}))$  is not normalizing contradicting
 \cref{theorem:weakheadnormalization}.

\subsection{L\'evy-Longo and B\"ohm Trees}
A simple way to  give meaning to a computation of lambda calculus is to consider 
L\'evy-Longo and B\"ohm trees  \cite{barendregt2012the,DBLP:conf/lambda/Levy75,Longo83}.
 Consider, for example, a procedure for computing the decimal expansions of $\pi$;
  if implemented appropriately, it can provide partial output as it runs and this ongoing output
  is a natural way to assign meaning to the computation. This is in contrast to a program
  that loops infinitely without ever providing an output. These two procedures have very different
  intuitive meanings.
  This section gives a nice characterization  of computations  that never produce meaningless information by means of  types.

  In lambda calculus, expressions such as $\fix~x$ or $\fix~(\lambda x y. x)$
  have no finite normal form though they are intuitively meaningfull
  and should be distinguished from meaningless expressions such as $fix~\idfun$.
By analogy with $\pi$, 
the L\'evy-Longo (as well as the B\"ohm) tree  of a term is  obtained as the limit of these partial 
outputs. 
If in the process of computing the L\'evy-Longo or B\"ohm tree of a term, we find a subexpression
that has no meaning such as $\fix~\idfun$ then this is recorded by replacing 
 $\fix~\idfun$ by $\bot$.
   L\'evy-Longo and B\"ohm trees differ on the notion of 
  meaningless expressions: weak head normal forms for the former and
   head normal forms for the latter.

%
%
%

\begin{definition}[L\`evy-Longo Tree]
\label{definition:llt}
Let $\Expression$ be an expression (it may be untypable).
The L\`evy-Longo tree of $\Expression$, denoted as $\LLT \Expression$ is
defined coinductively as follows.
\[
\begin{array}{ll}
\LLT{\Expression} & = 
\left \{ 
\begin{array}{ll}
x~\LLT{\Expression_1} \ldots \LLT{\Expression_n}  &   \mbox{if }
 \Expression \red^{*} x~\Expression_1 \ldots \Expression_n
\\
\Constant~\LLT{\Expression_1} \ldots \LLT{\Expression_n}  &   \mbox{if }
 \Expression \red^{*} \Constant~\Expression_1 \ldots \Expression_n
\\
\lambda x. \LLT{\Expression'} & \mbox{if }
 \Expression \red^{*}  \lambda x. \Expression' \\
 \bot & \mbox{ otherwise, i.e. } \Expression \mbox{ has no (weak head) normal form }
\end{array}
\right .
\end{array}
\]
\end{definition}

 \begin{definition}[B\"ohm Tree]
 \label{definition:bt}
Let $\Expression$ be an expression (it may be untypable).
The B\"ohm tree of $\Expression$, denoted as $\BT \Expression$ is
defined coinductively as follows.
\[
\begin{array}{ll}
\BT{\Expression} & = 
\left \{ 
\begin{array}{ll}
\lambda x_1 \ldots x_k. 
x~\BT{\ExpressionF_1} \ldots \BT{\ExpressionF_n } & \mbox{ if }
\Hnf{\Expression} = \lambda x_1 \ldots x_k. 
x~\ExpressionF_1 \ldots \ExpressionF_n 
\\
\lambda x_1 \ldots x_k. 
\Constant~\BT{\ExpressionF_1} \ldots \BT{\ExpressionF_n} & \mbox{ if }
\Hnf{\Expression} = \lambda x_1 \ldots x_k. 
\Constant~\ExpressionF_1 \ldots \ExpressionF_n 
\\
 \bot & \mbox{ otherwise }
\end{array}
\right .
\end{array}
\]
where $\Hnf \Expression$ is the  {\em head normal form} of $\Expression$ 
 defined as follows:
\[
\begin{array}{ll}
\Hnf \Expression = & 
\left \{ \begin{array}{ll}
\lambda x_1 \ldots x_k. 
x~\ExpressionF_1 \ldots \ExpressionF_n & \mbox{ if }
\Expression = \Expression_0 \mbox{ and }
\Expression_k \red^{*}  x~\ExpressionF_1 \ldots \ExpressionF_n 
 \\
& \mbox{ and }
\Expression_{i} \red^{*} \lambda x_{i+1}. \Expression_{i+1} \mbox{ for } 0 \leq i \leq k-1
\\
\lambda x_1 \ldots x_k. 
\Constant~\ExpressionF_1 \ldots \ExpressionF_n & \mbox{ if }
\Expression = \Expression_0 \mbox{ and }
\Expression_k \red^{*}  \Constant~\ExpressionF_1 \ldots \ExpressionF_n 
 \\
& \mbox{ and }
\Expression_{i} \red^{*} \lambda x_{i+1}. \Expression_{i+1} \mbox{ for } 0 \leq i \leq k-1
\\
\bot & {otherwise}
\end{array} \right .
\end{array}
\]  

\end{definition}

L\'evy-Longo and B\"ohm trees  are obtained as possible infinite normal forms
of possible infinite $\beta$-reduction sequences.
Infinitary normalization formalizes the idea of productivity, i.e. an infinite reduction
sequence which always produces part of the result \cite{KennawayKSV95,DBLP:journals/tcs/KennawayKSV97,severidevriesICFP2012,DBLP:journals/corr/KurzPSV13}.
Besides the $\beta$-rule, 
the infinitary lambda calculus of L\'evy-Longo trees  has a reduction rule defined as
$M \red \bot$ if $M$ has no weak head normal form while the
one of B\"ohm trees has a similar rule using the condition that $M$ has no head normal form.


\begin{definition}
We say that $\Type$ is {\em $\infinite$-free} if it does not contain $\infinite$.
\end{definition}

For example, $\List{\nat}$ and $\Etype_{\nat}$ are 
 $\infinite$-free  where 
$\Etype_{\Type} = \Type \arrow \bullet \Etype_{\Type}$.
 But $\List{\infinite}$ and $\nat \arrow \infinite$ are not $\infinite$-free.

\begin{theorem}[Productivity I: L\`evy-Longo Trees without $\bot$]
\label{theorem:llt}
Let $\Type$ and all types in $\TypeContext$ be $\infinite$-free.
If $\wte\TypeContext\Expression\Type$ then
the L\`evy-Longo tree of 
$\Expression$ has no $\bot$'s.
\end{theorem}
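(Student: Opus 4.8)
The plan is to establish the statement by coinduction on the structure of the L\`evy-Longo tree, taking as invariant the property that the expression is typable under an $\infinite$-free context with an $\infinite$-free type. Writing $S$ for the set of expressions $\Expression$ admitting a derivation $\wte{\TypeContext}{\Expression}{\Type}$ with both $\TypeContext$ and $\Type$ being $\infinite$-free, it suffices to check two things: that for every $\Expression \in S$ the root label of $\LLT{\Expression}$ is not $\bot$, and that every immediate subtree of $\LLT{\Expression}$ is of the form $\LLT{\Expression'}$ with $\Expression' \in S$. Since the hypotheses place the given expression in $S$, this closes the coinduction and shows that $\LLT{\Expression}$ has no $\bot$. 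Throughout I would use the two elementary closure facts that $\infinite$-freeness is inherited by every subtree of a type and is preserved by prefixing $\bullet$'s, the latter because $\tbullet[n]\Type \teq \infinite$ forces $\Type \teq \infinite$.

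The first condition is immediate: an $\infinite$-free $\Type$ is in particular different from $\infinite$, so \cref{theorem:weakheadnormalization} gives a weak head normal form $w$ with $\Expression \red^{*} w$, and \cref{theorem:subjectreduction} keeps its type $\Type$; the head of $w$ then labels the root with a variable, a constant, or $\lambda$, never $\bot$. For the second condition I would do a case analysis on $w$ using the Inversion Lemma (\cref{lem:inv}). If $w = \Fun{x}{\Expression'}$, inversion yields $\Type \teq \tbullet[n](\Type_1 \arrow \Type_2)$ together with $\wte{\TypeContext, x : \tbullet[n]\Type_1}{\Expression'}{\tbullet[n]\Type_2}$, and both $\tbullet[n]\Type_1$ and $\tbullet[n]\Type_2$ are $\infinite$-free, so $\Expression' \in S$. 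If $w$ is headed by $\pair$, $\suc$ or $\zero$, none of which can be over-applied because their results are not arrow types, then repeated inversion exhibits the argument types as bulleted subtrees of the $\infinite$-free $\Type$, hence themselves $\infinite$-free.

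The remaining and genuinely delicate cases are when $w$ is headed by a variable or by a projection $\fst$ or $\snd$, and they rest on the following auxiliary lemma, which I regard as the main obstacle: \emph{if $\TypeContext$ is $\infinite$-free and $w$ is a weak head normal form whose head is a variable, $\fst$, or $\snd$, then every $\Type$ with $\wte{\TypeContext}{w}{\Type}$ is $\infinite$-free, and every immediate argument of $w$ is typable under $\TypeContext$ with an $\infinite$-free type.} I would prove this lemma by induction on the size of $w$, peeling the outermost application by inversion. The subtle point is that $\fst$ and $\snd$ may be over-applied, as in $\fst~\Expression_1~\Expression_2$ when $\fst~\Expression_1$ has an arrow type; then the type of the extra argument $\Expression_2$ is the domain of that arrow and is not obviously controlled by $\Type$. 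This is resolved by observing that the principal argument of a projection is itself a weak head normal form which is neither a pair nor an abstraction---otherwise $w$ would contain a head redex---and is therefore again headed by a variable or a projection; the induction hypothesis then forces its product type to be $\infinite$-free, whence each of its components, and in particular the arrow type returned by the projection, is $\infinite$-free, so the over-applied arguments receive $\infinite$-free types as well. The base case $w = x$ uses the variable clause of \cref{lem:inv} together with $\infinite$-freeness of $\TypeContext$.

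Granting this Key Lemma, the variable- and projection-headed cases of the second condition follow at once, so every immediate subtree of $\LLT{\Expression}$ again lies in $S$ and the coinduction goes through. The only place where the non-structural character of the types---types not matching any expression form---and the non-uniqueness of typing derivations must be handled with care is precisely the treatment of over-applied projections in the Key Lemma; the abstraction, pair, and successor cases are routine applications of inversion together with the two closure facts about $\infinite$-freeness.
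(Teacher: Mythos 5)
Your proposal is correct, and its skeleton coincides with the paper's proof: both construct $\LLT{\Expression}$ node by node, use \cref{theorem:weakheadnormalization} together with \cref{theorem:subjectreduction} to show that each root label is not $\bot$, and use \cref{lem:inv} to show that the expressions generating the immediate subtrees again satisfy the invariant (typable with an $\infinite$-free type under an $\infinite$-free context). Where you genuinely depart from the paper is in the non-abstraction cases: the paper treats only $\Expression_0 = \lambda x.\Expression'$ and dismisses the variable- and constant-headed cases as ``similar'', whereas your Key Lemma makes explicit that the projection-headed case is \emph{not} similar. Indeed, if $\Expression_0 = \fst~\ExpressionF$ (possibly over-applied), inversion gives $\ExpressionF$ a type $\tbullet[n](\Type' \times \TypeS)$ whose second component $\TypeS$ is not a subtree of the result type $\tbullet[n]\Type'$, so its $\infinite$-freeness cannot be read off the conclusion of the judgement, in contrast with the abstraction, variable, $\pair$ and $\suc$ cases, where all the types produced by inversion are (bulleted) subtrees of the result type or of a context type. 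Your resolution---an induction on the whnf showing that the principal argument of a typable projection in weak head normal form is itself variable- or projection-headed, so that all of its types trace back to the $\infinite$-free context---is exactly the missing argument, and it is sound; so your version is longer but complete, while the paper buys brevity at the cost of leaving precisely this delicate case unjustified. One small repair to your write-up: the principal argument of $\fst$ cannot be an abstraction not because that would create a head redex (it would not), but because an abstraction has an arrow type and so cannot receive the product type that inversion demands; the same type-constructor clash, rather than a redex, is what excludes the remaining constant-headed forms ($\zero$, $\suc~\ExpressionF'$, partially applied $\pair$, bare projections), and your case split should say so before concluding that the argument is variable- or projection-headed.
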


\begin{proof}
We construct the L\`evy-Longo tree depth by depth.
By \cref{theorem:weakheadnormalization}, $\Expression$ reduces to a 
(weak head) normal form $\Expression_0$ which is typable by \cref{theorem:subjectreduction}.
Suppose $\Expression_0$ is $\lambda x. \Expression'$
(the cases when $\Expression_0 = x~\Expression_1 \ldots \Expression_n$
or $\Expression_0 =\Constant~\Expression_1 \ldots \Expression_n$ are similar).
The L\`evy-Longo tree of $\Expression$ contains  $\lambda x$ at depth $0$.
It only remains to show that we can construct the L\`evy-Longo tree of $\Expression'$ which will be at depth $n >0$.
It follows from  Inversion Lemma that  $\Gamma, x: \Type_1 \vdash \Expression' : \Type_2$
and both $\Type_1 $ and $\Type_2$ are $\infinite$-free.
Hence, we can repeat the process  to obtain the L\`evy-Longo tree of $\Expression'$.
\end{proof}

\begin{definition}\mbox{  }
\begin{enumerate}
\item 
An {\em infinite alternation of $\bullet$'s and $\arrow$'s} is  a type of the form:
$
\Type = \bullet^{n_1} 
(\Type_1 \arrow 
\bullet^{n_2} 
(\Type_2 \arrow \ldots ))
$. 
\item 
We say that $\Type$ is {\em tail finite}
if it is $\infinite$-free and it does not contain 
an infinite alternation  of $\bullet$'s and  $\arrow$'s.
\end{enumerate}
\end{definition}

For example, 
$\dthree = \nat \arrow \bullet (\nat \times \dthree)$ is tail finite but
$\dtwo = \nat \arrow \bullet \dtwo$ is not.

\begin{theorem}[Productivity II: B\"ohm Trees without $\bot$]
\label{theorem:bt}
Let all the types in $\TypeContext$ and $\Type$ be tail finite.
If $\wte\TypeContext\Expression\Type$ then
the B\"ohm tree of 
$\Expression$ has no $\bot$'s.
\end{theorem}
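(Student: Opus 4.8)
The plan is to construct the B\"ohm tree node by node while maintaining the invariant that every subexpression reached is typable with a tail finite type in a tail finite context. Concretely, I would prove the following about any $\Expression$ with $\wte{\TypeContext'}{\Expression}{\Type}$ where $\TypeContext'$ and $\Type$ are tail finite: (i) $\Hnf{\Expression}$ exists (so the current node is labelled by an abstraction over a variable/constant head, not by $\bot$), and (ii) writing $\Hnf{\Expression} = \lambda \varX_1 \ldots \varX_k.\, h\,\ExpressionF_1 \ldots \ExpressionF_n$ with $h$ a variable or a constant, each argument $\ExpressionF_i$ is again typable with a tail finite type in a tail finite context. Granting (i) and (ii), the theorem follows by coinduction on \cref{definition:bt}: the invariant holds initially, propagates to every child, and forbids the $\bot$-clause at every node.

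For (i), the essential difference with the L\`evy-Longo case (\cref{theorem:llt}) is that a \emph{head} normal form is obtained by stripping an a priori unbounded sequence of leading $\lambda$'s, and one must show this sequence is finite. I would measure a type $\Type$ by the number of $\arrow$'s along its codomain spine, namely the path obtained by repeatedly deleting leading $\bullet$'s and descending into the codomain of an arrow. Tail finiteness makes this measure finite: $\infinite$-freeness forbids an infinite run of $\bullet$'s, and the absence of an infinite alternation of $\bullet$'s and $\arrow$'s bounds the number of arrows. Now \cref{theorem:weakheadnormalization} applies (since $\Type \neq \infinite$) and yields a weak head normal form; if it is $\lambda \varX.\Expression'$ then \ri{\cref{lem:inv}}{\cref{lem:inv3}} gives $\Type \teq \tbullet[n](\Type_1 \arrow \Type_2)$ and $\wte{\TypeContext', \varX : \tbullet[n]\Type_1}{\Expression'}{\tbullet[n]\Type_2}$, where $\tbullet[n]\Type_1$ and $\tbullet[n]\Type_2$ are again tail finite and the measure of $\tbullet[n]\Type_2$ is exactly one less than that of $\Type$. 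Hence after finitely many $\lambda$'s the weak head normal form is not an abstraction, and it therefore has the form $h\,\ExpressionF_1 \ldots \ExpressionF_n$.

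For (ii), let $\TypeContext''$ be $\TypeContext'$ extended by the tail finite types of $\varX_1, \ldots, \varX_k$, so $\TypeContext''$ is still tail finite and $h\,\ExpressionF_1 \ldots \ExpressionF_n$ has a tail finite type in $\TypeContext''$. If $h$ is a variable $\varX : \TypeS \in \TypeContext''$, I would prove a spine inversion lemma by induction on $n$, using \ri{\cref{lem:inv}}{\cref{lem:inv2}} and \ri{\cref{lem:inv}}{\cref{lem:inv5}}: every type of $\varX\,\ExpressionF_1 \ldots \ExpressionF_j$ is $\tbullet[a]$ of a subtree of $\TypeS$, and each $\ExpressionF_i$ is typable at $\tbullet[b_i]$ of a domain subtree of $\TypeS$; since subtrees of a tail finite type are tail finite, all argument types are tail finite. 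If $h$ is a constant I split by \ri{\cref{lem:inv}}{\cref{lem:inv1}}: for $\pair$, $\zero$ and $\suc$ the argument types are read off the tail finite result type via inversion on the applications (a fully applied pair being treated as the head $\pair\,\ExpressionF_1\,\ExpressionF_2$). The delicate case is $\fst$/$\snd$, whose result type leaves the discarded component unconstrained; here I would argue that for $\fst\,\ExpressionF$ to be a weak head normal form, $\ExpressionF$ must not reduce to a pair, and as $\ExpressionF$ has product type its weak head normal form is neither an abstraction (inversion would force an arrow type) nor a pair, hence a head form whose head is a further $\fst$/$\snd$ or, ultimately, a variable. Tracing this projection chain back to its originating variable — whose type is tail finite because $\TypeContext''$ is — the spine lemma shows the whole product type of $\ExpressionF$ is a subtree of that variable's type, so the discarded component is tail finite too.

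The hard part will be exactly this last case. Unlike the result type, the \emph{argument} type of $\fst$/$\snd$ is not directly controlled, and the non-uniqueness of types caused by the silent rule \refrule{\introbullet} blocks any naive ``unique typing'' argument. The resolution is to combine the spine inversion lemma with the observation that a stuck projection can only sit over a variable of the (tail finite) context, never over the pair constructor $\pair$, which would either reduce or already be a pair. Everything else — finiteness of the $\lambda$-spine via the codomain-spine measure, and propagation of tail finiteness to bound variables and to subtree-derived argument types — is routine given \cref{theorem:weakheadnormalization}, \cref{theorem:subjectreduction}, and the inversion lemma.
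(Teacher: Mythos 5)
Your proof is correct, and it reaches the result by a more self-contained route than the paper does. The paper factors through \cref{theorem:llt}: tail finite types are $\infinite$-free, so the L\'evy-Longo tree of $\Expression$ has no $\bot$'s, and the only way the B\"ohm tree could then acquire a $\bot$ is through an infinite spine $\lambda x_1 \lambda x_2 \lambda x_3 \ldots$ in the L\'evy-Longo tree; repeated use of \cref{lem:inv} and \cref{theorem:subjectreduction} along such a spine would force an infinite alternation of $\bullet$'s and $\arrow$'s in the type, contradicting tail finiteness. You prove the same finiteness of the $\lambda$-spine positively rather than by contradiction, via a measure (the number of $\arrow$'s on the codomain spine, finite exactly because the type is tail finite) that drops by one at each stripped $\lambda$ --- the same mathematical content, differently packaged. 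Where you genuinely add value is in the invariant propagation (your step (ii)): both the paper's proof and its proof of \cref{theorem:llt} (whose head-variable and constant cases are dismissed as ``similar'') tacitly assume that every subterm reached while unfolding the tree keeps a tail finite type in a tail finite context, and the only non-routine case is the one you isolate: a stuck projection $\fst~\ExpressionF$ or $\snd~\ExpressionF$, whose discarded product component is not constrained by the result type. Your resolution --- a stuck projection chain must bottom out at a head variable, and spine inversion then exhibits the whole product type as a subtree, up to leading $\bullet$'s, of that variable's (tail finite) context type --- is exactly the missing detail. So your argument buys rigor and self-containment; the paper's buys brevity by reusing \cref{theorem:llt} and leaving this propagation to the reader.
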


\begin{proof}
By Theorem \ref{theorem:llt}, the L\'evy-Longo tree of $\Expression$ has no $\bot$'s.
Suppose the  L\'evy-Longo tree of $\Expression$ contains a subtree of the form
$\lambda x_1 \lambda x_2 \lambda x_3. \ldots$. 
It is not difficult to prove using \cref{lem:inv} and  \cref{theorem:subjectreduction}
that the type should contain an infinite alternation of $\bullet$'s and $\arrow$'s.
\end{proof}

Let 
$\Expression = \fix~(\lambda x y. x)$. Then $\Expression$ has type $\Etype_{\nat}$
which is $\infinite$-free 
and   $\LLT{\Expression} = \lambda x. \lambda x. \ldots$
 has no $\bot$.
For example, $\fib$ has type $\List{\nat}$ which is tail finite and
$\BT{\fib} = \Pair{0}{\Pair{1}{\Pair{1}{\Pair{2}{\ldots}}}}$
has no $\bot$'s.

\section{Formal Comparision with Other Type Systems}
\label{section:comparision}

This section  stablishes a formal relation between $\lambdab$ and the typed lambda calculi
of   Krishnaswami and Benton \cite{KrishnaswamiB11} and Nakano \cite{Nakano00:lics}.

\subsection{Embedding the Type System by Krishnaswami and Benton into $\lambdab$}
\label{section:kb}
The typing rules of
$\lambdaKB$ defined by Krishnaswami and Benton  \cite{KrishnaswamiB11} are:
\[
\begin{array}{@{}c@{}}
\inferrule[\defrule\axiom]{
  \strut
}{
  \wtei{\TypeContext, \var:_i \Type}{\var}{\Type}{j}
}~~ j \geq i
\qquad
        \inferrule[\defrule\const]{
         \Constant: \Type \in \ConsCtx
        }{
        \wtei{\TypeContext}{\Constant}{\Type}{i}
        }
        \quad 
\inferrule[\defrule\introbullet]{
  \wtei{\TypeContext}{\Expression}{\Type}{i+1}}
  {\wtei{\TypeContext}{\Ebullet \Expression}{\tbullet\Type}{i}}
\qquad
\inferrule[\defrule{$\bullet$E}]{
  \wtei{\TypeContext}{\Expression}{\tbullet \Type}{i}
}{
  \wtei{\TypeContext}{\Await\Expression}{\Type}{i+1}
}
\\ \\
\inferrule[\defrule\introarrow]{
  \wtei{\TypeContext,\varX:_i \Type }{\Expression}{\TypeS}{i}
}{
  \wtei{\TypeContext}{\Fun \varX \Expression}{(\Type \arrow \TypeS)}{i}
}
\qquad
\inferrule[\defrule{\elimarrow}]{
  \wtei{\TypeContext}{\Expression_1}{(\Type \arrow \TypeS)}{i}
  \\
  \wtei{\TypeContext}{\Expression_2}{\Type}{i}
}{
  \wtei{\TypeContext}{\Expression_1\Expression_2}{\TypeS}{i}
}
\end{array}
\]

 $\lambdaKB$ has only the recursive types  $\List{\Type} = \Type \times \bullet \List{\Type}$
 for infinite lists with only one $\bullet$. 
 Since  $\fix$ cannot be expressed
in $\lambdaKB$, we need to add it to the set $\ConsCtx$
of constants:
\[
\fix : (\tbullet\Type\arrow\Type)\arrow\Type
\]

The mapping $\B$ replaces the subindex $i$ by $\bullet^i$ in  typing contexts.
\[
\begin{array}{ll}
\B(\emptyset) & = \emptyset \\
\B(\TypeContext, \varX :_i \Type ) & =  \B(\TypeContext), \varX: \bullet^i t
\end{array}
\]

The function  $\transfKB$ removes the constructor  and destructor of
      $\bullet$ from $\Expression$, i.e. 
      $\TransfKB{\bullet \Expression' } = \Expression'$ and 
      $\TransfKB{\Await{\Expression'}} = \Expression'$.
      
\begin{lemma}[Embedding $\lambdaKB$ into $\lambdab$] 

If $\TypeContext \vdash \Expression :_i \Type$ in $\lambdaKB$
then 
$ \B(\TypeContext) \vdash  \TransfKB{ \Expression }: \bullet^i t $ in~$\lambdab$.


\end{lemma}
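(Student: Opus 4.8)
The plan is to argue by induction on the derivation of $\wtei{\TypeContext}{\Expression}{\Type}{i}$ in $\lambdaKB$, producing for each rule a matching $\lambdab$ derivation of $\wte{\B(\TypeContext)}{\TransfKB{\Expression}}{\tbullet[i]\Type}$. The organising idea is that the availability index $i$ of $\lambdaKB$ corresponds precisely to a prefix of $i$ delay modalities in $\lambdab$: the map $\B$ turns an assumption $\varX :_i \Type$ into $\varX : \tbullet[i]\Type$, and the target type carries the same prefix $\tbullet[i]$. Under this reading each $\lambdaKB$ rule should collapse to one---or a short chain---of $\lambdab$ rules with the delay parameter instantiated to $n = i$, and I expect $\transfKB$ to commute with the term constructors so that the erased terms line up automatically.

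First I would treat the structural cases. For $\refrule{\axiom}$, $\TransfKB{\varX} = \varX$ and $\B$ places $\varX : \tbullet[i]\Type$ in the context, so $\lambdab$'s $\refrule{\axiom}$ yields $\tbullet[i]\Type$; since the $\lambdaKB$ axiom permits any $j \geq i$, I recover $\tbullet[j]\Type$ by applying $\refrule{\introbullet}$ exactly $j-i$ times---this is the monotonicity of availability made explicit by the silent modality. The $\refrule{\introarrow}$ and $\refrule{\elimarrow}$ cases follow by applying the homonymous $\lambdab$ rules with $n = i$: the induction hypotheses already supply the premises carrying the $\tbullet[i]$ prefix, and because $\transfKB$ distributes over abstraction and application the conclusions match. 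For $\refrule{\const}$ the image $\tbullet[i]\Type$ is obtained by $i$ applications of $\refrule{\introbullet}$ to the constant's declared type.

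The conceptually interesting cases are the two delay rules, and here the argument should become almost trivial, which is exactly the point of the embedding. Since $\transfKB$ erases both the constructor $\Ebullet{\Expression}$ and the destructor $\Await{\Expression}$, we have $\TransfKB{\Ebullet{\Expression}} = \TransfKB{\Expression} = \TransfKB{\Await{\Expression}}$. In the $\lambdaKB$ rule $\refrule{\introbullet}$ the index drops from $i+1$ to $i$ while the type gains a $\bullet$, and the induction hypothesis delivers type $\tbullet[i+1]\Type$, which is syntactically $\tbullet[i](\tbullet\Type)$---precisely the required target with no extra $\lambdab$ step. Dually, in the $\bullet$-elimination rule of $\lambdaKB$ the index rises from $i$ to $i+1$ while a $\bullet$ is stripped, and again $\tbullet[i](\tbullet\Type) = \tbullet[i+1]\Type$ closes the gap. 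Thus both $\bullet$-marks degenerate into bookkeeping that the $\B$/$\tbullet[i]$ correspondence absorbs for free.

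The one step I expect to require care is $\refrule{\const}$ applied to $\fix$: in $\lambdaKB$ the fixed-point combinator is a primitive constant of type $(\tbullet\Type\arrow\Type)\arrow\Type$, whereas in $\lambdab$ it is a defined term. I would discharge this either by adding $\fix$ to $\ConsCtx$ with the same type, or by invoking the derivation on page~\pageref{fix} showing that the $\fix$ term is typable with $(\tbullet\Type\arrow\Type)\arrow\Type$ in $\lambdab$, and then prefixing $i$ copies of $\refrule{\introbullet}$; the only thing to verify is that $\transfKB$ sends the constant $\fix$ to whichever representative is chosen. No other case poses a difficulty, so the induction should go through uniformly.
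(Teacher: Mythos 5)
Your proposal is correct and follows exactly the route the paper takes: the paper's entire proof is ``By an easy induction on the derivation,'' and your case analysis (axiom with $j-i$ applications of \refrule{\introbullet}, the homonymous rules instantiated at $n=i$, the two KB delay rules collapsing via $\tbullet[i+1]\Type = \tbullet[i](\tbullet\Type)$, and the special treatment of $\fix$) is precisely the content that ``easy induction'' leaves implicit. Nothing in your argument deviates from or adds assumptions beyond the paper's intended proof.
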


\begin{proof} By an easy induction on the derivation. \end{proof}

If only recursive types for lists are available, one cannot create other recursive types such as
trees but having  only one bullet
also limits the amount of functions on streams we can type.
For example, the functions $\skipeven$ (see \cref{equation:skip}), 
$\ones'$ and  $\pairup$ are not typable in $\lambdaKB$ where 
\[
\begin{array}{ll}
\ones' = \Pair{1}{\interleave~\ones'~(\snd~\ones')}
\\
\pairup~xs= \Pair{\Pair{\Fst{xs}}{\Fst{\Snd{xs}}}}{\Snd{\Snd{xs}}}
\end{array}
\]
The functions $\skipeven$, 
$\ones'$ and  $\pairup$ are all typeable in $\lambdab$ because $\lambdab$ has more flexibility
in the location and number of $\bullet$'s which can be inserted in the  types.
 We can derive that $\ones'$ has type  $\List{\TypeVar}$  in $\lambdab$  
 by assigning the type $\List{\TypeVar} \arrow \bullet \List{\TypeVar} \arrow \List{\TypeVar}$ 
 to  $\interleave$  
 and  $\pairup$ has type $\Listhalf{\TypeVar} \arrow \List{\TypeVar}$
 where $\Listhalf{\TypeVar} = \TypeVar \times \TypeVar \times \bullet \Listhalf{\TypeVar}$.

\subsection{Embedding  $\lambdab$ into the Type System by Nakano}
\label{section:nakano}

We consider the type system by Nakano of~\cite{Nakano00:lics} without $\top$
and call it $\lambdaNakano$.

Types in $\lambdaNakano$ are finite, written in  $\mu$-notation and denoted by
$\TypeN$, $\TypeNS$. 
Types  in $\lambdaNakano$ are {\em guarded} which means that
 the type variable $\TypeVar$ of  $\mu \TypeVar. \TypeN$ can occur free in $\TypeN$
 only under the scope of a $\bullet$.

Let $\nakteq$ the equivalence between types which allows for an infinite number of unfolding
and
$\ContextNak$ be a set $\{X_1 \leq Y_1, \ldots, X_n \leq Y_n \}$ of subtype assumptions
between  type variables which   cannot contain twice the same variable. 
The subtyping relation of 
$\lambdaNakano$ is  defined as the smallest relation on types closed under
the following rules:
\[
      \begin{array}{@{}c@{}}
         \inferrule[\defrule{$\leq \bullet$}]{}{
        \ContextNak \vdash  \TypeN \leq \bullet \TypeN
        }
        \quad
         \inferrule[\defrule{$\bullet \arrow$}]{}{
         \ContextNak \vdash \TypeN  \arrow \TypeNS\leq \bullet \TypeN \arrow \bullet \TypeNS
        }
        \quad
        \inferrule[\defrule{$\arrow \bullet$}]{}{
         \ContextNak \vdash 
         \bullet \TypeN \arrow \bullet \TypeNS \leq \tbullet (\TypeN  \arrow \TypeNS)
        }
        \quad 
         \inferrule[\defrule{var}]{}{
       \ContextNak, X \leq Y \vdash   X \leq  Y
        }
        \\ \\
        \inferrule[\defrule{$\nakteq$}]{  \TypeN \nakteq \TypeNS
        }{
       \ContextNak \vdash  \TypeN \leq  \TypeNS
        }   
        \quad
        \inferrule[\defrule{$\mu$}]{
        \ContextNak, X \leq Y  \vdash \TypeN \leq \TypeNS 
        }{
       \ContextNak \vdash  \mu X.  \TypeN \leq \mu Y.  \TypeNS
        }
        \quad
        \inferrule[\defrule{$\bullet$}]{
        \ContextNak  \vdash \TypeN \leq \TypeNS
        }{
       \ContextNak \vdash  \bullet \TypeN \leq \bullet \TypeNS
        }
        \quad
        \inferrule[\defrule{$\arrow$}]{
        \ContextNak_1 \vdash \TypeNS_1 \leq \TypeN_1 \quad \ContextNak_2 \vdash  \TypeN_2 \leq \TypeNS_2
        }{
        \ContextNak_1 \cup \ContextNak_2 \vdash \TypeN_1 \arrow \TypeN_2 \leq  \TypeNS_1 \arrow \TypeNS_2
        }
      \end{array}
\] 
In $\ContextNak \vdash \TypeN \leq \TypeNS $, 
 we assume that all the free type variables of $\TypeN$ and $\TypeNS$ are in $\ContextNak$ 
 and that 
the $X$'s cannot occur on the right hand side of $\leq$ and
the $Y$'s cannot occur on the left.
The typing rules of $\lambdaNakano$ are defined as follows.

\[
      \begin{array}{@{}c@{\qquad}c@{}}
        \inferrule[\defrule\axiom]{
        \varX: \TypeN \in \TypeContext
        }{
        \wte{\TypeContext}{\varX}{\TypeN}
        }
        &
         \inferrule[\defrule{$\leq$}]{
        \wte{\TypeContext}{\Expression}\TypeN
        \ \ \ \TypeN \leq \TypeN'}{
        \wte{\TypeContext}{\Expression}{\TypeN'}
        } 
        \\
\inferrule[\defrule{$\tbullet$ E}]{
         \wte{\bullet \TypeContext}{\Expression}{\bullet \TypeN}
        }{
         \wte{\TypeContext}{\Expression}{\TypeN}
        }
        \inferrule[\defrule\introarrow]{
        \wte{\TypeContext,\varX: \TypeN }{\Expression}{\TypeNS}
        }{
        \wte{\TypeContext}{\Fun \varX \Expression}{(\TypeN \arrow \TypeNS)}
        }
        &
        \inferrule[\defrule\elimarrow]{
          \wte{\TypeContext}{\Expression_1}{\tbullet[n](\TypeN \arrow \TypeNS)}
          \ \ \ 
          \wte{\TypeContext}{\Expression_2}{\tbullet[n]\TypeN}
        }{
        \wte{\TypeContext}{\Expression_1\Expression_2}{\tbullet[n]\TypeNS}
        }
      \end{array}
  \]

We define the function $\munf$ on guarded $\mu$-types as follows.
\[
\begin{array}{ll}
\begin{array}{r@{}l}
\Munf{\TypeVar} & =   \TypeVar \\
\Munf{\bullet \TypeN} & = \bullet \TypeN 
\end{array}
&
\begin{array}{r@{}l}
\Munf{\TypeN \arrow \TypeNS} & =  \Munf{\TypeN} \arrow\Munf{\TypeN \TypeNS} \\
\Munf{\mu \TypeVar. \TypeN} & =  \Munf{\TypeN} \singleTS{\TypeVar}{\TypeN}
\end{array}
\end{array}
\]

We  define the translation $\fromtypeNak$ from $\mu$-types to 
types in $\lambdab$ by coinduction as follows:
\[
\begin{array}{ll}
\toB{\TypeN} =  \left \{
\begin{array}{ll}
\toB{\TypeVar} & \mbox{ if } \Munf{\TypeN}= \TypeVar \\
\bullet \toB{\TypeNS} & \mbox{ if } \Munf{\TypeN}= \bullet \TypeNS\\
 \toB{\TypeN_1} \arrow \toB{\TypeN_2}& \mbox{ if } \Munf{\TypeN}= \TypeN_1 \arrow \TypeN_2 \\
\end{array} \right .
\end{array}
\]
The translation $\toB{\ContextNak}$ is extended to typing contexts in the obvious way.

\begin{lemma}[Embedding $\lambdab$ into  $\lambdaNakano$]
\label{lemma:nakano}
If  $\wte{\toB{\TypeContext}}{\Expression}{\toB{\TypeN}}$ in $\lambdab$
without using rule \refrule{\const} 
then 
$\wte{\TypeContext}{\Expression}{\TypeN}$ in $\lambdaNakano$.
\end{lemma}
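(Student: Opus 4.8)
The plan is to argue by induction on the derivation of $\wte{\toB{\TypeContext}}{\Expression}{\toB{\TypeN}}$ in $\lambdab$, translating each $\lambdab$ rule into a combination of $\lambdaNakano$ rules. The guiding observation is that the two features of $\lambdab$ that $\lambdaNakano$ lacks, namely the silent rule $\refrule{\introbullet}$ and the freedom to wrap an arbitrary delay $\tbullet[n]$ around the arrow in $\refrule{\introarrow}$, are both recoverable in $\lambdaNakano$ through its subtyping judgement followed by the subsumption rule, whereas $\refrule{\elimarrow}$ already coincides in the two systems.

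First I would record the elementary properties of $\toB{\cdot}$. As $\toB{\cdot}$ yields the infinite unfolding of a $\mu$-type, it commutes with the type constructors ($\toB{\bullet^{n}\TypeNS} \teq \bullet^{n}\toB{\TypeNS}$ and $\toB{\TypeNS_1 \arrow \TypeNS_2} \teq \toB{\TypeNS_1} \arrow \toB{\TypeNS_2}$) and satisfies $\toB{\TypeNS_1} \teq \toB{\TypeNS_2}$ if and only if $\TypeNS_1 \nakteq \TypeNS_2$. Hence any $\lambdab$-type occurring as a top-level type of a subderivation, as long as it lies in the image of $\toB{\cdot}$, can be read back as a $\mu$-type uniquely up to $\nakteq$; this is what lets me invoke the subtyping rule $\TypeN \nakteq \TypeNS \Rightarrow \TypeN \leq \TypeNS$ every time the $\lambdab$ derivation uses the strong equality $\teq$. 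I would also prove the one subtyping lemma that drives the interesting case: $\bullet^{n}\TypeN_1 \arrow \bullet^{n}\TypeN_2 \leq \bullet^{n}(\TypeN_1 \arrow \TypeN_2)$, by induction on $n$ from the rule $\bullet\TypeN \arrow \bullet\TypeNS \leq \bullet(\TypeN \arrow \TypeNS)$ together with the congruence rule for $\bullet$; this step uses transitivity of $\leq$, which for $\lambdaNakano$ is available from Nakano's development.

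With these in place the induction is routine. For $\refrule{\axiom}$, $\varX : \toB{\TypeN} \in \toB{\TypeContext}$ gives some $\varX : \TypeNS \in \TypeContext$ with $\TypeNS \nakteq \TypeN$, and Nakano's axiom followed by subsumption along $\nakteq$ yields $\wte{\TypeContext}{\varX}{\TypeN}$. For $\refrule{\introbullet}$, the premise has a type $\Type'$ with $\toB{\TypeN} \teq \bullet\Type'$; reading $\Type' \teq \toB{\TypeNS}$ back and applying the induction hypothesis gives $\wte{\TypeContext}{\Expression}{\TypeNS}$, and then the rule $\TypeNS \leq \bullet\TypeNS$ plus subsumption (and $\bullet\TypeNS \nakteq \TypeN$) produces $\wte{\TypeContext}{\Expression}{\TypeN}$. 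For $\refrule{\introarrow}$ with delay $n$ the argument type already occurs in the conclusion $\toB{\TypeN} \teq \bullet^{n}(\Type_1 \arrow \Type_2)$, so it reads back as $\bullet^{n}\TypeN_1$; the induction hypothesis gives $\wte{\TypeContext, \varX : \bullet^{n}\TypeN_1}{\Expression}{\bullet^{n}\TypeN_2}$, Nakano's $\refrule{\introarrow}$ produces the type $\bullet^{n}\TypeN_1 \arrow \bullet^{n}\TypeN_2$, and the subtyping lemma plus subsumption give $\wte{\TypeContext}{\Fun\varX\Expression}{\bullet^{n}(\TypeN_1 \arrow \TypeN_2)}$, whence the claim since $\bullet^{n}(\TypeN_1 \arrow \TypeN_2) \nakteq \TypeN$. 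For $\refrule{\elimarrow}$ with delay $n$, the induction hypothesis on both premises gives $\wte{\TypeContext}{\Expression_1}{\bullet^{n}(\TypeN_1 \arrow \TypeN_2)}$ and $\wte{\TypeContext}{\Expression_2}{\bullet^{n}\TypeN_1}$, and Nakano's $\refrule{\elimarrow}$, which is literally the $\lambdab$ rule, concludes $\wte{\TypeContext}{\Expression_1 \Expression_2}{\bullet^{n}\TypeN_2}$, whence the claim since $\bullet^{n}\TypeN_2 \nakteq \TypeN$.

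The one delicate point, and the step I expect to be the real obstacle, is the argument type $\Type_1$ in $\refrule{\elimarrow}$: unlike in $\refrule{\introarrow}$ it does not appear in the conclusion, so it is not a priori in the image of $\toB{\cdot}$ (a const-free derivation can still introduce, and later eliminate, a bound variable whose type uses $\times$ or $\nat$). To read $\Type_1$ back I would use that every type of $\lambdab$ is regular and guarded (\cref{def:types}) and is therefore the infinite unfolding of a guarded $\mu$-type by the equirecursive/$\mu$ correspondence recalled after \cref{def:types}; this gives $\Type_1 \teq \toB{\TypeN_1}$ provided the $\mu$-syntax and $\toB{\cdot}$ cover every constructor occurring in $\Type_1$. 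I would therefore either confine the statement to the arrow fragment, where no other constructors can occur, or extend $\toB{\cdot}$ and $\lambdaNakano$ with the evident clauses for $\times$ and $\nat$; after that choice the remaining work is just matching the delays and keeping the $\teq$ and $\nakteq$ translations in step, which is straightforward.
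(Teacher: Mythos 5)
Your proposal follows essentially the same route as the paper's proof, which consists of exactly the observation you elaborate: induction on the derivation, with \refrule{\elimarrow} transferring verbatim and the two $\lambdab$-specific rules \refrule{\introbullet} and \refrule{\introarrow} made admissible in $\lambdaNakano$ via the subtyping facts $\TypeN \leq \bullet\TypeN$ and $\bullet^{n}\TypeN_1 \arrow \bullet^{n}\TypeN_2 \leq \bullet^{n}(\TypeN_1 \arrow \TypeN_2)$ followed by the subsumption rule \refrule{$\leq$}. The ``delicate point'' you isolate is real and is silently skipped by the paper's one-line proof: in the \refrule{\elimarrow} case the argument type need not lie in the image of $\toB{\cdot}$ (even a const-free derivation can assign types containing $\times$ or $\nat$ to bound variables), so the induction hypothesis cannot be invoked as stated, and a repair of the kind you propose---extending $\toB{\cdot}$ and $\lambdaNakano$'s type syntax with $\times$ and $\nat$, or restricting to the arrow fragment---is indeed needed.
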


\begin{proof} 
It follows by induction on the derivation. It is easy to see that
the typing rules $\refrule{\introarrow}$ and $\refrule{\introbullet}$
of $\lambdab$ are admissible in $\lambdaNakano$ by making use of the 
typing rule \refrule{$\leq$}.
\end{proof}

Programs in $\lambdaNakano$ have more types than in $\lambdab$. For example,
$\vdash \lambda x.x: (\TypeN \arrow \TypeNS) \arrow (\bullet \TypeN \arrow \bullet \TypeNS) $
in $\lambdaNakano$ but
$\not \vdash \lambda x.x: (\TypeN \arrow \TypeNS) \arrow (\bullet \TypeN \arrow \bullet \TypeNS) $
in $\lambdab$.

\section{Denotational Semantics}
\label{section:semantics}

This section gives a denotational  semantics
for $\lambdab$ where types and expressions are interpreted as 
objects and morphisms in the {\em  topos  of trees} \cite{LMCSBirkedaletal}.
We give a self-contained description of this topos 
as a cartesian closed category for a reader familiar with $\lambda$-calculus.

The {\em topos $\topos$ of trees}  has as objects $A$
families of sets $A_1, A_2, \ldots$ indexed by
positive integers, equipped with family of restrictions
$r^{A}_i : A_{i+1} \rightarrow A_i$. 
Types will be interpreted as family of sets (not just sets).
Intuitively the family represents
better and better  sets of approximants for the values of that type.
Arrows $f: A \rightarrow B$
are families of functions $f_i : A_i \rightarrow B_i$
obeying the naturality condition $f_i \circ r_i^{A} =
r_i^{B} \circ f_{i+1}$. 
\[
 \xymatrix@R=1.5em{
  A_1 \ar[d]_{f_1}  & \ar[l]_{r_1^A}  A_2 \ar[d]_{f_2} &  
  \ar[l]_{r_2^A}   A_3   \ar[d]_{f_3}  & \ldots  \\
  B_1 & \ar[l]^{r_1^B}  B_2   & \ar[l]^{r_2^B}  B_3 &  \ldots  }
  \]  
We define $r_{ii}^A = id_A $  and 
 $r_{i j}^{A} = r_{j} \circ \ldots \circ r_{i-1}$
 for $1 \leq j <i$.
Products are defined pointwise.

Exponentials $\expon{A}{B}$  have as components 
the sets: 
\[
(\expon{A}{B})_i   = \{ (f_1, \ldots, f_i) \mid f_j : A_j \rightarrow B_j
\mbox{ and } f_j \circ r_j^{A} =
r_j^{B} \circ f_{j+1}  \}
\]
and as 
restrictions  
$r^{A\Rightarrow B}_i (f_1, \ldots f_{i+1})= (f_1, \ldots, f_i)$.

We define  $\eval : \expon{A}{B} \times A \rightarrow B$ as 
$
\eval_{i} ((f_1, \ldots, f_i), a) = f_i (a)
$ and 
$\curry(f) : C \rightarrow \expon{A}{B}$ 
for $f : C \times A \rightarrow B$
as  
$
\curry (f)_i (c) = (g_1, \ldots, g_i)
$ 
where
$g_j(a) = f_j (r_{ij}^A (c), a)$ for all $a \in A_j$
and $1 \leq j \leq i$.
The functor $\Next : \topos \rightarrow \topos$
is defined on objects as $(\Next A)_1 =\{* \} $ and $(\Next A)_{i+1} = A_i$
where $r_1^{\Next A} = !$ and $r_{i+1}^{\Next A} =r_{i}^{ A}$ 
  and on arrows
  $(\Next f)_1 =id_{\{*\}} $ and $(\Next f)_{i+1} = f_i$.
  We write $\Next^n$ for the $n$-times composition of $\Next$.

  The natural transformation $\next_A : A \rightarrow \Next A $ 
  is given by  $(\next_A)_1 = !$ and $(\next_A)_{i+1} = r_{i}^A$
  which can easily be extended to a natural transformation 
  $\nextn{n}_A : A \rightarrow \Next^{n}  A $ .
It is not difficult to see that
 there are isomorphisms 
 $\isomprod  : \Next A \times  \Next B \rightarrow  \Next(A \times B)$
and 
$ \isomarrow:  \expon{(\Next A)}{(\Next B)} \rightarrow \Next (\expon{A}{B})$
 which are also natural.
 These can also be easily extended to isomorphisms 
$\isomprodn{n}  : \Next^n A \times  \Next^n B \rightarrow  \Next^n (A \times B)$
and 
$ \isomarrown{n}:  \expon{(\Next^n A)}{(\Next^n  B )} \rightarrow \Next^n  ( \expon{A}{B})$.

The category $\setcat$ is a full subcategory of $\topos$ via the functor
$\incFunctor$ with $(\IncFunctor{X})_i = X$ and $r^{\IncFunctor{X}} = id_{X}$
and $(\IncFunctor{f})_i = f$.
The unit $1$ of $\topos$ is $\IncFunctor {*}$.

A type $\Type$ is interpreted as an object in $\topos$
(for simplicity, we omit type variables).

\[
\begin{array}{ll}
\begin{array}{llllll}
\sem{\nat} & = & \IncFunctor{\nat} 
\\
\sem{\Type \times \TypeS} & = & 
\sem{\Type}\times \sem{\TypeS} 
\end{array}
&  \ \ \ \ \ \ 
\begin{array}{llllll}
 \sem{\bullet \Type} & = & \Next \circ \sem{\Type} 
 \\
\sem{\Type \arrow \TypeS}& = & 
\expon{\sem{\Type}}{\sem{\TypeS}}    
\end{array}
\end{array}
\]
In order to justify that 
the interpretation is well-defined,
it is necessary to view the above definition
as indexed sets
and we do 
induction  on  $(i, \typerank(\Type))$
taking the lexicographic order
where 
$\typerank (\Type)$ is defined in \cref{section:normalization}.
By writing the indices
explicitly in the definition of $\sem{\Type}$, we obtain:  
\[\begin{array}{llllll}
\sem{\nat}_i & = &  \nat 
\\
\sem{\Type \times \TypeS}_i  & = & 
\sem{\Type}_i \times \sem{\TypeS}_i \\
\sem{\Type \arrow \TypeS}_i & = & 
\{ (f_1, \ldots, f_i) \mid f_j : \sem{\Type}_j \rightarrow \sem{\TypeS}_j
\mbox{ and } f_j \circ r_j^{\sem{\Type}} =
r_j^{\sem{\TypeS}} \circ f_{j+1}  \}
\\
\sem{\bullet \Type}_{1} & = &
\{ * \} 
\\
 \sem{\bullet \Type}_{i+1}
  & =  & \sem{\Type}_{i}  
 \end{array}
  \]

As it is common in categorical semantics,  the interpretation  is not defined
on lambda terms in isolation but on typing judgements. In order to define terms as morphisms,  we need the context and the type to specify their domain and co-domain.
Typing contexts $\TypeContext =  \varX_1: \Type_1, \ldots \varX_k : \Type_k$
are interpreted as 
$\sem{\Type_1} \times \ldots \times \sem{\Type_k}$.
The interpretation of typed expressions 
$\sem{\TypeContext \vdash \Expression: \Type}: \sem{\TypeContext} \rightarrow \sem{\Type}$
is defined by induction on $\Expression$ (using  Inversion Lemma):

\[
\begin{array}{lll}
\sem{\TypeContext \vdash \varX: \bullet^{n} \Type} &  = &
\nextn{n} \circ \pi_j \mbox{ if $x = x_j$ and $\Type_j = \Type$ and
 $1 \leq j \leq k$}
\\
\sem{\TypeContext \vdash \Fun \varX \Expression:
\bullet^{n} ( \Type_1 \rightarrow \Type_2) } & = &
\isomarrown{n} \circ \curry (\sem{\TypeContext, \varX: \bullet^n \Type_1 \vdash  
\Expression : \bullet^n \Type_2}) 
\\
\sem{\TypeContext \vdash \Expression_1 \Expression_2  : 
\bullet^{n} \TypeS} & = &
\eval \circ \langle 
(\isomarrown{n})^{-1} \circ 
\sem{\TypeContext \vdash \Expression_1 : 
\bullet^n ( \Type \rightarrow \TypeS) }, 
\sem{\TypeContext \vdash \Expression_2 : \bullet^n  \Type} \rangle 
\\
\sem{\TypeContext \vdash \pair:  (\Type \arrow \TypeS \arrow 
 (\Type \times \TypeS)) } & = &
\curry (\curry (!_{\sem{\TypeContext}} \times id_{\sem{\Type \times \TypeS}})) 
\\
\sem{\TypeContext \vdash \fst  : 
 (\Type \times \TypeS \arrow \Type) } & = & \curry(!_{\sem{\TypeContext}} \times \pi_1)
\\
\sem{\TypeContext \vdash \snd  : 
(\Type \times \TypeS \arrow \TypeS) } & = &
 \curry(!_{\sem{\TypeContext}} \times\pi_2)
\\
\sem{\TypeContext \vdash \zero:  \nat } & = & \incFunctor ( * \mapsto 0)
\\
\sem{\TypeContext \vdash \succesor:  (\nat \arrow \nat)} & = & succ
\end{array}
\]

\begin{lemma} [Semantic Substitution]
\label{lemma:semanticsubstitution}
\[
\sem{\TypeContext, \varX: \Type
 \vdash {\Expression_1} : 
 \TypeS }
\circ \langle id, 
\sem{\TypeContext \vdash \Expression_2: \Type}
 \rangle =
\sem {\TypeContext
 \vdash {\Expression_1} \subst{\Expression_2}{\varX}: 
  \TypeS }
\]
\end{lemma}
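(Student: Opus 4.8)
The plan is to proceed by structural induction on $\Expression_1$, mirroring the clauses that define the interpretation and invoking the Inversion Lemma (\cref{lem:inv}) at each step to fix the shape of the type: whatever $\Expression_1$ is, the judgement $\wte{\TypeContext,\varX:\Type}{\Expression_1}{\TypeS}$ forces $\TypeS = \bullet^n\TypeS_0$ for a suitable head type $\TypeS_0$. Throughout I abbreviate the substitution morphism as $\sigma = \langle id, \sem{\TypeContext \vdash \Expression_2 : \Type}\rangle : \sem{\TypeContext} \to \sem{\TypeContext}\times\sem{\Type} = \sem{\TypeContext,\varX:\Type}$, so that the goal becomes $\sem{\TypeContext,\varX:\Type \vdash \Expression_1 : \TypeS}\circ\sigma = \sem{\TypeContext \vdash \Expression_1\subst{\Expression_2}{\varX}:\TypeS}$.

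Two auxiliary facts will be used repeatedly and are best established first. The first is a \emph{semantic delay} identity, $\sem{\TypeContext \vdash \Expression : \bullet^n\Type} = \nextn{n}\circ\sem{\TypeContext \vdash \Expression:\Type}$, which follows directly from the clauses $\sem{\bullet\Type}_1 = \{*\}$ and $\sem{\bullet\Type}_{i+1} = \sem{\Type}_i$ together with the definition of $\next$, iterated $n$ times. The second is a \emph{semantic weakening} identity: adjoining an unused hypothesis to the context precomposes the interpretation with the evident projection. Both are routine inductions, and weakening is exactly what lets one transport the induction hypothesis to extended contexts in the abstraction case.

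The base cases are then short. If $\Expression_1 = \varX$, then $\Expression_1\subst{\Expression_2}{\varX} = \Expression_2$ and $\sem{\TypeContext,\varX:\Type\vdash\varX:\bullet^n\Type} = \nextn{n}\circ\pi_{k+1}$, so $\nextn{n}\circ\pi_{k+1}\circ\sigma = \nextn{n}\circ\sem{\TypeContext\vdash\Expression_2:\Type}$, which equals $\sem{\TypeContext\vdash\Expression_2:\bullet^n\Type}$ by the semantic delay identity. If $\Expression_1 = \varY\neq\varX$, the substitution acts trivially and $\pi_j\circ\sigma = \pi_j$ for the relevant projection, giving the claim. If $\Expression_1 = \Constant$, then $\Constant\subst{\Expression_2}{\varX}=\Constant$ and each constant clause factors through the unique map to the terminal object; since $!_{\sem{\TypeContext,\varX:\Type}}\circ\sigma = !_{\sem{\TypeContext}}$ by uniqueness, the two interpretations coincide.

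The inductive cases are where the categorical combinators carry the argument. For an application $\Expression_1 = \Expression_a\Expression_b$ the interpretation is $\eval\circ\langle(\isomarrown{n})^{-1}\circ\sem{\Expression_a}, \sem{\Expression_b}\rangle$; precomposing with $\sigma$ distributes over the pairing ($\langle f,g\rangle\circ\sigma = \langle f\circ\sigma, g\circ\sigma\rangle$) while the fixed maps $\eval$ and $(\isomarrown{n})^{-1}$ pass through by associativity, so the induction hypotheses for $\Expression_a$ and $\Expression_b$ close the case. For an abstraction $\Expression_1 = \Fun\varY\Expression'$ the interpretation is $\isomarrown{n}\circ\curry(\sem{\TypeContext,\varX:\Type,\varY:\bullet^n\Type_1\vdash\Expression':\bullet^n\Type_2})$, and the key computation uses naturality of $\curry$ ($\curry(g)\circ\sigma = \curry(g\circ(\sigma\times id))$) to push $\sigma$ inside, after which the induction hypothesis for $\Expression'$ in the context extended by $\varY$ applies — this is precisely where semantic weakening is needed to identify $\sigma\times id$ with the substitution morphism of the larger context. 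I expect the abstraction case, and more generally the bookkeeping of the $\bullet^n$ prefixes through the isomorphisms $\isomarrown{n}$ and the transformations $\nextn{n}$, to be the main obstacle: one must verify that these combinators commute with $\sigma$ and with the context extension in exactly the right way, which is the only genuinely non-mechanical part of the verification.
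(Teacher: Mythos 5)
Your proposal is correct and takes essentially the same route as the paper: structural induction on $\Expression_1$, with the variable case computed exactly as the paper does it — project onto the last component, compose with $\langle id, \sem{\TypeContext \vdash \Expression_2 : \Type}\rangle$, and absorb the delay via the identity $\nextn{n} \circ \sem{\TypeContext \vdash \Expression_2 : \Type} = \sem{\TypeContext \vdash \Expression_2 : \tbullet[n]\Type}$, which the paper also uses (silently). The paper only sketches that single case, so your elaboration of the application case (composition distributing over pairing) and the abstraction case (naturality of $\curry$ plus semantic weakening to transport the induction hypothesis through the extended context) simply fills in details the paper leaves implicit.
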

 
 \begin{proof} This follows by induction on $\Expression_1$.
 We only sketch  the proof for the case $\Expression_1 = \varX$.
\[
\begin{array}{ll}
\sem{\TypeContext, \varX: \Type
 \vdash {\varX} : 
\bullet^{n}  \Type } 
\circ \langle id, 
\sem{\TypeContext \vdash \Expression_2:  \Type}
 \rangle & =
 \nextn{n} \circ \sem{\TypeContext \vdash \Expression_2:   \Type} 
  =  
\sem{\TypeContext \vdash \Expression_2: \bullet^{n}  \Type}
\mbox{}
\end{array}
\tag*{\qedhere}
\]
 \end{proof}

\begin{theorem}[Soundness]
\label{theorem:soundnesssemantics}
If $\wte{\TypeContext}{\Expression}{\Type}$
and $\Expression \red \Expression'$ then
$\sem{\wte{\TypeContext}{\Expression}{\Type}}=
\sem{\wte{\TypeContext}{\Expression'}{\Type}}$.
\end{theorem}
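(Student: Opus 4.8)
The plan is to proceed by induction on the derivation of $\Expression \red \Expression'$, that is, by cases on the reduction rule applied last, mirroring the structure of the proof of subject reduction (\cref{theorem:subjectreduction}). The three base cases are the contraction rules \refrule{r-beta}, \refrule{r-first} and \refrule{r-second}, and the single inductive case is the congruence rule \refrule{r-ctxt}.

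For \refrule{r-beta} I would first apply the Inversion Lemma (\cref{lem:inv}) to the hypothesis $\wte{\TypeContext}{(\Fun\varX\Expression_1)\Expression_2}{\Type}$, obtaining $\Type = \tbullet[n]\Type_2$ together with $\wte{\TypeContext}{\Expression_2}{\tbullet[n]\Type_1}$ and $\wte{\TypeContext,\varX:\tbullet[n]\Type_1}{\Expression_1}{\tbullet[n]\Type_2}$. Unfolding the semantic clauses for application and abstraction, the isomorphism $\isomarrown{n}$ produced by the abstraction clause cancels against the $(\isomarrown{n})^{-1}$ in the application clause, leaving
\[
\sem{\wte{\TypeContext}{(\Fun\varX\Expression_1)\Expression_2}{\tbullet[n]\Type_2}}
= \eval \circ \langle \curry(\sem{\wte{\TypeContext,\varX:\tbullet[n]\Type_1}{\Expression_1}{\tbullet[n]\Type_2}}),\ \sem{\wte{\TypeContext}{\Expression_2}{\tbullet[n]\Type_1}} \rangle.
\]
By the standard $\beta$-law of a cartesian closed category, $\eval \circ \langle \curry(g), h\rangle = g \circ \langle id, h\rangle$, so the right-hand side equals $\sem{\Expression_1} \circ \langle id, \sem{\Expression_2}\rangle$ (writing $\sem{\Expression_i}$ for the fully-annotated judgements), which is exactly the left-hand side of the Semantic Substitution Lemma (\cref{lemma:semanticsubstitution}) instantiated with the types $\tbullet[n]\Type_1$ and $\tbullet[n]\Type_2$. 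That lemma then delivers $\sem{\wte{\TypeContext}{\Expression_1\subst{\Expression_2}\varX}{\tbullet[n]\Type_2}}$, as required. For \refrule{r-first} and \refrule{r-second} I would proceed analogously: invoke Inversion, unfold the clauses for $\fst$/$\snd$ and $\pair$, and use that products are computed pointwise, so the projection equations $\pi_1 \circ \langle g, h\rangle = g$ hold componentwise; these two cases are routine.

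The inductive case \refrule{r-ctxt} is where I expect the real work to lie. Here $\Context[\Expression] \red \Context[\ExpressionF]$ with $\Expression \red \ExpressionF$, and I would establish a compositionality lemma: for every evaluation context $\Context$ and every typing $\wte{\TypeContext}{\Context[\Expression]}{\Type}$ there is a type $\TypeS$ with $\wte{\TypeContext}{\Expression}{\TypeS}$ and a morphism $h_{\Context}$ depending only on $\Context$ (and the surrounding typing data), not on $\Expression$, such that $\sem{\wte{\TypeContext}{\Context[\Expression]}{\Type}} = h_{\Context}\circ\langle id, \sem{\wte{\TypeContext}{\Expression}{\TypeS}}\rangle$ for the application context, and the simpler composite $h_{\Context}\circ \sem{\wte{\TypeContext}{\Expression}{\TypeS}}$ for the unary context formers $\Fst\Context$, $\Snd\Context$, $\Suc\Context$. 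By subject reduction the reduct $\ExpressionF$ is typable at the same $\TypeS$, and the induction hypothesis gives $\sem{\wte{\TypeContext}{\Expression}{\TypeS}} = \sem{\wte{\TypeContext}{\ExpressionF}{\TypeS}}$; substituting this into the factorization yields $\sem{\Context[\Expression]} = \sem{\Context[\ExpressionF]}$.

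The main obstacle is the factorization for the application context $({\Context}~\Expression)$, where the $\tbullet[n]$-delays force one to track how $\isomarrown{n}$ and $\nextn{n}$ interact with the pairing used in the application clause, and to verify that the coercion datum $h_{\Context}$ is genuinely independent of the subterm being reduced. A secondary delicacy is that the interpretation is defined on typing judgements via Inversion, so one must check that the intermediate type $\TypeS$ at which the active subterm is interpreted is unchanged by reduction — which is precisely what \cref{theorem:subjectreduction} supplies.
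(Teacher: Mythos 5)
Your proposal is correct and coincides with the paper's proof: the paper presents only the \refrule{r-beta} case, and it is exactly your argument --- cancellation of $\isomarrown{n}$ against $(\isomarrown{n})^{-1}$, the cartesian-closed law $\eval \circ \langle \curry(v_1), v_2\rangle = v_1 \circ \langle id, v_2\rangle$, and then the Semantic Substitution Lemma (\cref{lemma:semanticsubstitution}). The remaining cases (the projection rules and \refrule{r-ctxt} via compositionality of the interpretation together with \cref{theorem:subjectreduction}) are left implicit in the paper, and your sketch of them is the natural completion.
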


\begin{proof}
We show  the case of $\defrule{r-beta}$.
Let 
$v_1 = \sem{\TypeContext, \varX: \bullet^n \Type_1
 \vdash {\Expression_1} : 
\bullet^n  \Type_2}$
 and 
$v_2  = \sem{\TypeContext \vdash \Expression_2: \bullet^n  \Type_1}$.
\[
\begin{array}[b]{lll}
\sem{\TypeContext \vdash (\Fun{\varX}{\Expression_1})\Expression_2  : 
\bullet^{n} \Type_2} & = &
\eval \circ \langle 
(\isomarrown{n})^{-1} \circ \isomarrown{n}
 \circ \curry (v_1) , v_2 \rangle 
 =  
v_1 \circ \langle id, v_2 \rangle 
\\
&  =  &
\sem {\TypeContext
 \vdash {\Expression_1} \subst{\Expression_2}{\varX}: 
\bullet^n  \Type_2
} 
\mbox{ by Lemma \ref{lemma:semanticsubstitution}}
\end{array}
\tag*{\qedhere}
\]
\end{proof}

%
%
%

We consider contexts 
 $\PContext$  without any restriction in the position of the hole $[]$.
We say that $\PContext$ is a {\em closing context of type $\Type$} if  
$\wte{x:\TypeS}{\PContext[x]}{\Type}$ for some variable $x$ that does not occur in
$\PContext$.

\begin{definition} Let $\wte{\TypeContext}{\Expression_1}{\Type}$
and $\wte{\TypeContext}{\Expression_2}{\Type}$.
We say that $\Expression_1 $ and $ \Expression_2$ 
 are {\em observationally equivalent}, denoted by 
 $\Expression_1 \oeqctx \Expression_2$, if 
 $\PContext[\Expression_1] \red^* \succesor^{n}~\zero$ 
iff  
 $\PContext[\Expression_2] \red^* \succesor^{n}~\zero$ 
 for all closing contexts
$\PContext$ of type $\nat$.
\end{definition}

\begin{corollary}
If $\sem{\wte{\TypeContext}{\Expression_1}{\Type}}= 
\sem{\wte{\TypeContext}{\Expression_2}{\Type}}$
then, 
$\Expression_1 \oeqctx \Expression_2$.
\end{corollary}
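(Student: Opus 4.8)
The plan is to reduce the statement to two standard ingredients: \emph{compositionality} of the interpretation (plugging denotationally equal terms into a common context yields denotationally equal terms) together with \emph{adequacy at type $\nat$} (the denotation of a closed term of type $\nat$ determines the numeral it reduces to). Since the conclusion $\Expression_1 \oeqctx \Expression_2$ is symmetric in $\Expression_1$ and $\Expression_2$, it suffices to establish one direction of the defining biconditional, say that $\PContext[\Expression_1] \red^* \succesor^{n}~\zero$ implies $\PContext[\Expression_2] \red^* \succesor^{n}~\zero$ for every closing context $\PContext$ of type $\nat$.

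For compositionality I would exploit that a closing context is, up to a choice of fresh variable $x$, an ordinary substitution: $\PContext[\Expression_i] = \PContext[x]\subst{\Expression_i}{x}$, where $\wte{x:\Type}{\PContext[x]}{\nat}$. Applying the Semantic Substitution Lemma (\cref{lemma:semanticsubstitution}) with empty outer context gives $\sem{\PContext[\Expression_i]} = \sem{\PContext[x]} \circ \langle id, \sem{\Expression_i} \rangle$ for $i=1,2$. Because the hypothesis of the corollary is precisely $\sem{\wte{\TypeContext}{\Expression_1}{\Type}} = \sem{\wte{\TypeContext}{\Expression_2}{\Type}}$, the two right-hand sides coincide, so $\sem{\PContext[\Expression_1]} = \sem{\PContext[\Expression_2]}$. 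The one point to check here is that holes may occur under binders or inside pair components; this is already absorbed by the induction on the plugged expression used to prove \cref{lemma:semanticsubstitution}, so no separate congruence argument is needed.

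With compositionality in place I would run the forward direction as follows. Assume $\PContext[\Expression_1] \red^* \succesor^{n}~\zero$. By Soundness of the denotational semantics (\cref{theorem:soundnesssemantics}) reduction preserves denotations, so $\sem{\PContext[\Expression_1]} = \sem{\succesor^{n}~\zero}$, and hence $\sem{\PContext[\Expression_2]} = \sem{\succesor^{n}~\zero}$. It remains to recover a reduction from this denotational identity, which is the adequacy step. Since $\PContext[\Expression_2]$ has type $\nat \not= \infinite$, Normalisation (\cref{theorem:weakheadnormalization}) guarantees it reduces to a weak head normal form, which is again closed of type $\nat$ by Subject Reduction (\cref{theorem:subjectreduction}). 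By the Inversion Lemma (\cref{lem:inv}) the only closed normal forms of type $\nat$ are the numerals $\succesor^{m}~\zero$, so $\PContext[\Expression_2] \red^* \succesor^{m}~\zero$ for some $m$. Applying Soundness once more yields $\sem{\succesor^{m}~\zero} = \sem{\succesor^{n}~\zero}$, and since $\sem{\nat} = \IncFunctor{\nat}$ interprets distinct numerals by distinct global elements, I can conclude $m = n$, as required.

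The hard part is the adequacy step, i.e.\ turning $\sem{\PContext[\Expression_2]} = \sem{\succesor^{n}~\zero}$ back into a reduction. In general this is where a logical-relations argument would be needed, but here it is already discharged by the earlier normalisation theorem: the only genuine work is the classification of closed $\nat$-normal forms as numerals and the observation that $\sem{\cdot}$ is injective on numerals. Everything else is routine bookkeeping with Soundness and Subject Reduction, so the proof is essentially the combination of \cref{lemma:semanticsubstitution}, \cref{theorem:soundnesssemantics}, \cref{theorem:weakheadnormalization}, and \cref{theorem:subjectreduction}.
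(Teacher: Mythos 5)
Your proof has the same skeleton as the paper's: compositionality gives $\sem{\wte{}{\PContext[\Expression_1]}{\nat}}=\sem{\wte{}{\PContext[\Expression_2]}{\nat}}$; then Soundness (\cref{theorem:soundnesssemantics}), Normalisation (\cref{theorem:weakheadnormalization}), Subject Reduction (\cref{theorem:subjectreduction}) and the classification of closed normal forms of type $\nat$ as numerals yield $\PContext[\Expression_2]\red^*\succesor^m~\zero$; finally injectivity of the interpretation on numerals gives $m=n$. Your adequacy step is in fact a more careful rendering of what the paper merely asserts (the paper simply states that $\PContext[\Expression_2]$ reduces to a numeral and that $m=n$ ``is not difficult to show''), and that part is fine.

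The gap is in your compositionality step. You reduce context plugging to substitution via the identity $\PContext[\Expression_i]=\PContext[x]\subst{\Expression_i}{x}$ and then apply \cref{lemma:semanticsubstitution} with empty outer context. This is only sound when $\TypeContext=\EmptyTypeContext$. The corollary is stated for arbitrary $\TypeContext$, and a closing context for an open expression must \emph{capture} the free variables of $\Expression_i$ --- that is what ``closing'' means; plugging a hole is literal replacement, not substitution. Capture-avoiding substitution renames the binders of $\PContext[x]$ away from the free variables of $\Expression_i$, so the identity fails: take $\TypeContext=y:\nat$, $\Expression_1=y$ and $\PContext=(\Fun{y}{\Hole})~\zero$; then $\PContext[\Expression_1]=(\Fun{y}{y})~\zero$, which is closed and reduces to $\zero$, whereas $\PContext[x]\subst{\Expression_1}{x}=(\Fun{z}{y})~\zero$, which is open. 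Your remark that holes under binders are ``absorbed by the induction used to prove \cref{lemma:semanticsubstitution}'' is exactly backwards: that induction concerns capture-avoiding substitution and cannot repair a false syntactic identity. Moreover, for $\TypeContext\neq\EmptyTypeContext$ the morphism $\sem{\wte{}{\Expression_i}{\Type}}$ you compose with does not even exist (the $\Expression_i$ are open), and the hypothesis only equates the morphisms $\sem{\wte{\TypeContext}{\Expression_i}{\Type}}:\sem{\TypeContext}\rightarrow\sem{\Type}$. What is needed --- and what the paper's one-line appeal to ``compositionality of the denotational semantics'' stands for --- is a congruence lemma proved by induction on the structure of $\PContext$: if $\sem{\wte{\TypeContext}{\Expression_1}{\Type}}=\sem{\wte{\TypeContext}{\Expression_2}{\Type}}$ then $\sem{\wte{}{\PContext[\Expression_1]}{\nat}}=\sem{\wte{}{\PContext[\Expression_2]}{\nat}}$, the binders crossed by the hole being handled by extending the semantic environment. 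Your argument is correct as written only in the special case $\TypeContext=\EmptyTypeContext$.
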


\begin{proof}
By compositionality of the denotational semantics,
$\sem{\wte{}{\PContext[\Expression_1]}{\nat}}= 
\sem{\wte{}{\PContext[\Expression_2]}{\nat}}$.
Suppose $\PContext[\Expression_1] \red^* \succesor^{n}~\zero$.
By \cref{theorem:soundnesssemantics}, 
$\sem{\wte{}{\PContext[\Expression_1]}{\nat}}= 
\sem{\succesor^{n}~\zero}$.
By \cref{theorem:weakheadnormalization}, $\PContext[\Expression_2] \red^*\succesor^{m}~\zero$.
It is not difficult to show using the definition of interpretation that $m=n$.
\end{proof}

\section{A Type Inference Algorithm}
\label{section:typeinference}

In this section, we  define a 
 type inference algorithm 
 for $\lambdab$.
Apart from the usual complications 
that come from having no type declarations, 
the difficulty of finding an appropriate type inference algorithm for
$\lambdab$ 
is due to the fact that 
the expressions  do not have a constructor and destructor for $\tbullet$.
We do not know which sub-expressions need to be delayed as
illustrated by the type derivation of $\fix$ where the first
occurrence of 
$(\lambda \var. \varY\ (\var\ \var) )$
has a different derivation  from   the second one
since $\defrule{\introbullet}$ is applied in different places \cite{Nakano00:lics}.
Even worse, 
in case a sub-expression has to be delayed, we do not know  
how many times  needs to  be delayed to be able
to type the whole expression.

\subsection{A Syntax-directed Type System}
\label{sec:syntaxdirected}

We obtain a syntax-directed type system by  eliminating the rule
\refrule{\introbullet}
from $\lambdab$.
The {\em type assignment system} $\lambdabbminus$ is   defined by the following  rules. 
\[
      \begin{array}{@{}c@{}}
        \inferrule[\defrule\axiom]{
        \varX: \Type \in \TypeContext
        }{
        \wte{\TypeContext}{\varX}{\bullet^n \Type}
        }
        \qquad
        \inferrule[\defrule\const]{
         \Constant: \Type \in \ConsCtx
        }{
        \wte\TypeContext\Constant\bullet^n\Type
        }
        \quad \quad 
         \inferrule[\defrule\introarrow]{
        \wte{\TypeContext,\varX: \tbullet[n]\Type }{\Expression}{\tbullet[n]\TypeS}
        }{
        \wte{\TypeContext}{\Fun \varX \Expression}{\tbullet[n](\Type \arrow \TypeS)}
        }
\quad \quad    
        \inferrule[\defrule\elimarrow]{
          \wte{\TypeContext}{\Expression_1}{\tbullet[n](\Type \arrow \TypeS)}
          \ \ \ 
          \wte{\TypeContext}{\Expression_2}{\tbullet[n]\Type}
        }{
        \wte{\TypeContext}{\Expression_1\Expression_2}{\tbullet[n]\TypeS}
        }
      \end{array}
  \]

\begin{lemma}[Equivalence between $\lambdabb$ and $\lambdabbminus$]
\label{lemma:lambdabbminus}

$\wte{\TypeContext}{\Expression}{\Type}$ in $\lambdabbminus$
iff $\wte{\TypeContext}{\Expression}{\Type}$ in $\lambdabb$.
\end{lemma}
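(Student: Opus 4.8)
The plan is to prove the two directions of the biconditional separately, both by induction on typing derivations. The key observation is that the two systems $\lambdabb$ and $\lambdabbminus$ differ only in how the delay modality is introduced: $\lambdabb$ has a standalone rule $\refrule{\introbullet}$ that prepends a single $\bullet$ to any derived type, whereas $\lambdabbminus$ drops that rule but compensates by allowing an arbitrary prefix $\bullet^n$ in the \emph{leaf} rules $\refrule{\axiom}$ and $\refrule{\const}$. So the heart of the argument is that the ``floating'' bullets introduced by $\refrule{\introbullet}$ in $\lambdabb$ can always be pushed down to the leaves, and conversely that the bullet-prefixes attached at the leaves in $\lambdabbminus$ can be reconstructed by inserting $\refrule{\introbullet}$ applications.

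For the direction from $\lambdabbminus$ to $\lambdabb$, I would argue by induction on the $\lambdabbminus$-derivation. The rules $\refrule{\introarrow}$ and $\refrule{\elimarrow}$ are identical in both systems, so those cases are immediate from the induction hypothesis. The only interesting cases are the leaves: in $\lambdabbminus$ we may derive $\wte{\TypeContext}{\varX}{\bullet^n\Type}$ directly, and in $\lambdabb$ we first apply $\refrule{\axiom}$ to obtain $\wte{\TypeContext}{\varX}{\Type}$ and then apply $\refrule{\introbullet}$ exactly $n$ times to reach $\wte{\TypeContext}{\varX}{\bullet^n\Type}$ (and symmetrically for constants). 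This direction is essentially bookkeeping.

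For the direction from $\lambdabb$ to $\lambdabbminus$, I would again induct on the $\lambdabb$-derivation, and here the case for $\refrule{\introbullet}$ is the one requiring real work. Suppose the last step is $\refrule{\introbullet}$, deriving $\wte{\TypeContext}{\Expression}{\bullet\Type}$ from $\wte{\TypeContext}{\Expression}{\Type}$; by induction we have a $\lambdabbminus$-derivation of $\wte{\TypeContext}{\Expression}{\Type}$, and I must produce one of $\wte{\TypeContext}{\Expression}{\bullet\Type}$. The right tool is an auxiliary ``delay'' lemma for $\lambdabbminus$ stating that if $\wte{\TypeContext}{\Expression}{\Type}$ is derivable then so is $\wte{\TypeContext}{\Expression}{\bullet\Type}$ (one can even prove the stronger $\wte{\TypeContext_1,\TypeContext_2}{\Expression}{\Type}\Rightarrow\wte{\TypeContext_1,\bullet\TypeContext_2}{\Expression}{\bullet\Type}$, mirroring \cref{lemma:delay} for the full system). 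This auxiliary lemma is itself proved by induction on the $\lambdabbminus$-derivation: at an $\refrule{\axiom}$ or $\refrule{\const}$ leaf one simply increments the exponent $n$ to $n+1$, and at $\refrule{\introarrow}$ and $\refrule{\elimarrow}$ one increments $n$ uniformly on both premises and the conclusion, using that these rules are closed under adding a bullet to every $\bullet^n$-prefix in sight.

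The main obstacle is the $\refrule{\introbullet}$ case just described, and in particular getting the delay lemma for $\lambdabbminus$ right: one must check that incrementing the bullet count threads correctly through the context in $\refrule{\introarrow}$, where the bound variable enters the context with type $\bullet^n\Type$ and must become $\bullet^{n+1}\Type$, matching the shift applied to the conclusion. Once that lemma is available, the $\refrule{\introbullet}$ case follows by a single application, and the remaining cases ($\refrule{\axiom}$, $\refrule{\const}$, $\refrule{\introarrow}$, $\refrule{\elimarrow}$) are routine since each $\lambdabb$-rule is either present verbatim in $\lambdabbminus$ or is a special case (the leaf rules with $n=0$) of a $\lambdabbminus$-rule. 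I would therefore state and prove the delay lemma for $\lambdabbminus$ first, and then dispatch both directions of the equivalence by the straightforward inductions outlined above.
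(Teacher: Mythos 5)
Your proof is correct, but for the substantive direction ($\lambdabb$ into $\lambdabbminus$) it takes a genuinely different route from the paper's. The paper inducts on the structure of $\Expression$ and invokes the Inversion Lemma (\cref{lem:inv}) for $\lambdabb$: for each syntactic form, inversion returns judgements whose bullets are already arranged exactly as the $\lambdabbminus$ rules require, so every case closes with a single application of the corresponding syntax-directed rule. You instead induct on the $\lambdabb$-derivation, isolate the one rule absent from $\lambdabbminus$, namely \refrule{\introbullet}, and discharge it with a delay lemma proved directly for $\lambdabbminus$ --- in effect showing that \refrule{\introbullet} is admissible in the syntax-directed system. The two arguments rest on the same underlying fact (bullets can be pushed down to the leaves) but package it differently: the paper's version buys brevity by reusing a lemma it has already established for subject reduction, whose own proof appeals to the delay property of the full system (\cref{lemma:delay}); yours is self-contained, makes the admissibility statement explicit, and correctly flags the only delicate point, namely that your delay lemma must delay (part of) the context as well, so that in the \refrule{\introarrow} case the bound variable's type $\bullet^n\Type$ becomes $\bullet^{n+1}\Type$ in step with the conclusion. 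The easy direction ($\lambdabbminus$ into $\lambdabb$) coincides in both: induction on the derivation, re-creating the leaf prefixes $\bullet^n$ by $n$ applications of \refrule{\introbullet}.
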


\begin{proof} The direction from left to right follows by  induction on the derivation.
The direction from right to left follows  by induction on $\Expression$ using Lemma \ref{lem:inv}.
\end{proof}

Though $\lambdab$ and $\lambdabbminus$
 are equivalent in the sense they type the same expressions, they do not have the same
type derivations. The derivations 
of $\lambdabbminus$ are more restrictive than the ones of 
$\lambdab$ since one can only introduce $\bullet$'s  at the beginning of the derivation with the 
rules \refrule{\axiom} and \refrule{\const}.

\subsection{Meta-types}
\label{sec:metatypes}

The type inference algorithm infers {\em meta-types}
which are  a generalization of types
where the $\bullet$ can be exponentiated 
with  integer expressions, e.g.
$\bullet^{\IntVar_1} \TypeVar \rightarrow \bullet^{\IntVar_2-\IntVar_1 } \TypeVar$.
The syntax for {\em meta-types} is defined below.
A  meta-type can contain type variables and
(non-negative) integer expressions with variables.
In this syntax,   
 $\bullet \MetaType$ is written as  $\bullet^{1} \MetaType$.
We  identify $\bullet^{0} \MetaType$ with $\MetaType$ and
$\tbullet^{\IntExp} \tbullet^{\IntExp'} \MetaType$ with
 $\tbullet^{\IntExp + \IntExp'} \MetaType$.

\[
\begin{array}{@{}l@{\qquad}l@{}}
\begin{array}[t]{r@{~~}c@{~~}l@{\quad}l}
    \IntExp & ::=^{ind} & & \textbf{Integer Expression} \\
    &   & \IntVar & \text{(integer variable)} \\
    &  | & \mkconstant{n} & \text{(integer number)} \\
    & | & \IntExp + \IntExp  & \text{(addition)} \\
    & | & \IntExp - \IntExp & \text{(substraction)}  
  \end{array}
   \quad
  \begin{array}[t]{r@{~~}c@{~~}l@{\quad}l}
    \MetaType & ::=^{coind} & & \textbf{Pseudo Meta-type} \\
    &   & \TypeVar & \text{(type variable)} \\
   &  | & \nat & \text{(natural numbers)} \\
    & | & \MetaType \times \MetaType & \text{(product)} \\
    & | & \MetaType \arrow \MetaType & \text{(arrow)}  \\
    & | & \tbullet^{\IntExp} \MetaType & \text{(delay)}
  \end{array}
\end{array}
\]

\begin{definition}
We say that $\MetaType$ is {\em positive} if 
 all the exponents of $\tbullet$
are natural numbers. 
\end{definition}

We identify a positive pseudo meta-type with the pseudo-type 
obtained from replacing  the type constructor 
$\bullet^{n} $ in the syntax of pseudo meta-types  with $n$ consecutive 
$\bullet$'s  in the syntax of pseudo-types.

\begin{definition}[Integer Expression and Type Substitution]
Let  $\substT = \mapTS{\TypeVar_1}{\MetaType_1}{\TypeVar_n}{\MetaType_n}$ be a finite  mapping from type variables to  pseudo meta-types  where $\dom (\substT) = \{ \TypeVar_1, \ldots, \TypeVar_n \}$.
Let $\substE = \mapTS{\IntVar_1}{\IntExp_1}{\IntVar_m}{\IntExp_m}$ be a finite mapping 
from integer variables to integer expressions where
$\dom(\substE) = \{\IntVar_1, \ldots, \IntVar_m \}$.
Let also $\substTE = \substT \cup \substE$.
We define the  substitutions $\appTS{\substE}{\IntExp}$ on an integer expression $\IntExp$
and $\appTS{\substTE}{\MetaType}$ on a pseudo meta-type $\MetaType$ as follows.
\[
\begin{array}{lll}
\begin{array}{lll}
\appTS{\substE}{\IntVar} & = &
\left \{ 
\begin{array}{ll}
 \IntExp & \mbox{ if $\newTSubs{\IntVar}{\IntExp} \in  \substE$ }\\
\IntVar & \mbox{ otherwise}
\end{array} \right.
\\
\appTS{ \substE}{ \mkconstant{n} } & = & \mkconstant{n} \\
\appTS{ \substE }{\IntExp_1 + \IntExp_2} & = & \appTS{ \substE}{\IntExp_1} + \appTS{\substE}{\IntExp_2} \\
\appTS{ \substE}{ \IntExp_1 - \IntExp_2} & = & \appTS{ \substE}{\IntExp_1} - \appTS{\substE}{\IntExp_2}
\end{array} 
 \quad
 \begin{array}{lll}
\appTS{\substTE} {\TypeVar} & = & 
\left \{ \begin{array}{ll}
\MetaType  & \mbox{ if $\newTSubs{\TypeVar}{\MetaType} \in  \substT$ }\\
 \TypeVar & \mbox{ otherwise}
\end{array} \right.
 \\
 \appTS{\substTE} {\nat} & = &  \nat \\ 
\appTS{\substTE}{\MetaType_1 \arrow \MetaType_2} & = & 
\appTS{\substTE}{\MetaType_1} \arrow \appTS{\substTE}{\MetaType_2} \\
\appTS{\substTE}{\MetaType_1 \times  \MetaType_2} & = & 
\appTS{\substTE}{\MetaType_1} \times  \appTS{\substTE}{\MetaType_2} \\
\appTS{\substTE}{ \tbullet[\IntExp] \MetaType} & = & 
\tbullet[\appTS{\substE}{\IntExp}] \appTS{\substTE}{\MetaType}
\end{array}\end{array}
\]
If the substitution  $\substT$ is $  \singleTS{\TypeVar}{\MetaType}$ then 
$\substT(\MetaType) $ is also denoted as $\MetaType \singleTS{\TypeVar}{\MetaType}$ as usual.
\end{definition}

Composition of substitutions is defined as usual. Note that
the empty set is the identity substitution and 
$\substTE = \substE \circ \substT $ if $\substTE = \substT \cup \substE$.

\begin{definition}
We say that $\substE$ is {\em natural} if 
it maps all integer variables into
natural numbers, i.e. 
$\substE (\IntVar) \in \mathbb{N}$ for all $\IntVar \in \dom(\substT)$.
\end{definition}

For example,
the substitution $\substE = \{\IntVar\mapsto 3, \IntVarM \mapsto 5\}$
is natural  but $\appTS{\substE}{\bullet^{\IntVar - \IntVarM} \nat}$ is not positive.

We can define {\em regular}   pseudo meta-types similar
to  \cref{def:types}. We say that a pseudo meta-type $\MetaType$ is {\em guarded}
if $\substE(\MetaType)$ is positive and guarded (as a pseudo type) 
for all natural $\substE$.

The pseudo meta-type which is solution of the recursive equation 
$\MetaType = \nat \times \bullet^{\IntVar} \MetaType$
is not guarded but it is guarded if we substitute $\IntVar$ by $\IntVarM +1$.

\begin{definition}
A {\em meta-type} is a pseudo meta-type that is regular and guarded.
\end{definition}

Note that finite pseudo meta-types are always meta-types, e.g. $\nat \times \bullet^{\IntVar} \TypeVar$
is a meta-type.

\begin{definition}
We say that the substitution $\substTE = \substT \cup \substE$  is:
 \begin{enumerate}
 
 \item {\em guarded} if 
 $\substTE(\TypeVar)$ is guarded
 for all $\TypeVar \in \dom(\substT)$.

 \item {\em meta-type} if  it is a substitution from 
 type variables  to meta-types.
 
 \item  
 {\em positive} if $\substE$ is natural   and  $\substE (\substT(\TypeVar))$ is positive 
 for all $\TypeVar \in \dom(\substT)$.

  \item {\em type} if it is meta-type and positive.

 \end{enumerate}
\end{definition}

Note that a type substitution $\substTE = \substT \cup \substE$ is a substitution from 
 type variables  to types where $\substE$ is natural.

\begin{definition}[Constraints]
\label{definition:constraints}
 There are two types of {\em constraints}.

\begin{enumerate}
\item A {\em meta-type  constraint} is $\MetaType \equal \MetaType'$
     for finite meta-types $\MetaType$ and $ \MetaType'$.

\item An {\em integer constraint} is either $\IntExp \equal \IntExp'$
or $\IntExp \mayor \IntExp'$ or  $\IntExp \menor \IntExp'$.
\end{enumerate}

  \end{definition}
 
Sets of constraints  are denoted by $\setC, \setE$, etc.
If $\setC$ is a set of meta-type constraints then,

\begin{enumerate}

\item $\eqc{\setC}$  denotes the subset of equality constraints from $\setC$,

 \item $\intc{\setC}$ denotes the set of integer constraints from $\setC$.
 
 \end{enumerate}
 
Moreover,  $\substTE \models \MetaType \equal \MetaType'$
means that
 $\appTS {\substTE}{\MetaType} \treeEq \appTS{\substTE}{\MetaType'}$.
Similarly, we define 
$\substE \models\IntExp \equal \IntExp'$,
 $\substE \models\IntExp \mayor \IntExp'$ and  $
\substE  \models\IntExp \menor \IntExp'$. 
This notation extends to sets $\setC$
 of constraints  in the obvious way.
 When $\substTE \models \setC$, we say that $\substTE$ is a {\em solution} ({\em unifier}) for
 $\setC$. 
 

 \begin{definition}
We say that $\substT$ is:

\begin{enumerate}

\item $\setE$-{\em guarded} if $\substE \circ \substT$ is guarded for all
natural substitution $\substE$ such that $\substE \models \setE$.

\item $\setE$-{\em positive} if $\substE \circ \substT$ is positive for all
natural substitution $\substE$ such that $\substE \models \setE$.

\end{enumerate}
\end{definition}

Recall that  $\stream{\IntVar}{\TypeVarY} =
   \TypeVarY \times \bullet^{\IntVar} \stream{\IntVar}{\TypeVarY}
   $.
For example,

\begin{enumerate}

\item  $\substT = \{ \TypeVar \mapsto \stream{\IntVar}{\TypeVarY} \}$ is $\setE$-guarded 
but it is not $\emptyset$-guarded for
$\setE = \{ \IntVar \mayor 1 \}$ and,

\item 
 $\substT = \{ \TypeVar \mapsto \bullet^{M-N} \nat \}$ is $\setE$-positive 
but it is not $\emptyset$-positive for 
$\setE = \{ M \mayor N \}$.

\end{enumerate}

\begin{definition}

\label{definition:substitutional}
 We say that $\setC$ is \emph{substitutional}
 if 
 $\eqc{\setC }= \{\TypeVar_1 \equal \MetaType_1, \ldots, \TypeVar_n \equal \MetaType_n \}$, 
 all variables $\TypeVar_1, \ldots, \TypeVar_n$
 are pairwise different.
 \end{definition}
 
 Since 
 a substitutional $\setC$ corresponds to a set of recursive equations
 where $\MetaType_1, \ldots, \MetaType_n$ are finite
  meta-types, there  
 exists a unique solution 
  $\substT_{\setC}$ 
 such that $\substT_{\setC} (\TypeVar_i)$  is regular
  for all $1 \leq i \leq n$ 
  \cite{Courcelle83}, 
  \cite[Theorem 7.5.34]{BookCC}. 
  The mapping $\substT_{\setC}$ is the {\em most general unifier}, i.e.
  if $\substTE \models \setC$ then there exists $\substTE'$ such that
  $\substTE = \substTE' \circ \substT_{\setC}$. Moreover, if $\substTE = \substT \cup \substE$
  then $\substTE' = \substT' \cup \substE$ where
  \[
  \begin{array}{ll}
  \substT'(\TypeVar) & = 
  \left \{ \begin{array}{ll} 
            \substT(\TypeVar) & \mbox{ if }
               \TypeVar \neq \TypeVar_i \mbox{ for all } 1 \leq i \leq n \\
              \TypeVar & \mbox{otherwise} 
            \end{array}
            \right .
    \end{array}
            \]
  For example, if $\setC = \{ \TypeVar \equal  \TypeVarY \times \bullet^{\IntVar} \TypeVar \}$
  then $\substT_{\setC} (\TypeVar) = \stream{ \IntVar}{\TypeVarY}$.
  If $\substTE \models \setC$ and $\substTE = \substT \cup \substE$
  then 
   $\substTE = \substTE' \circ \substT_{\setC}$
  and  
$\substTE' = \substT' \cup \substE$   and 
  $\substT' =    \singleTS{\TypeVarY}{\substT(\TypeVarY)}$.
  
  Note that 
  the unicity of the solution of a set of recursive equations 
  would not be guaranteed if we were
  following  the iso-recursive approach 
which  allows only for a finite number of unfoldings 
$\mu \TypeVar. \Type =
\Type \singleTS{\TypeVar}{ \mu \TypeVar. \Type} $ 
\cite{DBLP:journals/iandc/CardoneC91,BookCC}.

For $1 \leq i \leq n$,
we know that 
$\substT_{\setC} (\TypeVar_i)$  is regular 
but we do not know yet if $\substT_{\setC} $ is guarded or  positive.
Below, we show some examples:

\begin{enumerate}

\item Let $\setC = \{ \TypeVar_1  \equal \TypeVar_2 \arrow \bullet \TypeVar_1, 
\TypeVar_2  \equal \bullet \TypeVar_1 \arrow \bullet \TypeVar_2\}$.
Then, $\substT_{\setC}= \{\TypeVar_1 \mapsto \Type_1, \TypeVar_2 \mapsto \Type_2 \}$ is a type substitution where $\Type_1$ and $\Type_2$ satisfy the recursive equations
$\Type_1 =  \Type_2 \arrow \bullet \Type_1$ and $\Type_2 = \bullet \Type_1 \arrow \bullet \Type_2$. 

\item Let  $\setC = \{ \TypeVar_1  \equal \TypeVar_2 \arrow \bullet \TypeVar_1, 
\TypeVar_2  \equal \TypeVar_1 \arrow \bullet \TypeVar_2\}$.
Then,  $\substT_{\setC} = \{\TypeVar_1 \mapsto \Type_1, \TypeVar_2 \mapsto \Type_2 \}$ 
is positive
where $\Type_1$ and $\Type_2$ satisfy the recursive equations
$\Type_1 =  \Type_2 \arrow \bullet \Type_1$ and $\Type_2 =  \Type_1 \arrow \bullet \Type_2$. 
But $\substT_{\setC} $ is not guarded
since there is an infinite branch in the tree representations of $\Type_1$ and
$\Type_2$ that has no  $\bullet$'s. Observe that 
$\Type_1 =  (\Type_1 \arrow \bullet \Type_2) \arrow \bullet \Type_1$ and 
$\Type_2 = (\Type_2 \arrow \bullet \Type_1) \arrow \bullet \Type_2$.

\item Let 
$\setC = \{ \TypeVar_1  \equal \bullet^{\IntVar +1} (\TypeVar_2 \arrow \bullet \TypeVar_1), 
\TypeVar_2  \equal \bullet^{\IntVarM} \TypeVar_1 \}$.
Then, $\substT_{\setC} = \{\TypeVar_1 \mapsto \MetaType_1, \TypeVar_2 \mapsto \MetaType_2 \}$ is a  meta-type substitution where $\MetaType_1 $ and $\MetaType_2$ satisfy the recursive equations
$\MetaType_1  = \bullet^{\IntVar +1} (\MetaType_2 \arrow \bullet \MetaType_1)$ and 
$\MetaType_2 = \bullet^{\IntVarM} \MetaType_1$.
But  $\substT_{\setC} $ is not positive since $\IntVar$ and $\IntVarM$ are integer variables.

\end{enumerate}

\begin{definition} \label{definition:Calwaysnormal}
We say that $\setC$ is 

\begin{enumerate}

\item {\em always positive} if   
all the  meta-types in $\appTS{\substE}{\eqc{\setC}}$ are positive 
  for all  natural  substitution $\substE$ such that $\substE \models \intc{\setC}$.
  
\item {\em simple} if  $0 \menor \IntExp  \in \setC$
for all $\TypeVar \equal \bullet^{\IntExp} \TypeVar' \in \setC$. 

\end{enumerate}

\end{definition}

For example, 
$\setC = \{ \TypeVar \equal \bullet^{\IntVar} (\TypeVar_1 \arrow \TypeVar_2), \IntVar \mayor 1 \}$
is always positive while
$\setC = \{ \TypeVar \equal \bullet^{\IntVar} (\TypeVar_1 \arrow \TypeVar_2)\}$ is not.

\subsection{Constraint Typing Rules}
\label{subsection:constraint}

The {\em constraint typing rules} for $\lambdabb$ are given in  \cref{table:constraints}.
Since $\pair$, $\fst$ and $\snd$ are actually `polymorphic', we need
to generate different fresh variables depending on their position.
For convenience, we define the following function:
\[
\begin{array}{ll}
\typeOf{\pair}{\TypeVar_1}{\TypeVar_2} & = 
\TypeVar_1 \arrow \TypeVar_2 \arrow \TypeVar_1 \times \TypeVar_2
\\
\typeOf{\fst}{\TypeVar_1}{\TypeVar_2} & = 
\TypeVar_1 \times \TypeVar_2\arrow  \TypeVar_1 
\\
\typeOf{\snd}{\TypeVar_1}{\TypeVar_2} & = 
\TypeVar_1 \times \TypeVar_2\arrow  \TypeVar_2
\end{array}
\]

\begin{table}[!t]
\caption{Constraint Typing Rules for $\lambdab$}
\label{table:constraints}
\begin{framed}
$
\begin{array}{@{}l@{}l}
\inferrule[\defrule{\axiom}]{
  \IntVar \mbox{ is fresh} 
}{
  \typec{\TypeContextD, \varX: \TypeVar}{\varX}{\tbullet^\IntVar \TypeVar}{\emptyset}
}  
\qquad 
  \inferrule[\defrule{\prodconst}]
  {\IntVar, \TypeVar_1, \TypeVar_2 \mbox{ are fresh}  }{
          \typec{\TypeContextD}{\Constant}
          {\bullet^{\IntVar} \typeOf{\Constant}{\TypeVar_1}{\TypeVar_2} }{\emptyset}
        }
         \Constant \in \{\pair, \fst, \snd \} 
        \\\\      
        \inferrule[\defrule{\natzero}]
  {\IntVar \mbox{ is fresh}}{ \typec{\TypeContextD}{\zero}
          {\bullet^{\IntVar} \nat }{\emptyset}
        }
        \quad
        \inferrule[\defrule{\natsucc}]
  {\IntVar \mbox{ is fresh}}{ \typec{\TypeContextD}{\suc}
          {\bullet^{\IntVar} (\nat \arrow \nat) }{\emptyset}
        }
\\
\\
\inferrule[\defrule\introarrow]{
  \typec{\TypeContextD,\varX: \TypeVar }{\Expression}{\MetaType}{\setC} 
  \\
  \TypeVar, \TypeVar_1, \TypeVar_2, \IntVar \mbox{ are fresh}
  }
  {
  \typec{\TypeContextD}{\Fun \varX \Expression}
  {\tbullet[\IntVar](\TypeVar_1 \arrow \TypeVar_2)}
  {\setC \cup \{\TypeVar \equal \tbullet[\IntVar] \TypeVar_1 , \MetaType \equal \tbullet[\IntVar] \TypeVar_2\}}
} 
\\\\
\inferrule[\defrule\elimarrow]{
  \typec{\TypeContextD}{\Expression_1}{\MetaType_1}{\setC_1}
  \\
  \typec{\TypeContextD}{\Expression_2}{\MetaType_2}{\setC_2}
  \\
  \TypeVar_1, \TypeVar_2, \IntVar \mbox{ are fresh}
}{
  \typec{\TypeContextD}{\Expression_1\Expression_2}
  {\tbullet[\IntVar]\TypeVar_2}
  {\setC_1 \cup \setC_2  \cup 
  \{ \tbullet [\IntVar](\TypeVar_1 \arrow \TypeVar_2) \equal \MetaType_1,
  \tbullet [\IntVar] \TypeVar_1 \equal \MetaType_2
   \}
  }
}      
\end{array}
$
\end{framed}

\end{table}

The typing rules should be read bottom-up. We start from an empty context and 
a closed expression and we build 
the typing context $\TypeContextD$ and the set $\setC$ of constraints.
We can, then, assume that the typing context  $\TypeContextD$
only contains declarations  of the form $\var:\TypeVar$ and 
the set $\setC$ contains only equality constraints between
finite meta-types.The type variables in the contexts $\TypeContextD$ are all
fresh and they are created by the  rule
\refrule{\introarrow}  for the abstraction.

If instead of generating the fresh variable $\TypeVar$ in
\defrule{\introarrow} 
we  directly put $\tbullet[\IntVar] \TypeVar_1$ in the context,
then the algorithm would not be complete.
For example,
consider
$x: \bullet^{5} \nat \vdash x: \bullet^{7} \nat$.
Then $\IntVar$ should  be assigned the value $3$ and not $5$
 if later we  derive that 
 $\lambda x. x : \bullet^{2} (\bullet^{3} \nat \arrow \bullet^{5} \nat)$.

\begin{theorem}[Soundness of Constraint Typing]
\label{theorem:soundnessconstraint}
Let $\substTE$ be a  type  substitution.
If
 $\typec{\TypeContextD}{\Expression}{\MetaType}{\setC}$ and 
$\substTE \models \setC$ then
$\wte{ \appTS{\substTE}{\TypeContextD}}{\Expression}
{\appTS{\substTE}{\MetaType}}$ in $\lambdabbminus$
and $\appTS{\substTE}{\MetaType}$ is a type.
\end{theorem}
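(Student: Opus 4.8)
The plan is to proceed by induction on the derivation of $\typec{\TypeContextD}{\Expression}{\MetaType}{\setC}$ (equivalently, on the structure of $\Expression$, since the constraint rules are syntax-directed), keeping $\substTE = \substT \cup \substE$ fixed throughout. The single observation that makes everything go through is that, because $\substTE$ is a \emph{type} substitution, $\substE$ is natural and $\substT$ maps every type variable to an actual type; moreover every exponent occurring in a meta-type produced by the rules is a single fresh integer variable $\IntVar$, so that $\appTS{\substE}{\IntVar} = n \in \natset$ and each symbolic prefix $\tbullet[\IntVar]$ collapses to a concrete block $\tbullet[n]$ of bullets. In each case I will first use the equality constraints attached to the conclusion (which $\substTE$ satisfies by hypothesis) to compute $\appTS{\substTE}{\MetaType}$ explicitly, then invoke the induction hypothesis on the premises --- whose constraint sets are subsets of the one in the conclusion and hence are also satisfied by $\substTE$ --- and finally close with the matching rule of $\lambdabbminus$.

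In the base cases \refrule{\axiom}, \refrule{\prodconst}, \refrule{\natzero} and \refrule{\natsucc} the generated constraint set is empty, so $\substTE \models \setC$ is vacuous. Writing $n = \appTS{\substE}{\IntVar}$, the output meta-type becomes $\tbullet[n]$ applied respectively to $\appTS{\substTE}{\TypeVar}$, to an instance of the relevant schema of $\ConsCtx$, to $\nat$, or to $\nat \arrow \nat$. The rules \refrule{\axiom} and \refrule{\const} of $\lambdabbminus$ are precisely the ones allowed to prepend an arbitrary block $\bullet^n$, so they deliver the required judgement; and in each case the body is a type, so prepending finitely many $\bullet$'s keeps it one.

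For \refrule{\introarrow}, the two fresh constraints give $\appTS{\substTE}{\TypeVar} = \tbullet[n]\appTS{\substTE}{\TypeVar_1}$ and $\appTS{\substTE}{\MetaType} = \tbullet[n]\appTS{\substTE}{\TypeVar_2}$ with $n = \appTS{\substE}{\IntVar}$. The induction hypothesis applied to the premise then reads $\wte{\appTS{\substTE}{\TypeContextD}, \varX : \tbullet[n]\appTS{\substTE}{\TypeVar_1}}{\Expression}{\tbullet[n]\appTS{\substTE}{\TypeVar_2}}$, which is exactly the premise of the \refrule{\introarrow} rule of $\lambdabbminus$; this yields $\wte{\appTS{\substTE}{\TypeContextD}}{\Fun \varX \Expression}{\tbullet[n](\appTS{\substTE}{\TypeVar_1} \arrow \appTS{\substTE}{\TypeVar_2})}$, which is $\appTS{\substTE}{\tbullet[\IntVar](\TypeVar_1 \arrow \TypeVar_2)}$ as required. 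The \refrule{\elimarrow} case is symmetric: its two constraints turn $\appTS{\substTE}{\MetaType_1}$ into $\tbullet[n](\appTS{\substTE}{\TypeVar_1} \arrow \appTS{\substTE}{\TypeVar_2})$ and $\appTS{\substTE}{\MetaType_2}$ into $\tbullet[n]\appTS{\substTE}{\TypeVar_1}$, so the two induction hypotheses fit the premises of the $\lambdabbminus$ rule \refrule{\elimarrow} and produce the type $\tbullet[n]\appTS{\substTE}{\TypeVar_2} = \appTS{\substTE}{\tbullet[\IntVar]\TypeVar_2}$.

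The only step carrying genuine content beyond rule-chasing is the closing claim that $\appTS{\substTE}{\MetaType}$ is a \emph{type}. Here I would note that each $\MetaType$ generated by the rules is finite, so substituting the regular, guarded types $\substT(\TypeVar_i)$ for its type variables cannot break either condition of \cref{def:types}: the result is regular because a finite skeleton together with finitely many regular subtrees contributes only finitely many distinct subtrees, and it is guarded because any infinite path must eventually leave the finite skeleton and descend into some guarded $\substT(\TypeVar_i)$, where it meets infinitely many $\bullet$'s. Accordingly I expect the main (though mild) obstacle to be bookkeeping rather than mathematical depth: keeping straight the identification $\treeEq$ between substituted finite meta-types and the concrete $\lambdabbminus$ types across the equality constraints, and verifying guardedness of the substituted output.
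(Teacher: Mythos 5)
Your proof is correct and follows essentially the same route as the paper's: induction on the constraint typing derivation, using the equality constraints satisfied by $\substTE$ to rewrite the context and types, closing each case with the matching rule of $\lambdabbminus$, and deducing type-hood from $\substTE$ being a type substitution. The paper only spells out the \refrule{\introarrow} case and simply asserts the final type-hood claim, so your treatment of the base cases, \refrule{\elimarrow}, and the regularity/guardedness argument is a faithful (and somewhat more detailed) elaboration of the same proof.
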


\begin{proof}
We proceed by induction on the derivation of
$\typec{\TypeContextD}{\Expression}{\MetaType}{\setC}$.
Suppose the last rule in the derivation of
$\typec{\TypeContextD}{\Expression}{\MetaType}{\setC}$ is:
\[
\inferrule[\defrule\introarrow]{
  \typec{\TypeContextD,\varX: \TypeVar }{\Expression}{\MetaType'}{\setC} 
  \\
  \TypeVar, \TypeVar_1, \TypeVar_2, \IntVar \mbox{ are fresh}
  }
  {
  \typec{\TypeContextD}{\Fun \varX \Expression}
  {\tbullet[\IntVar](\TypeVar_1 \arrow \TypeVar_2)}
  {\setC \cup \{\TypeVar \equal \tbullet[\IntVar] \TypeVar_1 , \MetaType' \equal \tbullet[\IntVar] \TypeVar_2\}}
} 
\]
It follows from  induction hypothesis that 
$\wte{\appTS{\substTE }{\TypeContextD},
\varX: \appTS{\substTE}{\TypeVar} }
{\Expression}{\appTS{\substTE}{\MetaType'}}$.
 Since $\substTE \models 
  \{\TypeVar \equal \tbullet[\IntVar] \TypeVar_1 , 
  \MetaType' \equal \tbullet[\IntVar] \TypeVar_2\}$,  we also have that 
$\wte{\appTS{\substTE }{\TypeContextD},
\varX: \tbullet[\appTS{\substTE}{\IntVar}] \appTS{\substTE}{\TypeVar_1} }{\Expression}{\tbullet[\appTS{\substTE}{\IntVar}]\appTS{\substTE}{\TypeVar_2}}
$.  
We can, then, apply $\defrule{\introarrow}$ of $\lambdabbminus$ 
to conclude that 
  $\wte{\appTS{\substTE }{\TypeContextD}}
  {\Fun \varX \Expression}
  {\tbullet[\appTS{\substTE}{\IntVar}]
  (\appTS{\substTE }{\TypeVar_1} \arrow \appTS{\substTE}{\TypeVar_2})}$.
  It follows from the fact that
  $\substTE$ is a type substitution that 
  $\appTS{\substTE}{\MetaType} =
  \tbullet[\appTS{\substTE}{\IntVar}](\appTS{\substTE }{\TypeVar_1} \arrow \appTS{\substTE}{\TypeVar_2})
 $ is a type.
  \end{proof}

\begin{theorem}[Completeness of Constraint Typing]
\label{theorem:completenessconstraint}
Let
 $\TypeContext = \var_1: \Type_1, \ldots, \var_m :\Type_m$  
  and 
  $\TypeContextD = \var_1: \TypeVar_1, \ldots, \var_m :\TypeVar_m$.
If $\wte{\TypeContext}{\Expression}{\Type}$ in $\lambdabbminus$
then there are  $\MetaType$ and $\setC$ such that

\begin{enumerate}

\item $\typec{\TypeContextD}{\Expression}{\MetaType}{\setC}$

\item 
there exists  a   type substitution  
$\substTE \supseteq \mapTS{\TypeVar_1}{\Type_1}{\TypeVar_m}{\Type_m}$ such that
$\substTE \models \setC$ and
$\appTS{\substTE}{\MetaType} \treeEq \Type$
and
$\dom (\substTE) \setminus \{\TypeVar_1, \ldots, \TypeVar_m\}$
is the set of fresh variables in the derivation of
$\typec{\TypeContextD}{\Expression}{\MetaType}{\setC}$.
\end{enumerate}

\end{theorem}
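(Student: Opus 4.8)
The plan is to prove the statement by induction on the derivation of $\wte{\TypeContext}{\Expression}{\Type}$ in $\lambdabbminus$, pairing each of the four syntax-directed rules with the matching constraint typing rule of \cref{table:constraints}. At each step the inferred meta-type $\MetaType$ and constraint set $\setC$ come directly from the constraint rule (fed by the induction hypotheses), and the witnessing solution $\substTE$ is obtained by extending the solutions returned by the induction hypotheses with values for the variables freshly generated in that rule: each new integer variable $\IntVar$ is sent to the concrete delay $n$ occurring in the $\lambdabbminus$ judgement, and each new type variable is sent to the concrete type it stands for.

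For the base cases I would argue as follows. If the last rule is \refrule{\axiom}, then $\Expression = \var_i$ and $\Type = \tbullet[n]\Type_i$; the constraint rule yields $\typec{\TypeContextD}{\var_i}{\tbullet[\IntVar]\TypeVar_i}{\emptyset}$ with $\IntVar$ fresh, and extending the base map $\{\TypeVar_j \mapsto \Type_j\}$ by $\IntVar \mapsto n$ gives $\appTS{\substTE}{\tbullet[\IntVar]\TypeVar_i} = \tbullet[n]\Type_i = \Type$, while $\substTE \models \emptyset$ holds vacuously. The constant rules are analogous: the $\lambdabbminus$ judgement assigns $\tbullet[n]$ in front of a concrete instance of the schema in $\ConsCtx$, and I recover that instance by mapping the fresh type variables $\TypeVar_1, \TypeVar_2$ to the concrete components of the instance and $\IntVar \mapsto n$. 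In every case $\substTE$ remains a type substitution, since it sends type variables to types and the integer variable to a natural number.

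For \refrule{\introarrow}, where $\Type = \tbullet[n](\Type_1 \arrow \Type_2)$ and the premise is $\wte{\TypeContext, \varX: \tbullet[n]\Type_1}{\Expression'}{\tbullet[n]\Type_2}$, I apply the induction hypothesis with meta-context $\TypeContextD, \varX: \TypeVar$ and base map extended by $\TypeVar \mapsto \tbullet[n]\Type_1$, obtaining $\MetaType', \setC$ and a solution $\substTE''$ with $\appTS{\substTE''}{\MetaType'} = \tbullet[n]\Type_2$. The constraint rule then produces the two equalities $\TypeVar \equal \tbullet[\IntVar]\TypeVar_1$ and $\MetaType' \equal \tbullet[\IntVar]\TypeVar_2$, which $\substTE$, namely $\substTE''$ extended by $\TypeVar_1 \mapsto \Type_1$, $\TypeVar_2 \mapsto \Type_2$, $\IntVar \mapsto n$, satisfies, and the inferred type evaluates to $\tbullet[n](\Type_1 \arrow \Type_2) = \Type$. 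For \refrule{\elimarrow}, where $\Type = \tbullet[n]\Type_2$, I apply the induction hypothesis to both premises $\wte{\TypeContext}{\Expression_1}{\tbullet[n](\Type_1 \arrow \Type_2)}$ and $\wte{\TypeContext}{\Expression_2}{\tbullet[n]\Type_1}$; since the fresh variables of the two subderivations are chosen disjoint, the two returned solutions agree on the shared base $\{\TypeVar_j \mapsto \Type_j\}$ and merge into a single $\substTE_0$, which I extend by $\TypeVar_1 \mapsto \Type_1$, $\TypeVar_2 \mapsto \Type_2$, $\IntVar \mapsto n$ to validate the two generated constraints and to give the inferred type value $\tbullet[n]\Type_2 = \Type$. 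Here I rely on $\treeEq$ being the equirecursive tree equality, so that $\appTS{\substTE}{\MetaType_i} = \appTS{\substTE_0}{\MetaType_i}$ matches the required unfolded type.

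The step I expect to be the main obstacle is the consistency of the substitution merging in \refrule{\elimarrow}, tied to the bookkeeping clause that $\dom(\substTE) \setminus \{\TypeVar_1, \ldots, \TypeVar_m\}$ is \emph{exactly} the set of fresh variables of the derivation. To make this precise I would fix a freshness discipline under which the variables generated in distinct subderivations are globally distinct and none occurs in a constraint set outside the subderivation that created it; this guarantees both that $\substTE_1 \cup \substTE_2$ is well defined and that extending an existing solution by values for newly generated variables never disturbs constraints already discharged. Carrying this invariant through the induction — in particular observing that in the \refrule{\introarrow} case the variable $\TypeVar$ introduced for $\varX$ migrates from being a context variable of the premise to being one of the fresh variables of the conclusion — is the bulk of the routine but delicate bookkeeping.
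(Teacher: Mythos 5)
Your proposal is correct and follows essentially the same route as the paper's proof: induction on the $\lambdabbminus$ derivation, witnessing each case by extending the base map $\mapTS{\TypeVar_1}{\Type_1}{\TypeVar_m}{\Type_m}$ with $\IntVar \mapsto n$ and the fresh type variables mapped to the concrete types, and using the domain clause (fresh variables of disjoint subderivations) to justify that the two solutions in the \refrule{\elimarrow} case merge into a single well-defined substitution. The paper's proof is just a terser version of yours, spelling out only the \refrule{\axiom} and \refrule{\elimarrow} cases and defining $\substTE$ there exactly as you do.
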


\begin{proof} 
We proceed by induction on the derivation.
Suppose the last typing rule in the derivation is:
\[
\inferrule[\defrule\axiom]{
        \varX_i: \Type_i \in \TypeContext
        }{
        \wte{\TypeContext}{\varX_i}{\bullet^n \Type_i}
        }
\]
 Then,  $
 \typec{\TypeContextD}{\varX_i}{\tbullet^\IntVar \TypeVar_i}{\emptyset}$.
  We define
  $\substTE = 
\mapTS{\TypeVar_1}{\Type_1}{\TypeVar_m}{\Type_m}
\cup    \singleTS{N}{n}$.
  It is easy to see that
  $\appTS{\substTE }{\TypeContextD} = \TypeContext$ and
  $\Type \treeEq \appTS{\substTE} {\tbullet[N] \TypeVar_i}$.

Suppose the last typing rule in the derivation is:
\[
\inferrule[\defrule\elimarrow]{
          \wte{\TypeContext}{\Expression_1}{\tbullet[n](\Type \arrow \TypeS)}
          \ \ \ 
          \wte{\TypeContext}{\Expression_2}{\tbullet[n]\Type}
        }{
        \wte{\TypeContext}{\Expression_1\Expression_2}{\tbullet[n]\TypeS}
        }
        \]
  
 By induction hypotheses,
 
 \begin{itemize}
 \item 
 $
  \typec{\TypeContextD}{\Expression_1}{\MetaType_1}{\setC_1}
  $ and
$  \typec{\TypeContextD}{\Expression_2}{\MetaType_2}{\setC_2}$,
\item 
$\substTE_1 \models \setC_1$ and 
      $\substTE_2 \models \setC_2$ and 
      $\appTS{\substTE_1 }{\MetaType_1} \treeEq
      \tbullet^n(\Type \arrow
      \TypeS)$ and 
      $\appTS{\substTE_2 }{\MetaType_1} \treeEq\tbullet^n\TypeS$
     for 
$
\substTE_1, \substTE_2 \supseteq 
\mapTS{\TypeVar_1}{\Type_1}{\TypeVar_m}{\Type_m}
$,
\item 
      $\dom (\substTE_1) \setminus 
      \{\TypeVar_1, \ldots, \TypeVar_m\}$ and  $\dom (\substTE_2) \setminus 
      \{\TypeVar_1, \ldots, \TypeVar_m\}$
are the sets of fresh variables in the derivation of
 $
  \typec{\TypeContextD}{\Expression_1}{\MetaType_1}{\setC_1}
  $
 and 
$  \typec{\TypeContextD}{\Expression_2}{\MetaType_2}{\setC_2}$, respectively.

\end{itemize}

It follows from the latter that 
 $\substTE_1 \cup \substTE_2$ is a function
 and we can define the substitution $\substTE$ as $
      \update{\update{\update{(\substTE_1 \cup \substTE_2)}{N}{n}
      }{\TypeVar_1}{\Type} }
      {\TypeVar_2}{\TypeS}$.         
\end{proof}

\subsection{Unification Algorithm}
\label{subsection:unification}

\cref{algorithm:unification} solves the unification problem.
The input consists of a set $\setC$ of  constraints 
and an extra argument $\setD$ that keeps tracks of the
``visited equality constraints'' and guarantees termination.

\begin{algorithm}
\SetKwInOut{Input}{input}\SetKwInOut{Output}{output}
\LinesNumbered
\SetKw{where}{{\small where}}
\SetKwHangingKw{sea}{{let}}
\SetKw{and}{{\small and}}
\SetKw{Return}{{\small return}}
\SetKw{Disj}{{\small or}}

 \Fn{$\Unify  \setC \setD $}{
\Input{$\setC$ is a set of constraints and  $\setD$ is the set of visited
equality  constraints
}
\Output{$\{ \langle \substT, \setE \rangle \mid \substT \cup \substE \models \setC \ \forall \substE \models \setE \}$
}

\If{$\setC$ is substitutional and simple  \label{algo:unif:basecase}}
   {
   $\assign{\setE}{\intc{\setC} \cup \Guards{\setC}}$ \label{algo:unif:guards}\;
   \leIf{ $\setE$ has a solution of non-negative integers \label{algo:unif:local}} 
      {\Return{$\{ \langle \substT_{\setC},   \setE \rangle  \}$}}
   {\Return{$\emptyset$} }
  }
\ElseIf{$\MetaType \equal \MetaTypeS \in \setC$ \label{algo:notsubst}}
{$\assign{\setC' }{\setC \setminus \{ \MetaType \equal \MetaTypeS \}}$; \ \ 
 $\assign{\setD' }{\setD \cup  \{ \MetaType \equal \MetaTypeS \}}$


\lCase{$\MetaTypeS =\MetaType$}
      {$\Unify{\setC'}{\setD'} $ \tcc*[f]{ $\MetaType \equal \MetaType$}}

\Case{ $\MetaType =  \TypeVar$ \and{$ \TypeVar \equal \MetaTypeS' \in \setC'$}
\tcc*[f]{ $\TypeVar \equal \MetaTypeS$ and $\TypeVar \equal \MetaTypeS'$}  \label{algo:unif:case3}}
{  \lIf{$\MetaType \equal \MetaTypeS \not \in \setD $} 
{ \Return{$\Unify {\setC' \cup \{  \MetaTypeS\equal \MetaTypeS'\}}{\setD'}$}} 
    \lElse{\Return{$\Unify  {\setC'} {\setD'}$}}
} 


\Case{
$ \MetaType =\MetaType_1 \binaryconstructor \MetaType_2$ \and{ $\MetaTypeS =\MetaTypeS_1 \binaryconstructor \MetaTypeS_2$}
\tcc*[f]{$\MetaType_1 \binaryconstructor \MetaType_2 \equal\MetaTypeS_1 \binaryconstructor \MetaTypeS_2$}}
{ \Return{$ \Unify {\setC' \cup \{ \MetaType_1 \equal \MetaTypeS_1 , \MetaType_2 \equal \MetaTypeS_2\}} {\setD'}$
}}


\Case{
$ \MetaType =\TypeVar$ \and{ $\MetaTypeS = \bullet^\IntExp \TypeVar'$}
\and{$\TypeVar \equal  \bullet^\IntExp \TypeVar' \not \in \setD$}
\tcc*[f]{$\TypeVar \equal \bullet^\IntExp \TypeVar' $}  \label{algo:unif:casevars}}
{
$\assign{\setCone}{ \setC' \cup \{\TypeVar \equal \TypeVar', \IntExp \equal 0 \}} $; 
\tcc*[f]{1st option} \label{algo:unif:varone}
 
$ \assign{\setCtwo}{\setC' \cup 
\{ \TypeVar \equal \bullet^{\IntExp} \TypeVar', 0 \menor  \IntExp  \} } $
\tcc*[f]{2nd option} \label{algo:unif:vartwo}

\Return{$  \Unify { \setCone} {\setD'} \cup \Unify {\setCtwo} {\setD'} $
\label{algo:unif:union3} }
}

\Case{$\MetaType = \tbullet^{\IntExp} \TypeVar$ \and{ $\MetaTypeS =\tbullet^{\IntExp'} \TypeVar' $} \label{algo:unif:case4}
 \tcc*[f]{ $\bullet^{\IntExp} \TypeVar \equal \tbullet^{\IntExp'} \TypeVar' $ }}
{
$\assign{\setCone}{ \setC' \cup \{ \TypeVar \equal   \bullet^{\IntExp' - \IntExp} \TypeVar', 
\IntExp' \mayor \IntExp, \IntExp \mayor 0 \}} $; 
\tcc*[f]{1st option} \label{algo:unif:fstoption}
 
$ \assign{\setCtwo}{\setC' \cup 
\{ \TypeVar' \equal \bullet^{\IntExp - \IntExp'} \TypeVar,
 \IntExp' \menor \IntExp, \IntExp' \mayor 0 \} } $
\tcc*[f]{2nd option} \label{algo:unif:sndoption}

\Return{$  \Unify { \setCone} {\setD'} \cup \Unify {\setCtwo} {\setD'} $
\label{algo:unif:union2} }
}

 \Case{$\MetaType = \tbullet^{\IntExp}\TypeVar$ \and{$\MetaTypeS = \tbullet^{\IntExp'} \MetaTypeS'$}
 \and{$\isNatOpBox{\MetaTypeS'}$}
 \tcc*[f]{$\tbullet^{\IntExp}\TypeVar  \equal \tbullet^{\IntExp'} \MetaTypeS'$} \label{algo:unif:case5}}
      {
      \Return{$
 \Unify {\setC' \cup \{ \TypeVar \equal   {\tbullet^{\IntExp' - \IntExp}\MetaTypeS'}, 
 \IntExp' \mayor \IntExp, \IntExp \mayor 0  \}} {\setD'}$ 
}}
%
%
\Case{$\MetaType =\tbullet^{\IntExp} \MetaType'$ 
 \and{$\MetaTypeS =  \tbullet^{\IntExp'} \MetaTypeS'$}
 \and{$\equalcons{\MetaType'}{\MetaTypeS'}$} \label{algo:unif:case6}
\tcc*[f]{$\tbullet^{\IntExp} \MetaType' \equal \tbullet^{\IntExp'} \MetaTypeS'$}}
{ \Return{$
 \Unify {\setC' \cup \{ \MetaType' \equal \MetaTypeS',\IntExp \equal \IntExp' \} }{\setD'}$
}}

\Case{$\MetaType = \tbullet^{\IntExp} \MetaType' $
\and{$\equalcons{\MetaType'}{\MetaTypeS}$ \label{algo:unif:case7}}
\tcc*[f]{ $\tbullet^{\IntExp}\MetaType' \equal \MetaTypeS $}}
{ \Return{$
 \Unify {\setC' \cup \{ \MetaType' \equal\MetaTypeS, \IntExp \equal 0 \} }{\setD'}$
} }


\Case{$\MetaType \equal  \MetaTypeS \not \in \setD$ \tcc*[f]{$\MetaType \equal \MetaTypeS$ } 
\label{algo:unif:swap}}
{ \Return{$ \Unify {\setC' \cup \{ \MetaTypeS \equal \MetaType
\} }
           {\setD' } $ \tcc*[f]{swap  metatypes (symmetry) }}
}
\lOther{\Return{$\emptyset$} \label{algo:unif:last}\tcc*[f]{failure}
}
}}

    \caption{Unification Algorithm}
\label{algorithm:unification}
\end{algorithm} 

The base case  on Line \ref{algo:unif:basecase}  is  
when the set $\setC = \eqc{\setC} \cup \intc{\setC}$ is
substitutional and simple (see Definitions \ref{definition:constraints},  \ref{definition:substitutional} and \ref{definition:Calwaysnormal}). 
The set $\intc{\setC}$ guarantee that the exponents of the $\bullet$'s
are positive. 
 \cref{algo:unif:guards} invokes the function $\guards$ which creates a set of constraints
 to ensure guardedness. 
Line \ref{algo:unif:local} checks whether $\setE= \intc{\setC} \cup \Guards{\setC}$ 
has a solution.
Altogether, this implies that $\substE \circ \substT_{\setC}$ is guarded and positive
for all natural $\substE \models \setE$.
The function $\guards$ is defined in
   \cref{algorithm:guards}  (\cref{appendix:local}).
In order to check that 
$\setE $ has a solution of non-negative integers (Line \ref{algo:unif:local}),
we can use  any    algorithm for linear integer programming~\cite{opac-b1095053}.

If  $\setC$  is neither substitutional nor simple,  we examine the equations in $\eqc{\setC}$ 
(\cref{algo:notsubst} onwards).

The case that eliminates two equality constraints for the same variable
(\cref{algo:unif:case3})
removes  $\TypeVar \equal \MetaTypeS$ from $\setC$
 but it adds $\MetaTypeS\equal \MetaTypeS'$. 
 If $\setD$ already contains 
  $\TypeVar \equal \MetaTypeS$,  it means that  $\TypeVar \equal \MetaTypeS$ has already been processed
  and  it will not  be added again  avoiding non-termination.
The unification algorithm for the
 simply typed lambda calculus solves the problem of termination 
 in this  case 
 by reducing the number of variables, i.e. checks 
 if $\TypeVar \not \in \MetaTypeS$ and then, substitutes  
  the variable $\TypeVar$ by $\MetaTypeS$
   in the remaining set of constraints.
With recursive types, however, we do not perform the occur check
and  the number of variables may not decrease
since  
   the variable $\TypeVar$ may not disappear after substituting
   $\TypeVar $ by $\MetaTypeS$ 
   (because $\TypeVar$ occurs in 
   $\MetaTypeS$).
  In order to decrease the number of variables, we could perhaps  substitute 
 $\TypeVar$ 
 by  the solution of the recursive equation 
 $ \TypeVar = \MetaTypeS$ (which may be {\em infinite}). 
  But the problem of guaranteeing termination would still be
  present because  the meta-types in the constraints can be infinite
  and  the size   of the equality constraints may not decrease.
We would also have a similar problem with termination if we use multi-equations
and a rewrite relation instead of giving a function such as $\unify$ 
\cite{pottier-remy-emlti}.

 The function $\unify$ does not return only one solution
 but a set of solutions. 
 This is because there are  two ways of solving the equality constraint
 $\bullet^{\IntExp} X \equal  \bullet^{\IntExp'} X'$
 (Line  \ref{algo:unif:case4}).
If $\bullet^{\IntExp} X \equal  \bullet^{\IntExp'} X'$, we solve 
either $\IntExp \mayor \IntExp'$ and $X' \equal \bullet^{\IntExp-\IntExp'} X$ or $\IntExp' \mayor \IntExp$ and $X \equal \bullet^{\IntExp'-\IntExp} X'$.

 Line \ref{algo:unif:casevars} 
 is similar to Line \ref{algo:unif:case4}, In order to solve
  $\TypeVar \equal \bullet^\IntExp \TypeVar'$, we solve either 
 $\IntExp \equal 0$ and  $\TypeVar \equal \TypeVar'$
 or  $0 \menor \IntExp$ and $\TypeVar \equal\bullet^\IntExp \TypeVar'$. 
 This distinction is needed because 
  in the particular case when $\TypeVar= \TypeVar'$, 
  the solution on the first case is any meta-type while the solution on the second
  case is $\infinite$.
  Termination is guaranteed in this case by checking that the constraint 
  $\TypeVar \equal \bullet^\IntExp \TypeVar'$ has not been visited before.

  The case on Lines \ref{algo:unif:case5}
  uses the function $\isNatOpBox{\MetaType}$ which returns true iff 
  $\MetaType$ is either $\nat$ or $(\MetaTypeS_1 \binaryconstructor \MetaTypeS_2)$ 
  (see \cref{algorithm:test} in \cref{appendix:local}).  
  It is clear that if $\IntExp' < \IntExp$,
then the equation $\tbullet^{\IntExp}\TypeVar \equal \tbullet^{\IntExp'} \MetaTypeS'$
does not have a positive  solution, e.g.
 $\bullet^{5} \TypeVar \equal 
\bullet^{2} (\MetaType_1 \times \MetaType_2)$ then the solution is
$\{ \TypeVar \mapsto
\bullet^{-3} (\MetaType_1 \times \MetaType_2) \}$ which is not positive.

  The cases on Lines  \ref{algo:unif:case6} and \ref{algo:unif:case7}
   use the function $\equalcons{\MetaType}{\MetaTypeS}$ which returns true iff
   both types are $\nat$ or built from the same type constructor 
   (see \cref{algorithm:equalcons} in \cref{appendix:local}).

  The algorithm analyses the form of the equation $\MetaType \equal \MetaTypeS$
  and the cases are done in order which means that at 
  \cref{algo:unif:swap}, the equation does not have any of the forms described in the above cases
  but it could be that $\MetaTypeS \equal \MetaType$ does. For this, the algorithm  swaps the equation 
  and 
  calls $\unify$ again. Termination is guaranteed by checking that the constraint does not belong to
  $\setD$. 
 In the last case (\cref{algo:unif:last}),
the algorithm  returns the empty set of solutions when all the previous  cases have failed.

The {\em size} $\size{\setC}$ of a set of meta-type constraints
is  the sum of the 
number of type variables and type constructors 
 in  the left hand side of the equality constraints. 
 Since  the meta-types in $\setC$ are finite, the size is always finite.
We define
\[
\begin{array}{ll}
\subtypes(\setC) & = \{ \MetaType \mid
\MetaTypeS_1 \equal \MetaTypeS_2 \in \setC \mbox{ and }
\mbox{ either } \MetaTypeS_1 \mbox{ or } \MetaTypeS_2 
\mbox{ contains } \MetaType \}
\\\allconstraints (\setC) &  =
\{ \MetaType_1 \equal \MetaType_2 \mid 
\MetaType_1, \MetaType_2 \in \subtypes (\setC) \}
\end{array}
\]

\begin{theorem}[Termination]
\label{theorem:unificationtermination}
Let $\setC$ and $\setD$ be sets of constraints.
Then,  
$\Unify  {\setC}{\setD}$ terminates 
\end{theorem}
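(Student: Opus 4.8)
The plan is to exhibit a lexicographic termination measure on the argument pair $(\setC,\setD)$ and to verify that every recursive call made by $\Unify{\setC}{\setD}$ in \cref{algorithm:unification} strictly decreases it. The two ingredients are the bookkeeping set $\setD$, which is only ever enlarged and which blocks the re-processing of a constraint once it has been visited, and the structural quantity $\size{\setC}$, which decreases whenever a compound constraint is decomposed or a leading $\bullet$-prefix is stripped. First I would fix the top-level input and set $\mathcal{U} = \setD \cup \allconstraints(\setC)$, where submeta-types (and hence the constraints in $\allconstraints$) are taken \emph{up to the actual values of the $\bullet$-exponents}, so that $\tbullet^{\IntExp}\TypeVar$ and $\tbullet^{\IntExp'}\TypeVar$ count as the same shape. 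Since every meta-type occurring in a constraint is finite, $\subtypes(\setC)$ is a finite set of shapes and $\mathcal{U}$ is finite. The measure is then $\mu(\setC,\setD) = (\,|\mathcal{U} \setminus \setD|,\ \size{\setC}\,)$, ordered lexicographically; both components are natural numbers, so $\mu$ is well-founded.

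The heart of the argument is the invariant that, along any chain of recursive calls issuing from the fixed input, every meta-type constraint occurring in $\setC$ lies (up to exponent values) in $\mathcal{U}$, and $\setD$ grows only by constraints of $\mathcal{U}$. I would prove this by inspecting each clause: the decomposition clause for $\MetaType_1 \binaryconstructor \MetaType_2 \equal \MetaTypeS_1 \binaryconstructor \MetaTypeS_2$ adds only $\MetaType_1 \equal \MetaTypeS_1$ and $\MetaType_2 \equal \MetaTypeS_2$, between submeta-types already present; the clause for two constraints on the same variable (\cref{algo:unif:case3}) adds $\MetaTypeS \equal \MetaTypeS'$, again between existing right-hand sides; and the delay clauses (\cref{algo:unif:casevars}, \cref{algo:unif:case4}, \cref{algo:unif:case5}) produce constraints such as $\TypeVar \equal \tbullet^{\IntExp'-\IntExp}\TypeVar'$ whose head and subterms already occur, the only change being the integer exponent. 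This last point is exactly where the ``up to exponent values'' reading of $\mathcal{U}$ is indispensable, and establishing that no genuinely new shape is ever introduced is the main obstacle of the whole proof.

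Granting the invariant, the decrease of $\mu$ is checked clause by clause. The clauses carrying a guard $\MetaType \equal \MetaTypeS \not\in \setD$ — namely the duplicate-variable clause (\cref{algo:unif:case3}), the variable/delayed-variable clause (\cref{algo:unif:casevars}) and the swap clause (\cref{algo:unif:swap}) — move a genuinely new constraint into $\setD$, so $|\mathcal{U}\setminus\setD|$ strictly drops and $\mu$ decreases whatever happens to $\size{\setC}$; for \cref{algo:unif:casevars} this applies uniformly to both summands of the union on \cref{algo:unif:union3}. In every other clause the processed constraint may already belong to $\setD$, so $|\mathcal{U}\setminus\setD|$ can remain fixed, and there I check that $\size{\setC}$ strictly decreases: either a trivial constraint $\MetaType \equal \MetaType$ is simply discarded, or a constraint with left-hand side $\MetaType_1 \binaryconstructor \MetaType_2$ is replaced by its two smaller components, or a delay clause (\cref{algo:unif:case4}, \cref{algo:unif:case5}, \cref{algo:unif:case6}, \cref{algo:unif:case7}) replaces a left-hand side $\tbullet^{\IntExp}\MetaType'$ by $\MetaType'$, dropping one constructor — and on the two-summand union of \cref{algo:unif:union2} both branches strip such a prefix, so each lowers $\size{\setC}$. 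The only place where $\size{\setC}$ can \emph{grow} is the new branch of \cref{algo:unif:case3}, where $\TypeVar \equal \MetaTypeS$ is replaced by the possibly larger $\MetaTypeS \equal \MetaTypeS'$; but that branch lies in the first group, where $|\mathcal{U}\setminus\setD|$ has already strictly decreased, so $\mu$ still goes down.

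Finally, the base case (\cref{algo:unif:basecase}) and the failure clause (\cref{algo:unif:last}) make no recursive call and terminate outright, while the subroutine $\guards$ and the integer-feasibility test on \cref{algo:unif:local} are themselves terminating. Since each recursive call strictly decreases the well-founded measure $\mu$, no infinite chain of calls exists, and hence $\Unify{\setC}{\setD}$ terminates. As indicated, the expected difficulty lies entirely in the invariant of the second paragraph, not in the bounded-descent arithmetic of the third.
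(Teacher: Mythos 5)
Your strategy is the same as the paper's: a lexicographic measure whose first component tracks growth of the visited set $\setD$ inside the finite universe $\setD_0\cup\allconstraints(\setC_0)$, and whose second component is $\size{\setC}$, supported by the invariant that the algorithm never creates constraints outside that universe. The gap lies in your decrease argument for the guarded clauses. The guards of \cref{algo:unif:case3}, \cref{algo:unif:casevars} and \cref{algo:unif:swap} test \emph{literal} membership of the constraint in $\setD$, whereas your first component $|\mathcal{U}\setminus\setD|$ counts constraints \emph{up to the values of the $\bullet$-exponents}. These two notions do not match: a constraint can be literally new while its exponent-erased shape already occurs in $\setD$; then the guard passes but $|\mathcal{U}\setminus\setD|$ does not drop, and in exactly these clauses $\size{\setC}$ need not drop either (it is unchanged in both summands of \cref{algo:unif:union3}, and can grow in \cref{algo:unif:case3} and \cref{algo:unif:swap}). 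Concretely, start from $\setC=\{\TypeVar \equal \bullet^{\IntVar}\TypeVar',\ \TypeVar \equal \bullet^{\IntVarM}\TypeVar'\}$ and $\setD=\emptyset$. Processing the first constraint by \cref{algo:unif:case3} leads to the call $\Unify{\{\TypeVar \equal \bullet^{\IntVarM}\TypeVar',\ \bullet^{\IntVar}\TypeVar' \equal \bullet^{\IntVarM}\TypeVar'\}}{\{\TypeVar \equal \bullet^{\IntVar}\TypeVar'\}}$; there, selecting $\TypeVar \equal \bullet^{\IntVarM}\TypeVar'$ fires \cref{algo:unif:casevars} (the literal guard passes because $\IntVar$ and $\IntVarM$ are distinct integer variables), yet its shape is already represented in the visited set, and both branch sets created on \cref{algo:unif:union3} still have size $3$, equal to the size of the incoming constraint set. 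So neither component of your $\mu$ decreases at this call --- the algorithm terminates on this input, but your measure fails to certify it.

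It is instructive to see where the paper's (admittedly terse) proof stands relative to this. Its first component is $\size{\allconstraints(\setC_0)}+\size{\setD_0}-\size{\setD_i}$, which measures \emph{literal} growth of $\setD_i$; on the example above it does decrease, so the paper's measure survives exactly where yours stalls. The price is that well-foundedness of that component requires the containment $\setD_i\subseteq\setD_0\cup\allconstraints(\setC_0)$ to hold literally, and that is precisely what your ``up to exponents'' observation shows to be problematic: \cref{algo:unif:case4} and \cref{algo:unif:case5} mint fresh exponent expressions such as $\IntExp'-\IntExp$, so the generated constraints lie in $\allconstraints(\setC_0)$ only up to exponents. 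In short, you correctly isolated the crux, but your repair (counting exponent-erased shapes) trades the paper's boundedness problem for a decrease problem. A complete argument must reconcile the two, for instance by proving that along any run only finitely many literal exponent-variants of each shape can ever be selected under a passing guard (equivalently, bounding the set of integer expressions a run can create); neither your proposal as written nor the paper's sketch establishes such a bound.
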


\begin{proof}
Let $\setC_0 = \setC$ and $\setD_0 = \setD$.
Suppose  towards a contradiction that 
there is an infinite sequence of recursive calls:
\begin{equation}
\label{recursivecalls}
\Unify{\setC_0}{\setD_0}, \Unify{\setC_1}{\setD_1},\Unify {\setC_2}{\setD_2}, \ldots
\end{equation}
This means that $\Unify {\setC_i}{\setD_i}$ recursively calls 
$\Unify {\setC_{i+1}}{\setD_{i+1}}$ 
for all $i \geq 0$.
From the definition of the algorithm, one can see that 
 $\setC_i \subseteq \allconstraints (\setC_0)$
and $\setD_i \subseteq \setD_{i+1} \subseteq\setD_0 \cup \allconstraints (\setC_0)$ for all $i \geq 0$.
We define now a measure that decreases with each recursive call:
\[ \recn{i}  = 
(\size {\allconstraints (\setC_0)} + \size{\setD_0} - \size {\setD_i},
\size  {\setC_i})
\]
It is not difficult to show that $\recn{i} > \recn{i+1}$ contradicting the fact that
the sequence in \cref{recursivecalls} is infinite.
\end{proof}

\begin{theorem}[Completeness of Unification]
\label{theorem:completenessunification}
Let $\substTE$ be a  type substitution.
If  $\substTE \models \setC $ 
then there exists 
$\langle \substT, \setE \rangle \in \Unify {\setC} {\setD}$
such that 
$\substTE = \substTE' \circ \substT \restriction_{\dom(\substTE)}$ and $\substTE' = \substT' \cup \substE'$ and
$\substE' \models \setE$.
\end{theorem}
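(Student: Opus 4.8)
The plan is to prove completeness by well-founded induction along the recursion tree of the call $\Unify{\setC}{\setD}$, which is finite by \cref{theorem:unificationtermination}. Throughout I write $\substE$ for the (natural) integer part of the given type substitution $\substTE$, so that $\substTE \models \setC$ means $\appTS{\substTE}{\MetaType} \treeEq \appTS{\substTE}{\MetaTypeS}$ for every $\MetaType \equal \MetaTypeS \in \setC$. The first observation is that the factorisation is \emph{inherited verbatim}: the non-branching cases of the algorithm return exactly their recursive call, and the branching cases return a union of two recursive calls, so in each inductive case it is enough to show that $\substTE$ still models the transformed constraint set handed to the appropriate recursive call. The induction hypothesis then supplies a pair $\langle \substT, \setE\rangle \in \Unify{\setC_{\mathrm{new}}}{\setD'}$ with $\substTE = \substTE' \circ \substT \restriction_{\dom(\substTE)}$, $\substTE' = \substT' \cup \substE'$ and $\substE' \models \setE$, and this very pair lies in $\Unify{\setC}{\setD}$.

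For the base case ($\setC$ substitutional and simple) I use the most general unifier recalled after \cref{definition:substitutional}: since $\eqc{\setC}$ is a set of recursive equations with pairwise distinct left-hand variables and $\substTE \models \eqc{\setC}$, there is a unique $\substT_{\setC}$ and a $\substTE'$ with $\substTE = \substTE' \circ \substT_{\setC}$, where moreover $\substTE' = \substT' \cup \substE$, so the integer part is preserved and I take $\substE' = \substE$. It then remains to show that the algorithm does not fail on \cref{algo:unif:local}, i.e. that $\setE = \intc{\setC} \cup \Guards{\setC}$ has a non-negative integer solution. The witness is $\substE$ itself: it is natural and satisfies $\intc{\setC}$ because $\substTE \models \setC$, and it satisfies $\Guards{\setC}$ precisely because $\substTE$ is a \emph{type} substitution, which forces $\substE \circ \substT_{\setC}$ to be guarded; here I invoke the defining property of $\guards$ from \cref{algorithm:guards}, namely that $\substE \models \Guards{\setC}$ exactly captures guardedness of $\substE \circ \substT_{\setC}$. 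Hence $\langle \substT_{\setC}, \setE\rangle$ is returned with $\substE' = \substE \models \setE$.

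For the inductive cases I follow the algorithm's case split, using that $\substTE$ sends meta-types to genuine positive, guarded types, on which equality is tree equality; thus constructors are injective and any leading block of $\bullet$'s is pinned down by the integer exponents evaluated under $\substE$. Trivial equations and the swap on \cref{algo:unif:swap} are immediate; for equal head constructors (the $\binaryconstructor$ case and the $\equalcons{\cdot}{\cdot}$ cases on \cref{algo:unif:case6,algo:unif:case7}) injectivity yields that $\substTE$ models the decomposed constraints together with the matching integer (in)equalities; for two equations on one variable (\cref{algo:unif:case3}) transitivity gives $\substTE \models \MetaTypeS \equal \MetaTypeS'$. The genuinely disjunctive cases are \cref{algo:unif:casevars,algo:unif:case4}: there I read off the value $\substE(\IntExp)$ (respectively the sign of $\substE(\IntExp'-\IntExp)$) and note that $\substTE$ models exactly one of $\setCone,\setCtwo$, which suffices since $\Unify$ returns the union. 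Finally, on \cref{algo:unif:case5} the subterm $\substTE(\MetaTypeS')$ has a non-$\bullet$ head (as $\isNatOpBox{\MetaTypeS'}$ holds), forcing $\substE(\IntExp) \le \substE(\IntExp')$ and hence $\substTE \models \TypeVar \equal \bullet^{\IntExp'-\IntExp}\MetaTypeS'$ with $\substE \models \IntExp' \mayor \IntExp$.

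The main obstacle is ruling out the failure branches whenever $\substTE$ is an actual solution, and reconciling this with the visited set $\setD$. Reaching the dead end on \cref{algo:unif:last} requires that $\MetaType \equal \MetaTypeS$ match none of the structural cases; for a $\substTE$-solvable constraint this can only happen because a guard ``$\,\cdots \not\in \setD\,$'' on \cref{algo:unif:casevars,algo:unif:swap} fails, i.e. the constraint has already been recorded in $\setD$. The crux is therefore the invariant that every constraint inserted into $\setD$ is, under $\substTE$, a logical consequence of the constraints still being actively processed, so that re-encountering it (and being forced either to drop it in \cref{algo:unif:case3} or to abandon a $\setD$-guarded branch) never discards the solution $\substTE$ we are tracking. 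Establishing this preservation of $\substTE$-solvability across the unfolding of recursive equations recorded in $\setD$ is the delicate step, and it is exactly where the equirecursive reading of types (unique solutions with unlimited unfolding) is indispensable.
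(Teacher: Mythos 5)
Your proposal follows essentially the same route as the paper's proof: induction on the number of recursive calls (finite by the Termination theorem), the most-general-unifier factorization together with guardedness of the given type substitution in the substitutional-and-simple base case, and, for the disjunctive cases, reading off the sign of $\substE(\IntExp'-\IntExp)$ to see that $\substTE$ models one of the two branched constraint sets, so that the pair supplied by the induction hypothesis lies in the returned union. The paper's proof is in fact terser than yours---it writes out only the base case and the $\bullet^{\IntExp}\TypeVar \equal \bullet^{\IntExp'}\TypeVar'$ case---so the $\setD$-invariant you single out as the delicate step is left implicit there as well; flagging it is a refinement of, not a departure from, the paper's argument.
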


\begin{proof}
By induction on the  number of recursive calls which is finite by \cref{theorem:unificationtermination}.
We do only two cases.

Suppose $\setC$ is substitutional and simple.
Since $\substT = \substT_{\setC}$ is the most general unifier,
 there exists $\substTE' = \substT' \cup \substE'$ such that   
 $\substTE = \substTE' \circ \substT_{\setC}$.
 Since $\substTE$ is guarded, so is $\substE' \circ \substT_{\setC}$
 and $\substE' \models \setE = \Guards{\setC}$.
 
 Suppose  the constraint of $\setC$ which is being processed is
  $\bullet^{\IntExp} X \equal \bullet^{\IntExp'} X'$. Then, 
 
 \[ \bullet^{\substTE(\IntExp)} \substTE(X) = 
 \bullet^{\substTE(\IntExp')} \substTE(X')
 \]
 
 Then, there are two cases:
 \begin{enumerate} 
 \item Case $\substTE(\IntExp') \geq \substTE(\IntExp)$
 and $\substTE(X) = 
 \bullet^{\substTE(\IntExp')- \substTE(\IntExp)} \substTE(X')$.
 Then,
 $\substTE \models \setC_1 $ where $\setC_1$ is the set of constraints of 
 Line \ref{algo:unif:fstoption}.
  It follows by induction hypothesis that
  there exists 
$\langle \substT, \setE \rangle \in \Unify {\setC_1} {\setD}$
such that 
$\substTE = \substTE' \circ \substT \restriction_{\dom(\substTE)}$ and $\substTE' = \substT' \cup \substE'$ and
$\substE' \models \setE$.
We have that
 $\langle \substT, \setE \rangle \in \Unify {\setC} {\setD}$
 because  $\Unify {\setC_1} {\setD} \subseteq \Unify {\setC} {\setD}$.
  
 \item Case 
 $\substTE(\IntExp') < \substTE(\IntExp)$
 and $\substTE(X) = 
 \bullet^{\substTE(\IntExp)-  \substTE(\IntExp')} \substTE(X')$.
  Then,
 $\substTE \models \setC_2 $ 
 where $\setC_2$ is the set of constraints of 
 Line \ref{algo:unif:sndoption}. The  proof of this case is 
  similar to the first one.
\qedhere 
 \end{enumerate}
\end{proof}

\begin{theorem}[Soundness of Unification]
\label{theorem:soundnessunification}
Let $\setC$ be always positive.
If $\langle \substT, \setE \rangle \in \Unify {\setC}{\setD}$ then

\begin{enumerate}

\item   \label{theorem:soundnessunification:one}
there exists a natural $\substE$ such that
$\substE \models \setE$ and,

\item \label{theorem:soundnessunification:three}
for all  natural  $\substE$ such that
 $\substE \models \setE$, 
 
 \begin{enumerate}[(i)]
 \item 
$\substT \cup \substE \models \setC$ and,

 \item 
$\substT \cup \substE$ is a  type substitution.

  \end{enumerate}

\end{enumerate}

\end{theorem}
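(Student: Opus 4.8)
The plan is to argue by induction on the number of recursive calls made by $\Unify{\setC}{\setD}$, which is finite by \cref{theorem:unificationtermination}, exactly as in \cref{theorem:completenessunification}. Throughout I maintain as an invariant that the constraint set passed to each recursive call is again \emph{always positive}, so that the induction hypothesis applies. The key point making this invariant hold is that the algorithm adds integer inequalities (such as $\IntExp' \mayor \IntExp$ and $\IntExp \mayor 0$ on \cref{algo:unif:fstoption,algo:unif:sndoption,algo:unif:casevars}) precisely so that the newly introduced equality constraints, which carry exponent differences like $\bullet^{\IntExp'-\IntExp}$, stay positive under every natural $\substE$ that satisfies them; removing one always-positive equation and replacing it by such guarded equations therefore preserves always-positivity.

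For the base case, where $\setC$ is substitutional and simple and the algorithm returns $\langle \substT_{\setC}, \setE \rangle$ with $\setE = \intc{\setC} \cup \Guards{\setC}$, item (1) is immediate: the pair is only returned after the explicit check on \cref{algo:unif:local} that $\setE$ has a solution in non-negative integers. For item (2), fix any natural $\substE \models \setE$, so $\substE \models \intc{\setC}$ and $\substE \models \Guards{\setC}$. Since $\substT_{\setC}$ is the most general unifier of the recursive equations $\eqc{\setC}$, it satisfies them, and as $\substE$ acts only on integer variables we obtain $\substT_{\setC} \cup \substE \models \eqc{\setC}$; combined with $\substE \models \intc{\setC}$ this gives (i) $\substT_{\setC}\cup\substE \models \setC$. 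For (ii), regularity of $\substT_{\setC}(\TypeVar_i)$ comes from the existence and uniqueness of solutions of the recursive equations, guardedness follows from $\substE \models \Guards{\setC}$ (this is what the construction of $\guards$ is designed to ensure), and positivity of the infinite unfolding $\substE \circ \substT_{\setC}$ reduces to positivity of the finite right-hand sides, which is the always-positive hypothesis together with $\substE \models \intc{\setC}$; hence $\substT_{\setC}\cup\substE$ is a type substitution.

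For the inductive step I examine each case of the algorithm. In every recursive case the returned substitution $\substT$ and the set $\setE$ come unchanged from a recursive call $\Unify{\setC'}{\setD'}$ (or from one of the two calls in the union cases on \cref{algo:unif:union3,algo:unif:union2}), so items (1) and (2)(ii) are inherited directly from the induction hypothesis once the recursive argument has been checked to be always positive. The real work is item (2)(i): I must show $\substT\cup\substE \models \setC'$ implies $\substT\cup\substE \models \setC$, i.e. that the equation of $\setC$ that was replaced is recovered. This is routine for the structural cases (for instance replacing $\MetaType_1 \binaryconstructor \MetaType_2 \equal \MetaTypeS_1 \binaryconstructor \MetaTypeS_2$ by the componentwise equations), and for the $\bullet$-cases it is a short computation using the identification of $\bullet^{\IntExp}\bullet^{\IntExp'}\MetaType$ with $\bullet^{\IntExp+\IntExp'}\MetaType$: from $\appTS{\substT\cup\substE}{\TypeVar} \treeEq \bullet^{\substE(\IntExp')-\substE(\IntExp)}\appTS{\substT\cup\substE}{\TypeVar'}$ together with $\substE(\IntExp')\geq\substE(\IntExp)$ (which holds because the added inequality $\IntExp'\mayor\IntExp$ is satisfied) one recovers $\appTS{\substT\cup\substE}{\bullet^{\IntExp}\TypeVar} \treeEq \appTS{\substT\cup\substE}{\bullet^{\IntExp'}\TypeVar'}$. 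The union cases are dispatched by observing that $\langle\substT,\setE\rangle$ must lie in one of the two summands and applying the induction hypothesis to the corresponding $\setCone$ or $\setCtwo$.

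The step I expect to be the main obstacle is verifying uniformly across the $\bullet$-subtraction cases that the freshly added integer inequalities are simultaneously strong enough to (a) keep the transformed constraint set always positive, so that the induction hypothesis is applicable, and (b) force the exponent differences to be non-negative so that the original $\bullet$-prefixed equation is recovered. Making these two uses of the same inequalities line up, and confirming that deleting one equation from an always-positive set while inserting the replacement equations preserves always-positivity, is where the bookkeeping is most delicate.
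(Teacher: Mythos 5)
Your overall strategy---induction on the number of recursive calls (finite by \cref{theorem:unificationtermination}), with the base case discharged via the check on \cref{algo:unif:local}, the most general unifier $\substT_{\setC}$, guardedness from $\Guards{\setC}$, and positivity from the always-positive hypothesis---is exactly the paper's strategy, and your base case matches the paper's almost line by line. Note that the paper's proof essentially writes out \emph{only} that base case; everything you attempt beyond it is territory the paper leaves implicit.

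Unfortunately, your description of the inductive step contains a step that fails. You claim that in every recursive case it suffices to show that $\substT \cup \substE \models \setC'$ implies $\substT \cup \substE \models \setC$, i.e.\ that the removed equation is recovered from the constraints that replace it. This is true for the structural cases, the $\bullet$-subtraction cases and the union cases, as you argue, but it is false for the \emph{visited-constraint} branch of \cref{algo:unif:case3}: when the equation $\TypeVar \equal \MetaTypeS$ being processed already belongs to $\setD$, the algorithm returns $\Unify{\setC'}{\setD'}$ where $\setC' = \setC \setminus \{\TypeVar \equal \MetaTypeS\}$ and \emph{nothing is added in its place}, so a solution of $\setC'$ need not satisfy the dropped equation. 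Concretely, take $\setC = \{\TypeVar \equal \nat,\; \TypeVar \equal \TypeVarY\}$ (always positive, no bullets at all) and $\setD = \{\TypeVar \equal \nat\}$: if the algorithm examines $\TypeVar \equal \nat$ first, this branch discards it, the remaining set $\{\TypeVar \equal \TypeVarY\}$ reaches the base case, and the output contains a pair whose substitution maps $\TypeVar \mapsto \TypeVarY$ with solvable $\setE$, which does not satisfy $\TypeVar \equal \nat$. Since recursive calls strictly enlarge $\setD$, your induction hypothesis must be stated for arbitrary $\setD$, and the example shows that statement is simply false for some always-positive $(\setC,\setD)$; no amount of local bookkeeping in the other cases can repair this.

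What is missing is therefore an idea, not diligence: soundness of the memoization. One must either restrict to configurations reachable from an initial call with $\setD = \emptyset$ and maintain a global invariant saying that every equation moved into $\setD$ was, at the moment it was processed, a consequence of the constraints then present---and then argue (essentially coinductively on $\treeEq$, in the style of soundness proofs for unification/subtyping of regular recursive types) that the substitution produced at the base case still satisfies all such remembered equations---or equivalently strengthen the induction hypothesis so that it carries this history. Your flagged ``main obstacle'' (preserving always-positivity across calls, and reusing the added inequalities both for positivity and for recovering the $\bullet$-equations) is real but handled correctly in your sketch; the genuine gap is the $\setD$-dropping case, which your proposal never mentions. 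To be fair, the paper's own two-line proof is silent on this point as well.
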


\begin{proof}
By induction on the number of recursive calls  which is finite by \cref{theorem:unificationtermination}.
We do only two cases.

 Suppose first that $\setC$ is substitutional and simple. Then, $\substT = \substT_{\setC}$ and 
 $\setE = \intc{\setC} \cup \Guards{\setC} $.
 \cref{theorem:soundnessunification:one} follows from the check  on \cref{algo:unif:local}.
 Assume that $\substE$ is natural and $\substE \models \setE $. 
 Part (i) of \cref{theorem:soundnessunification:three} holds since
 $\substT_{\setC} \cup \substE \models \setC$.
 We also have that $\substE \models \intc{\setC}$ and 
 $\substE \models \Guards{\setC}$.
Since  $\setC$ is always positive, all the meta-types in 
$\substE(\eqc{\setC})$ are positive and hence, 
 $\substT_{\setC} \cup \substE$  is positive.
 We also have that
 $\substT_{\setC}\cup \substE$ is guarded because $\substE \models \Guards{\setC}$.
 \end{proof}

\subsection{Type Inference Algorithm}
\label{subsection:algorithm}

 \cref{table:typeinference} solves the type inference problem in $\lambdabb$.
 Its inputs are  a decidable function  $\cond$ from meta-types to booleans 
 and a closed expression $\Expression$ and 
 the output is  a finite set of meta-types that cover
 all the possible types of 
 $\Expression$, i.e. 
 any type  of $\Expression$ 
 is an instantiation of one of those meta-types.

 \cref{algo:generateconstraints} computes $\MetaType$ and $\setC$
 such that $\typec{\emptyset}{\Expression}{\MetaType}{\setC}$
 from the constraint typing rules of \cref{subsection:constraint}.
 \cref{algo:callunify} calls
 the unification algorithm. The first argument of $\unify$ is
  the set $\setC$    and the second one is 
 $\emptyset$ for the set of visited equality constraints.   
  \cref{check:infinite} checks  that the meta-type  $ \appTS{\substT}{\MetaType}$
  satisfies the property $\cond$.   
  We are interested in 
  certain properties   $\cond (\MetaType)$ given below:
  
  \begin{enumerate}
  
  \item $\truecond{\MetaType}$ which returns true for all $\MetaType$.
  
  \item $\equalinfinite{\MetaType}$ returns true iff  $\MetaType$ 
   is different from  $\infinite$.
   The function $\equalinfinite{\MetaType}$ is  defined in \cref{algorithm:infinite}.
   The inference algorithm can be used in this case for  
excluding   non-normalizing programs.

\item 
$\freeinfinite{\MetaType}$
is true iff  $\MetaType$ is $\infinite$-free.
The function 
$\freeinfinite{\MetaType}$
 is defined in \cref{algorithm:infinitefree}.
The inference algorithm can be used in this case for  
excluding 
programs that have L\`evy-Longo trees with  $\bot$.

\item $\tailfinite{\MetaType}$ which checks 
whether $\MetaType$ is tail finite
 The function $\tailfinite{\MetaType}$
is defined in \cref{algorithm:tailfinite}.
The inference algorithm can be used in this case for  
excluding 
 programs that  have B\"ohm trees with $\bot$.
\end{enumerate}

\begin{algorithm}
\SetKwInOut{Input}{input}\SetKwInOut{Output}{output}
\SetKwFunction{True}{true}
\SetKwFunction{False}{false}
\SetKw{where}{{\small where}}
\SetKwHangingKw{sea}{{let}}
\SetKw{and}{{\small and}}
\SetKw{Return}{{\small return}}
\SetKw{Disj}{{\small or}}
\LinesNumbered

\Fn{$\infertype (\cond, \Expression)$}{
\Input{$\cond$ is a decidable function from meta-types to booleans 
and $\Expression$ is a closed  expression}
\Output{set of pairs $\langle \MetaType, \setE \rangle$ such that
$\Type =  \appTS{\substE}{\MetaType}$ and 
$\vdash \Expression :\Type$ for all $\substE \models \setE$  }

Compute $\MetaType$ and $\setC$ such that
 $\typec{\emptyset}{\Expression}{\MetaType}{\setC}$ \label{algo:generateconstraints}

$\assign{\setT}{\emptyset}$

\ForEach{$\langle \substT, \setE \rangle \in \Unify{ \setC}{ \emptyset}$ \label{algo:callunify}}
        {\lIf {$\cond (\appTS{\substT}{\MetaType})$ \label{check:infinite}}
             { $\assign{\setT}{\setT \cup \langle \appTS{\substT}{\MetaType}, \setE \rangle $}}}
 \Return{$\setT$}
 }

\caption{Type Inference Algorithm}

\label{table:typeinference}

\end{algorithm}

\begin{theorem}[Completeness  of Type Inference]
\label{theorem:completenesstypeinference} 
Let $\vdash {\Expression}: {\Type}$ in $\lambdabb$. 
Then,

\begin{enumerate}
 
\item there exists 
 $\langle \MetaType, \setE \rangle \in \infertype (\pone, \Expression)$
 such that
 $\Type = \appTS{\substTE}{\MetaType}$ for some $\substTE = \substT \cup \substE$
 and 
 $\substE \models \setE$,

\item $\langle \MetaType, \setE \rangle \in \infertype (\ptwo, \Expression)$ 
if 
 $\Type\not=\infinite$,

\item  $\langle \MetaType, \setE \rangle \in \infertype (\pthree, \Expression)$ 
if  the L\`evy-Longo tree of  $\Expression$ has no $\bot$'s and

\item $\langle \MetaType, \setE \rangle \in \infertype (\pthree, \Expression)$ 
if  the B\"ohm  tree of  $\Expression$ has no $\bot$'s.

\end{enumerate}
\end{theorem}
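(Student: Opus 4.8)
The plan is to route the given typing derivation through the three completeness results already in hand and then verify that the optional filter $\cond$ does not discard the solution so produced. First I would use \cref{lemma:lambdabbminus} to turn the hypothesis $\vdash \Expression : \Type$ in $\lambdabb$ into a derivation $\vdash \Expression : \Type$ in $\lambdabbminus$; since $\Expression$ is closed, both contexts are empty. Line~\ref{algo:generateconstraints} of \cref{table:typeinference} computes the (syntax-directed, hence essentially unique up to renaming of fresh variables) pair $\MetaType, \setC$ with $\typec{\emptyset}{\Expression}{\MetaType}{\setC}$, and \cref{theorem:completenessconstraint} supplies a type substitution $\substTE$ with $\substTE \models \setC$ and $\appTS{\substTE}{\MetaType} \treeEq \Type$. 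Feeding $\substTE \models \setC$ into \cref{theorem:completenessunification} (with $\setD = \emptyset$) yields a pair $\langle \substT, \setE \rangle \in \Unify{\setC}{\emptyset}$ together with a factorisation $\substTE = \substTE' \circ \substT$, where $\substTE' = \substT' \cup \substE'$ and $\substE' \models \setE$. Writing $\MetaTypeR = \appTS{\substT}{\MetaType}$ for the meta-type the loop on Line~\ref{algo:callunify} attaches to this branch, we obtain $\Type = \appTS{\substTE'}{\MetaTypeR}$. Because $\pone$ accepts every meta-type, $\langle \MetaTypeR, \setE \rangle$ survives the test on Line~\ref{check:infinite} and lands in $\infertype(\pone, \Expression)$; taking the statement's substitution to be $\substTE'$ proves part~(1).

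The remaining parts differ only in that we must argue the filter keeps $\langle \MetaTypeR, \setE \rangle$. The device I would isolate is a \emph{reflection lemma}: for a type substitution $\substTE$ and a meta-type $\MetaType$, if $\appTS{\substTE}{\MetaType}$ is $\neq \infinite$ (respectively $\infinite$-free, respectively tail finite), then $\ptwo$ (respectively $\pthree$, $\pfour$) already holds of $\MetaType$. This is because a type substitution only \emph{extends} the tree of $\MetaType$ — it replaces type-variable leaves by meta-types and instantiates each $\bullet$-exponent by a natural number — so it can neither erase an all-$\bullet$ subtree nor shorten an infinite alternation of $\bullet$'s and $\arrow$'s; contrapositively, each forbidden pattern present in $\MetaType$ persists in $\appTS{\substTE}{\MetaType}$, which is precisely the reflection of the predicate from the instance down to the meta-type. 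Part~(2) is then immediate: the given $\Type = \appTS{\substTE'}{\MetaTypeR}$ satisfies $\Type \neq \infinite$, so reflection gives $\ptwo(\MetaTypeR)$, the pair passes Line~\ref{check:infinite}, and $\langle \MetaTypeR, \setE \rangle \in \infertype(\ptwo, \Expression)$.

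For parts~(3) and~(4) the same backbone works once we have a type of $\Expression$ of the required shape: if $\Expression$ admits an $\infinite$-free (respectively tail finite) type $\Type'$, then rerunning the first paragraph's argument on a derivation of $\Type'$ (the constraint set $\setC$ is unchanged, only the chosen unification branch and witness vary) produces a branch whose meta-type is, by reflection, $\infinite$-free (respectively tail finite), hence accepted by $\pthree$ (respectively $\pfour$). The main obstacle is therefore not the algorithmic bookkeeping but supplying such a $\Type'$ from the purely semantic hypotheses. Concretely, I expect the hard step to be the converses of \cref{theorem:llt} and \cref{theorem:bt}: that a closed, typable $\Expression$ whose L\`evy-Longo (respectively B\"ohm) tree carries no $\bot$ must in fact possess an $\infinite$-free (respectively tail finite) type. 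I would prove these by reconstructing a type from the bounded branching of the tree, using \cref{lem:inv} and \cref{theorem:subjectreduction} to propagate the shape of each finite approximant of the tree into the corresponding finite portion of the type — the arguments of \cref{theorem:llt} and \cref{theorem:bt} run in the opposite direction. (I also read the $\pthree$ in clause~(4) as a typographical slip for $\pfour$, matching the B\"ohm-tree/tail-finite correspondence.)
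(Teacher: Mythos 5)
For parts (1) and (2) your plan coincides with the paper's proof: part (1) is exactly the chain \cref{lemma:lambdabbminus}, then \cref{theorem:completenessconstraint}, then \cref{theorem:completenessunification} (including the factorisation $\substTE = \substTE' \circ \substT$ and the identification of the pair output by the loop on \cref{algo:callunify}), and your ``reflection lemma'' specialised to $\ptwo$ is precisely the paper's one-line argument for (2): if $\appTS{\substTE'}{\MetaType} = \Type \neq \infinite$ then $\MetaType$ cannot be $\infinite$, so the pair survives the filter.

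The divergence, and the gap, is in parts (3) and (4). The paper disposes of them with ``the remaining parts can be proved similarly'', which only makes sense if the hypotheses are read as conditions on the given type $\Type$ (that $\Type$ is $\infinite$-free, respectively tail finite): then $\MetaType$ inherits the property from $\Type = \appTS{\substTE'}{\MetaType}$ by the very same reflection step used for (2), and the pair passes $\pthree$ (respectively $\pfour$). Your reflection lemma already covers exactly this, so under that reading your proof closes just as the paper's does. You instead take the stated hypotheses literally (the L\'evy-Longo/B\"ohm tree of $\Expression$ has no $\bot$), observe correctly that the given $\Type$ then need not have the required shape (e.g.\ $\lambda x.x$ has type $\infinite \arrow \infinite$, which is not $\infinite$-free, while its L\'evy-Longo tree has no $\bot$), and therefore reduce (3)--(4) to converses of \cref{theorem:llt} and \cref{theorem:bt}: that a typable closed expression with a $\bot$-free tree possesses an $\infinite$-free (respectively tail finite) type. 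These converses appear nowhere in the paper, are substantial claims in their own right, and your proposal only gestures at a proof (``reconstruct a type from the tree approximants''). That sketch faces an obstacle you do not address: a type in this system must be \emph{regular}, and nothing in a depth-by-depth reconstruction from an infinite tree guarantees that the limiting pseudo-type has finitely many distinct subtrees, nor that a single type simultaneously types all approximants. So, as written, your treatment of (3) and (4) is incomplete; it trades the theorem for two unproven lemmas whose difficulty you yourself flag. Your side remarks are nevertheless well taken: reading $\pthree$ in part (4) as a slip for $\pfour$ matches the analogous slip in item (iv) of \cref{theorem:soundnesstypeinference}, and the mismatch you identified between the statement's hypotheses and the paper's ``similarly'' is a real infelicity of the paper --- but flagging it does not discharge the proof obligation your literal reading takes on.
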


\begin{proof}
The first part follows from 
\cref{lemma:lambdabbminus}, 
\cref{theorem:completenessconstraint}
and \cref{theorem:completenessunification}.
We show only the second part since the remaining parts can be proved similarly.
Suppose $\substTE(\MetaType) = \Type\not=\infinite$.
But this means that $\MetaType$ cannot be $\infinite$ and 
$\langle \MetaType, \setE \rangle \in \infertype (\ptwo, \Expression)$.
\end{proof}

\begin{theorem}[Soundness of Type Inference]
\label{theorem:soundnesstypeinference} 
 If $\langle \MetaType, \setE \rangle \in \infertype (\cond, \Expression)$
 then   
 
 \begin{enumerate}
 
\item \label{theorem:soundnesstype:zero}
$\cond (\MetaType)$ is true,
 
\item \label{theorem:soundnesstype:one}
there exists  a natural $\substE$ such that $\substE \models \setE$ and 
 \item 
\label{theorem:soundnesstype:two}
 for all natural $\substE $ such that $\substE \models \setE$,
 \begin{enumerate}[(i)]
 \item  \label{theorem:soundnesstype:twoone}
  $\appTS{\substE}{\MetaType } =  \Type$ is a type 
  and $\vdash \Expression:\Type$ in $\lambdabb$. 
  
 \item \label{theorem:soundnesstype:twotwo}
 If $\cond$ is $\ptwo$ then $\Type \not = \infinite$.
 
  \item \label{theorem:soundnesstype:twothree}
 If $\cond$ is $\pthree$ then $\Type$ is $\infinite$-free.
 
  \item \label{theorem:soundnesstype:twofour}
 If $\cond$ is $\ptwo$ then $\Type$ is tail finite.

\end{enumerate}

 \end{enumerate}
\end{theorem}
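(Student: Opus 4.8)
The plan is to unfold the definition of $\infertype(\cond,\Expression)$ in \cref{table:typeinference} and then chain together the soundness results already available for the two routines it calls. Membership $\langle \MetaType, \setE\rangle \in \infertype(\cond,\Expression)$ means that there is a pair $\langle \substT,\setE\rangle \in \Unify{\setC}{\emptyset}$, where $\typec{\emptyset}{\Expression}{\MetaTypeR}{\setC}$ is the judgement computed on \cref{algo:generateconstraints}, such that $\MetaType = \appTS{\substT}{\MetaTypeR}$ and the test $\cond(\appTS{\substT}{\MetaTypeR})$ on \cref{check:infinite} succeeds. \Cref{theorem:soundnesstype:zero} is then immediate, since the pair is inserted into $\setT$ only when that test holds and its argument is exactly $\MetaType$.

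Before invoking unification soundness I must discharge its hypothesis, namely that $\setC$ is always positive. I would observe that every constraint produced by the rules of \cref{table:constraints} is a meta-type equality whose $\tbullet$-exponents are single fresh integer variables, and that these rules generate no integer constraints, so $\intc{\setC}=\emptyset$. Hence for every natural $\substE$ each exponent becomes a natural number and every meta-type in $\appTS{\substE}{\eqc{\setC}}$ is positive, which is precisely that $\setC$ is always positive. With this in place, \cref{theorem:soundnesstype:one} is exactly \ri{\cref{theorem:soundnessunification}}{\cref{theorem:soundnessunification:one}}.

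For \cref{theorem:soundnesstype:twoone} I would fix a natural $\substE$ with $\substE\models\setE$ and put $\substTE = \substT\cup\substE$. By \ri{\cref{theorem:soundnessunification}}{\cref{theorem:soundnessunification:three}} we obtain $\substTE \models \setC$ and that $\substTE$ is a type substitution, so \cref{theorem:soundnessconstraint} applied to $\typec{\emptyset}{\Expression}{\MetaTypeR}{\setC}$ gives $\vdash \Expression : \appTS{\substTE}{\MetaTypeR}$ in $\lambdabbminus$ with $\appTS{\substTE}{\MetaTypeR}$ a type. Using the identity $\substT\cup\substE=\substE\circ\substT$ I rewrite $\appTS{\substTE}{\MetaTypeR}=\appTS{\substE}{\appTS{\substT}{\MetaTypeR}}=\appTS{\substE}{\MetaType}=\Type$, and transport the derivation from $\lambdabbminus$ to $\lambdabb$ by \cref{lemma:lambdabbminus}. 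This settles (i).

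The remaining items (ii)--(iv) each specialise $\cond$ to one of $\ptwo$, $\pthree$, $\pfour$. Since \cref{theorem:soundnesstype:zero} already yields $\cond(\MetaType)$, I would read off the corresponding property of $\Type=\appTS{\substE}{\MetaType}$ from the correctness of these decision functions (defined in the appendix): each is built so that its truth on a meta-type forces the property --- being different from $\infinite$, being $\infinite$-free, and being tail finite respectively --- to hold on every positive instantiation. The main obstacle I anticipate is exactly this transfer step: one must prove that each predicate is stable under positive substitution, i.e.\ that truth at the meta-type level is preserved in every concrete type $\appTS{\substE}{\MetaType}$. Because a substitution can in principle create or destroy $\tbullet$-paths and occurrences of $\infinite$, this requires checking that each function quantifies correctly over all natural $\substE$; I would isolate it as a separate correctness lemma for $\ptwo$, $\pthree$ and $\pfour$, from which (ii)--(iv) follow directly.
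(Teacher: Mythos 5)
Your proposal is correct and follows essentially the same route as the paper's proof: item 1 from the test performed before a pair is added to the result set, item 2 from soundness of unification after discharging its hypothesis that the generated constraint set is always positive (the paper does this by induction on the constraint typing rules, you by direct inspection of the exponents, which amounts to the same thing), and item 3(i) by chaining soundness of unification, soundness of constraint typing, and the equivalence of $\lambdabbminus$ with $\lambdabb$. The only divergence is that you isolate the transfer of $\ptwo$, $\pthree$, $\pfour$ from the meta-type $\MetaType$ to its instantiation $\appTS{\substE}{\MetaType}$ as a separate correctness lemma, whereas the paper treats (ii)--(iv) as immediate from items 1 and (i); your extra care is sound but not a different approach, since a natural $\substE$ only instantiates integer exponents and so cannot create or destroy non-$\bullet$ structure in the tree.
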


\begin{proof} 
Suppose $\langle \MetaType, \setE \rangle \in \infertype (\Expression)$.
It follows from the definition of \cref{table:typeinference}
that  $\MetaType = \appTS{\substT}{\MetaTypeS}$
for some  $\typec{\emptyset}{\Expression}{\MetaTypeS}{\setC}$
and $\langle \substT, \setE \rangle \in \Unify {\setC}{\emptyset}$.

\cref{theorem:soundnesstype:zero} follows from the definition of 
\cref{table:typeinference}.

\cref{theorem:soundnesstype:one}   follows from  \cref{theorem:soundnessunification:one} of 
\cref{theorem:soundnessunification} and from the fact that $\setC$ is always positive.
One can show by induction on the rules of  Table \ref{table:constraints}
that $\setC$ is  always positive. 

In order to prove  \cref{theorem:soundnesstype:two}, assume that 
$\substE$ is natural and $\substE \models \setE$ and take 
$\substTE = \substT \cup \substE$.
We only prove  \ref{theorem:soundnesstype:twoone} since 
Items \ref{theorem:soundnesstype:twotwo}, \ref{theorem:soundnesstype:twothree}
 and \ref{theorem:soundnesstype:twofour} follow from  
 Items \ref{theorem:soundnesstype:zero} and \ref{theorem:soundnesstype:twoone}.
It follows from \cref{theorem:soundnessunification:three} of  
\cref{theorem:soundnessunification}
that $\substTE \models \setC$ and 
$\substTE$ is a  type substitution.
By \cref{theorem:soundnessconstraint},
$\wte{ }{\Expression}
{\appTS{\substTE}{\MetaTypeS}}$ in $\lambdabbminus$ and 
 $\appTS{\substTE}{\MetaTypeS} = \appTS{\substE}{\MetaType} = \Type $ is a type.
By \cref{lemma:lambdabbminus}, we have that 
$\wte{}{\Expression}{\appTS{\substTE}{\MetaTypeS}}$ in $\lambdabb$.

\end{proof}

\subsection{Examples}
\label{subsection:examplestypeinference}

We now illustrate the  type inference algorithm through some examples.

Consider $\Expression = \lambda x. x$. 
The result of   $\infertype (\pone, \lambda x. x)$ is a set that contains only one meta-type:
\[
\begin{array}{ll}
\MetaTypeS = \bullet^{\IntVar} (\TypeVar_1 \rightarrow \bullet^{\IntVarM} \TypeVar_1)
\end{array}
\]
with the implicit set of integer constraints $\IntVar, \IntVarM \geq 0$.
This meta-type covers all solutions, i.e. any type of $\Expression = \lambda x.x$
in $\lambdabb$ 
is an instantiation of $\MetaTypeS$.
How is this meta-type obtained by the algorithm?
We can apply  the rule \refrule{\introarrow} of  \cref{table:constraints}
 which gives the type derivation
 $\typec{\emptyset}{\Expression}{\MetaType}{\setC}$
where $\MetaType = \bullet^{\IntVar} (\TypeVar_1 \arrow \TypeVar_2)$
and $\setC = \{ \TypeVar \equal  \bullet^{\IntVar} \TypeVar_1, 
                  \bullet^{\IntVarM} \TypeVar \equal  \bullet^{\IntVar} \TypeVar_2 \}$.
The result of $\Unify{\setC}{\emptyset}$ is
    $\substT= \{ \TypeVar \mapsto   \bullet^{\IntVar} \TypeVar_1, 
                  \TypeVar_2 \mapsto   \bullet^{\IntVarM} \TypeVar_1 \}$
                  and
$\appTS{\substT}{\MetaType} =                  
     \bullet^{\IntVar} (\TypeVar_1 \rightarrow \bullet^{\IntVarM} \TypeVar_1) = \MetaTypeS$.

Consider now  $\lambda x. x x$.  Then, the result of 
$\infertype (\pone, \lambda x. x x)$  is  a set that contains two meta-types.
The first meta-type is  $\MetaTypeS_1 = \bullet^{\IntVara} (\TypeVarA \rightarrow \bullet^{\IntVarc - \IntVara} \TypeVarD)$
where $\TypeVarA$ should satisfy the recursive equation:
\begin{equation}
\label{equation:recursivetypeone}
\TypeVarA \equal \bullet^{\IntVarc -(\IntVara + \IntVard)} (\bullet^{\IntVara + \IntVare - \IntVarc} \TypeVarA\rightarrow\TypeVarD)
\end{equation}
and the set  of integer constraints for $\MetaTypeS_1$ is:
\begin{equation}
\label{equation:constraintsone}
\setE_1 =  \{\IntVara + \IntVare \geq \IntVarc \geq \IntVara + \IntVard,
\IntVare -\IntVard \geq 1 \}
\end{equation}
The constraints 
$ \IntVara + \IntVare \geq \IntVarc \geq \IntVara + \IntVard$
 guarantee that the exponents of the $\bullet$'s are all positive.
 The constraint $\IntVare -\IntVard \geq 1$
  forces  the recursive type to be guarded.

The second meta-type is  $\MetaTypeS_2 = \bullet^{\IntVara} (\bullet^{\IntVarc - (\IntVara+ \IntVare)} \TypeVarC \rightarrow \bullet^{\IntVarc - \IntVara} \TypeVarD)$
where $\TypeVarC$ should satisfy the recursive equation:
\begin{equation}
\label{equation:recursivetypetwo}
\TypeVarC \equal \bullet^{\IntVare - \IntVard} ( \TypeVarC\rightarrow\TypeVarD)
\end{equation}
and the set of integer constraints for $\MetaTypeS_2$ is:
\begin{equation}
\label{equation:setconstraintstwo}
\setE_2 = \{ \IntVarc > \IntVara + \IntVare, \IntVare \geq \IntVard, \IntVare -\IntVard \geq 1 \}
\end{equation}
The constraints 
$ \IntVarc > \IntVara + \IntVare$ and $ \IntVare \geq \IntVard$
 guarantee that the exponents of the $\bullet$'s are all positive.
 The constraint $\IntVare -\IntVard \geq 1$
  forces  the recursive type to be guarded.

\subsection{Typability and Type Checking}
\label{subsection:typability}

Typability (finding out if the program is typable or not) 
in $\lambdabb$ can be  solved by checking whether 
$\infertype (\pone, \Expression) \not = \emptyset$.
By \cref{theorem:completenesstypeinference}, if 
$\infertype (\pone, \Expression) = \emptyset$ then there is no 
$\Type$ such that
$\vdash \Expression: \Type$ in $\lambdabb$.
By \cref{theorem:completenesstypeinference}, if 
$\infertype (\Expression) \not = \emptyset$ then there exists 
$\Type$ such that 
$\vdash \Expression: \Type$ in $\lambdabb$.

Type checking is usually an easier problem than type inference.
In the case of  $\lambdabb$ (given an an expression $\Expression$ and a type $\Type$ check if $\vdash \Expression :\Type$), type checking 
can easily be solved  
by inferring the (finite) set of meta-types for $\Expression$ and
checking whether one of these meta-types  unifies with $\Type$.

The type inference algorithm is exponential in the size of the input
because of the unification algorithm which branches (case  \ref{algo:unif:case4} of 
\cref{algorithm:unification}). 
If we are only interested in typability (and not type inference), then the complexity could be reduced to 
$\NP$ by ``guessing the solutions''.
Instead of giving the union of the results
 in \cref{algo:unif:union2} of \cref{algorithm:unification}, we could 
 choose one of the options non-deterministically and 
 call $\unify$ only ``once''.

\section{Related Work}
\label{section:relatedwork}

This paper is an improvement over past typed lambda calculi with a
temporal modal operator  like $\bullet$ in two respects.
Firstly, we do not need any subtyping relation as in
\cite{Nakano00:lics} and secondly  programs are not cluttered
with constructs for the introduction and elimination of individuals of
type $\bullet$  as in
\cite{KrishnaswamiB11,severidevriesICFP2012,KrishnaswamiBH12,AM13,DBLP:conf/popl/CaveFPP14,CBGB15,DBLP:conf/fossacs/BizjakGCMB16,Guatto2018}.

Another type-based approach for ensuring productivity 
are sized types \cite{DBLP:conf/popl/HughesPS96}.
Type systems using size types do not always  have neat properties: 
 strong normalisation
 is gained  by contracting the fixed point operator
 inside a case and they lack the property of
  subject reduction \cite{DBLP:conf/lics/Sacchini13}.
Another disadvantage of size types is that they 
do not include negative occurrences of the recursion variable
\cite{DBLP:conf/popl/HughesPS96} which are useful for some applications \cite{DBLP:conf/esop/SvendsenB14}.

The  proof assistant Coq does not  ensure productivity through
typing but by means of  a syntactic guardedness condition
(the recursive calls should be guarded by
constructors) \cite{GimenezCasteranTutorialCoq,Coquand93}  
 which is somewhat restrictive since it rules out
 some interesting functions \cite{severidevriesICFP2012,CBGB15}.

A sound but not complete type inference algorithm for Nakano's
type system is presented in \cite{Rowe2012}.
This means that the expressions typable by the algorithm
are also typable in Nakano's system but the converse
is not true.
Though this algorithm is tractable, it is not clear
to which type system it corresponds.


\section{Conclusions and Future Work}
\label{section:conclusions}

%
%
%

The typability problem  is trivial in $\lambda \mu$
because all expressions are typable 
using  $\mu \TypeVar. (\TypeVar \rightarrow \TypeVar)$.
In $\lambdabb$, this problem turns out to be  interesting  
because it  gives us
 a way of filtering programs that do not satisfy certain properties, i.e.
 normalization, having a L\`evy-Longo or a B\"ohm tree without $\bot$.
 It is also challenging because it involves the generation of 
 integer constraints.



Due to the high complexity of the type inference algorithm, it will be important
to  find optimization techniques (heuristics, use of concurrency, etc) to make it feasible.
 
%

 It will be interesting to investigate the interaction 
 of this modal  operator  with dependent types.
As observed in Section \ref{section:calculus}, our type system is not
closed under $\eta$-reduction and this property may be useful for establishing a better
link with the denotational semantics.  
We leave the challenge of 
attaining a normalising and decidable type system  closed under  $\beta\eta$-reduction
for future research. 
It will also be interesting to investigate ways of extending $\lambdab$ to be able
to type the programs $\get$ and $\take$ of Section \ref{section:untypable}.

\paragraph{Acknowledgments.} 
I am  grateful to Mariangiola Dezani-Ciancaglini,
the anonymous reviewers of FOSSACS 2017 and LMCS for their useful
suggestions, which led to substantial improvements.


\newpage

\appendix

\section{Proof of \cref{theorem:soundness}}
\label{appendix:normalisation}

\begin{lemma}\label{lem:B}
\begin{enumerate}
\item\label{lemma:closedunderreductionandexpansion}
Let $\Expression\red\Expression'$. 
Then\comma 
$\Expression\in\indti\Type\ind$ iff $\Expression'\in\indti\Type\ind$
for all $\ind \in \natset$ and type $\Type$.
\item \label{lemma:constants}
If $\Constant:\Type \in \ConsCtx$, then
 $\Constant\in \bigcap_{\ind \in \natset} \indti{\Type}{\ind}$.
\end{enumerate}
\end{lemma}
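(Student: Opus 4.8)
The plan is to establish the two parts in order, since the statement about constants will rely on closure under reduction and expansion. The engine of the whole argument is that $\red$ is deterministic weak head reduction: if $\Expression\red\Expression'$ then $\Expression$ is not a normal form, and for any normal form $O$ we have $\Expression\red^*O$ iff $\Expression'\red^*O$ (the reverse direction is immediate, and the forward direction holds because the first step of $\Expression\red^*O$ can only be $\Expression\red\Expression'$). More uniformly, the reduct sets satisfy $\{W\mid\Expression\red^*W\}=\{\Expression\}\cup\{W\mid\Expression'\red^*W\}$, and I would use this identity rather than reason step by step about individual clauses of the interpretation.

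For the first part I would induct on $(\ind,\typerank(\Type))$, following \cref{definition:firsttypeinterpretation}. When $\Type$ is $\nat$ or $\TypeVar$, membership reduces to reachability of a normal form ($\Context[x]$, or a numeral $n$), so the determinism fact above gives the equivalence at once. When $\Type=\TypeS_1\times\TypeS_2$, the witness $\Pair{\Expression_1}{\Expression_2}$ is a normal form and the components are unchanged, so determinism again suffices. When $\Type=\tbullet\TypeS$, the case $\ind=0$ is trivial as $\indti{\tbullet\TypeS}{0}=\EE$, and for $\ind+1$ we have $\indti{\tbullet\TypeS}{\ind+1}=\indti{\TypeS}{\ind}$ with a strictly smaller first index, so the induction hypothesis applies. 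The substantive case is $\Type=\TypeS_1\arrow\TypeS_2$, where membership splits into an \emph{application condition} and a \emph{head condition}. The application condition, ``$\Expression A\in\indti{\TypeS_2}{\indj}$ for every $A\in\indti{\TypeS_1}{\indj}$ with $\indj\le\ind$'', is preserved because $\Expression A\red\Expression' A$ by \refrule{r-ctxt} and the induction hypothesis applies to $\indti{\TypeS_2}{\indj}$ (either $\indj<\ind$, or $\indj=\ind$ and $\typerank(\TypeS_2)<\typerank(\TypeS_1\arrow\TypeS_2)$). The head condition, membership in $\WNVAR$ or reducibility to some $\lambda x.\ExpressionF$ or some $\Context[\Constant]$, is treated through the reduct-set identity.

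For the second part I would verify each constant using the first part together with \cref{lem:A}. The atoms are direct: $\zero\red^*\zero$ gives $\zero\in\indti{\nat}{\ind}$ for all $\ind$. Every functional constant $\Constant$ is already of the form $\Context[\Constant]$ with empty context, so it sits in the appropriate disjunct of the arrow interpretation and only the application condition remains. For $\fst$, given $A\in\indti{\TypeS_1\times\TypeS_2}{\indj}$: if $A\in\WNVAR$ then $\Fst{A}\in\WNVAR\subseteq\indti{\TypeS_1}{\indj}$ by \ri{\cref{lem:A}}{\cref{lemma:wnvar}}; otherwise $A\red^*\Pair{A_1}{A_2}$ with $A_1\in\indti{\TypeS_1}{\indj}$, so $\Fst{A}\red^*\Fst{\Pair{A_1}{A_2}}\red A_1$ and closure under expansion yields $\Fst{A}\in\indti{\TypeS_1}{\indj}$; $\snd$ is symmetric and $\suc$ is analogous via numerals. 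For $\pair$ one nests the argument: $\pair\,A$ is $\Context[\pair]$, and one checks $\pair\,A\,B=\Pair{A}{B}$ against the product interpretation, using $B\in\indti{\TypeS_2}{\indj'}$ directly and $A\in\indti{\TypeS_1}{\indj'}$ by monotonicity (\ri{\cref{lem:A}}{\cref{lemma:monotonicityofinterpretation}}) since $\indj'\le\indj$.

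The hard part is exactly the head condition in the arrow case of the first part. A single reduction step can shuttle a term between the three alternatives: for instance $\Fst{\Pair{(\lambda y.y)}{b}}$ meets the $\Context[\Constant]$ alternative yet reduces to a $\lambda$-abstraction, and firing a $\Context[\Constant]$ redex can expose a variable head and land in $\WNVAR$. One therefore cannot argue one disjunct at a time; the clean route is to read membership (modulo $\WNVAR$) as ``$\{W\mid\Expression\red^*W\}$ meets the $\lambda$-forms or the $\Context[\Constant]$-forms, and the application condition holds'', and then combine the reduct-set identity $\{W\mid\Expression\red^*W\}=\{\Expression\}\cup\{W\mid\Expression'\red^*W\}$ with the observation that any reduction of a $\Context[\Constant]$ either preserves the constant head or fires the redex, so the set of reachable weak head normal forms is unchanged. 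The remaining cases are routine bookkeeping.
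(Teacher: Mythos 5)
Your overall strategy is exactly the one the paper intends: its entire proof of part~1 is the hint ``induction on $(\ind,\typerank(\Type))$'' and of part~2 ``use the definition of the interpretation'', and you carry out that induction, using determinism of weak head reduction as the engine, in far more detail than the paper does. Your treatment of the $\nat$, type-variable, product and $\tbullet$ cases, of the application condition in the arrow case (including the correct use of the lexicographic hypothesis at $(\indj,\typerank(\TypeS_2))$), and your verification of each constant in part~2 via part~1, \cref{lem:A} and monotonicity, are all correct.

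The gap is at the step you yourself single out as the hard one: the head condition of the arrow case, in the subcase where the witness of ``$\Expression\red^*\Context[\Constant]$'' is $\Expression$ itself and the step $\Expression\red\Expression'$ fires the redex involving that constant. Your bridging claim --- that the set of reachable weak head normal forms is unchanged --- is true (it follows from determinism for \emph{any} step), but it does not give the head condition for $\Expression'$ under the reading of membership you fixed in the same sentence, namely that $\{W\mid\Expression\red^*W\}$ merely \emph{meets} the $\Context[\Constant]$-forms. Concretely, take $\Expression=\Fst{\Pair{\omegaterm}{\zero}}$, $\Expression'=\omegaterm$, and the type $\nat\arrow\infinite$. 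Then $\Expression$ is literally of the form $\Context[\Constant]$ (with $\Context=\Hole~\Pair{\omegaterm}{\zero}$ and $\Constant=\fst$), and the application condition holds vacuously because $\indti{\infinite}{\indj}=\EE$ for every $\indj$; so under your reading $\Expression\in\indti{\nat\arrow\infinite}{\ind}$ for all $\ind$, while $\Expression'=\omegaterm$ satisfies none of the three disjuncts (it is not in $\WNVAR$, never reaches a $\lambda$-abstraction, and never reaches any $\Context[\Constant]$, since it only reduces to itself). Hence under that literal reading the statement is \emph{false}, and no argument can close this case; the same failure occurs at each fixed $\ind$ for the $\infinite$-free type $\nat\arrow\tbullet[\ind+1]\nat$. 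The repair is to read ``$\Expression\red^*\Context[\Constant]$'' --- in line with the other two disjuncts, whose shapes $\Context[\varX]$ and $\Fun\varX\ExpressionF$ are automatically normal forms --- as ``$\Expression$ reduces to a weak head \emph{normal form} of the form $\Context[\Constant]$''. With that reading your problematic term is simply not in the interpretation, all three head conditions become statements about reachable normal forms, determinism settles them in one stroke, and your special observation about reductions of $\Context[\Constant]$-terms becomes unnecessary. As written, however, the proposal oscillates between the two readings precisely at the decisive step, so the hard case is not actually proved.
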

\begin{proof} (\cref{lemma:closedunderreductionandexpansion}). 
By induction on $(\ind, \typerank(\Type))$.

(\cref{lemma:constants}).  Using the definition of type intepretation.
\end{proof}

\begin{lemma}\label{lem:C}
If $\funsubst \modelsi \TypeContext$,  then
$\funsubst \modelsj \TypeContext$ for all $\indj \leq \ind$.
\end{lemma}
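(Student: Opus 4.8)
The plan is to reduce the statement to the monotonicity of the type interpretation in its index, namely \ri{\cref{lem:A}}{\cref{lemma:monotonicityofinterpretation}}, which asserts $\indti{\Type}{\ind+1} \subseteq \indti{\Type}{\ind}$ for every type $\Type$ and every $\ind \in \natset$.

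First I would establish the iterated inclusion $\indti{\Type}{\ind} \subseteq \indti{\Type}{\indj}$ whenever $\indj \leq \ind$. This follows by a trivial induction on $\ind - \indj$: the base case $\indj = \ind$ is the identity, and the inductive step chains one application of \ri{\cref{lem:A}}{\cref{lemma:monotonicityofinterpretation}}, giving $\indti{\Type}{\ind} \subseteq \indti{\Type}{\ind-1} \subseteq \cdots \subseteq \indti{\Type}{\indj}$. Note that the inclusion runs in the direction of \emph{decreasing} index: a smaller time index yields a larger (more permissive) interpretation, which is exactly the direction needed here.

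With this inclusion in hand, the conclusion is immediate by unfolding \cref{definition:modelsi}. Assume $\funsubst \modelsi \TypeContext$ and fix $\indj \leq \ind$. For each binding $\varX : \Type \in \TypeContext$ we have $\funsubst(\varX) \in \indti{\Type}{\ind}$ by hypothesis, and hence $\funsubst(\varX) \in \indti{\Type}{\indj}$ by the inclusion just established. Since this holds for every binding of $\TypeContext$, we conclude $\funsubst \modelsj \TypeContext$.

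There is no real obstacle here: the entire content of the lemma is the monotonicity already proved in \cref{lem:A}, applied uniformly to every variable declared in the context. The only point worth flagging is the orientation of the inequality, i.e.\ that the interpretation is antitone in the index, so that a hypothesis holding at the larger index $\ind$ transfers down to every $\indj \leq \ind$.
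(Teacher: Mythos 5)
Your proof is correct and follows exactly the route the paper takes: the paper's proof is the one-line observation that the lemma follows from \ri{\cref{lem:A}}{\cref{lemma:monotonicityofinterpretation}}, and your argument simply spells out the details (iterating the one-step inclusion to get $\indti{\Type}{\ind} \subseteq \indti{\Type}{\indj}$ for $\indj \leq \ind$, then applying it to each binding in $\TypeContext$). Your remark about the antitone orientation of the interpretation in the index is a correct and helpful clarification, not a deviation.
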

\begin{proof}
It  follows from \ri{\cref{lem:A}}{\cref{lemma:monotonicityofinterpretation}}.
\end{proof}

{\bf Proof of ~\cref{theorem:soundness}}.
\begin{proof} 
We prove that $\TypeContext \modelsi \Expression:\Type$ for all $\ind \in \mathbb{N}$
by induction on  $\wte{\TypeContext}\Expression\Type$.

\mycase{Rule  \refrule{\const}}

\noindent 
It   follows from \ri{\cref{lem:B}}{\cref{lemma:constants}}.

\mycase{Rule \refrule\introbullet}

\noindent 
Suppose $i=0$. Then\comma 
\[
\funsubst(\Expression) \in \indti{\Type} {0}= \EE
\]
Suppose $i >0$ and  $\funsubst \modelsi \TypeContext$.
If $x : \TypeS \in \TypeContext$ then 
$\funsubst(x) \in \indti{\TypeS} {\ind} \subseteq  \indti{\TypeS} {\ind-1}$ 
by \ri{\cref{lem:A}}{\cref{lemma:monotonicityofinterpretation}}.
Hence, $\funsubst \models_{i-1} \TypeContext$.
By induction hypothesis\comma 
  $\TypeContext \modelsj \Expression:\Type$ for all $j \in \mathbb{N}$.   
Hence\comma 
$\funsubst(\Expression) \in \indti{\Type} {\ind-1}$
and 
\[
\funsubst(\Expression) \in \indti{\Type} {\ind-1}= \indti{\tbullet{\Type}}{\ind}
\]

\mycase{Rule \refrule\elimarrow}
 
 \noindent The derivations   ends with the rule:   
  \[
  \inferrule{
    \wte{\TypeContext}{\Expression_1}{\tbullet[n](\TypeS \arrow \Type)}
    \\
    \wte{\TypeContext}{\Expression_2}{\tbullet[n]\TypeS}
  }{
    \wte{\TypeContext}{\Expression_1\Expression_2}{\tbullet[n]\Type}
  }
  \]
  with 
  $\Expression = \Expression_1\Expression_2$.
By induction hypothesis\comma for all $\ind \in \natset$
\begin{equation}
\label{equation:inductiohypothesisoperator}
\TypeContext \modelsi \Expression_1: \tbullet[n](\TypeS \arrow \Type)
\end{equation}
\begin{equation}
\label{equation:inductiohypothesisargument}
\TypeContext \modelsi\Expression_2: \tbullet[n]\TypeS
\end{equation}
We have two cases:
\begin{enumerate}
\item Case $\ind < n$.
By \ri{\cref{lemma:interpretationofmanybullets}}{\cref{lemma:interpretationofmanybullets1}}\comma
$\indti{\tbullet[n] \Type} {\ind} = \EE$. We trivially  get  
\[
\funsubst (\Expression_1 \Expression_2) \in
\indti{\tbullet[n] \Type} {\ind}
\]
\item Case $\ind \geq n$. Suppose that $\funsubst \modelsi \TypeContext$. 
 
\begin{equation}
\label{equation:interpretationexpone}
\begin{array}{lll}
\funsubst(\Expression_1) & \in & \indti{\tbullet[n](\TypeS \arrow \Type)} \ind 
\mbox{ by \eqref{equation:inductiohypothesisoperator} } \\
              & =   & \indti{(\TypeS \arrow \Type)} {\ind-n} 
\mbox{ by \ri{\cref{lemma:interpretationofmanybullets}}{\cref{lemma:interpretationofmanybullets2}}} 
\end{array}
\end{equation}
\begin{equation}
\label{equation:interpretationexptwo}
\begin{array}{lll}
\funsubst(\Expression_2) & \in & \indti{\tbullet[n]\TypeS} \ind
\mbox{ by \eqref{equation:inductiohypothesisargument}} \\
   & =   & \indti{\TypeS} {\ind-n}
   \mbox{ by \ri{\cref{lemma:interpretationofmanybullets}}{\cref{lemma:interpretationofmanybullets2}}} 
   \end{array}
\end{equation}
By Definition of $\indti{(\TypeS \arrow \Type)}{\ind-n} $ and 
(\ref{equation:interpretationexpone})\comma 
there are 
two possibilities:
\begin{enumerate}
\item Case $\funsubst(\Expression_1) \in  \WNVAR $. Then\comma
\begin{equation}
\label{equation:applicationsoundness}
\funsubst(\Expression_1 \Expression_2) =
\funsubst(\Expression_1) \funsubst(\Expression_2)
\red^{*} \Context[\varX] \funsubst(\Expression_2)
\end{equation}
Hence\comma 
\[
\begin{array}{lll}
\funsubst(\Expression_1 \Expression_2) & \in \WNVAR & \mbox{by (\ref{equation:applicationsoundness})}  \\
&  \subseteq \indti{\tbullet[n] \Type}{\ind} &\mbox{ by \ri{\cref{lem:A}}{\cref{lemma:wnvar}}}.
\end{array}
\]
\item Case $ \funsubst(\Expression_1) \red^{*} \lambda x. \Expression' $
or $ \funsubst(\Expression_1) \red^{*} \Context[\Constant]$.
We also have that
\[
 \funsubst(\Expression_1)\Expression'' \in \indti{\Type}{\ind-n}  \ \ 
 \forall \Expression'' \in \indti{\TypeS}{\ind-n} 
 \]
 In particular 
 (\ref{equation:interpretationexptwo})\comma 
  implies 
\[
\funsubst (\Expression_1 \Expression_2) = \funsubst(\Expression_1) \funsubst(\Expression_2)
 \in \indti{\Type} {\ind -n}
\]
Since $\indti{\Type} {\ind -n}= \indti{\tbullet[n] \Type}{\ind}$ by  \ri{\cref{lemma:interpretationofmanybullets}}{\cref{lemma:interpretationofmanybullets2}}\comma we are done.
\end{enumerate}
\end{enumerate}



\mycase{Rule \refrule\introarrow}

\noindent The derivation   ends with the rule:   
    \[
    \inferrule{
      \wte{\TypeContext,\varX: \tbullet[n]\Type }{\Expression}{\tbullet[n]\TypeS}
    }{
      \wte{\TypeContext}{\Fun \varX \Expression}{\tbullet[n](\Type \arrow \TypeS)}
    }
    \]
 By induction hypothesis\comma for all $\ind \in \natset$ 
 \begin{equation}
\label{equation:inductionhypothesisbodyabstraction}
 \TypeContext,\varX: \tbullet[n]\Type \modelsi {\Expression} : {\tbullet[n]\TypeS}
 \end{equation}    
    We have two cases:   
    \begin{enumerate}    
    \item Case $\ind < n$. By \ri{\cref{lemma:interpretationofmanybullets}}{\cref{lemma:interpretationofmanybullets1}}\comma
$\indti {\tbullet[n](\Type \arrow \TypeS)} \ind = \EE$. We trivially 
 get 
     \[
        \funsubst (\Fun \varX \Expression) \in             \indti {\tbullet[n](\Type \arrow \TypeS)} \ind 
           \]    
    \item  Case $\ind \geq n$.  Suppose that $\funsubst \modelsi \TypeContext$. 
    By \ri{\cref{lemma:interpretationofmanybullets}}{\cref{lemma:interpretationofmanybullets2}}\comma
     it is enough  
    to prove 
    that 
    \[
    \funsubst (\Fun \varX \Expression) \in \indti {\Type \arrow \TypeS}{\ind - n}
    \]
    For this\comma suppose   
    $\ExpressionF \in  \indti{\Type}{\indj}$
    for $\indj \leq \ind -n$.
    We consider the substitution function defined as 
    $\funsubst_0 =  \funsubst \cup \{ (\varX, \ExpressionF) \}$.
    We have that  
    \begin{equation}
    \label{equation:funsubstj+n}
    \funsubst_0\models_{j+n} \TypeContext, \varX :\tbullet[n]\Type 
    \end{equation}
    because
    \begin{enumerate}
    \item $\funsubst_0(\varX) = 
    \ExpressionF \in  \indti{\Type}{\indj} = \indti{\tbullet[n]
    \Type}{\indj+n}$
    by \ri{\cref{lemma:interpretationofmanybullets}}{\cref{lemma:interpretationofmanybullets2}}.
    
    \item $\funsubst_0 \models_{\indj+n} 
    \TypeContext$
    by \cref{lem:C}
    and the fact that $\funsubst_0 \modelsi \TypeContext$.
    \end{enumerate}
  It follows from \eqref{equation:inductionhypothesisbodyabstraction} and
  \eqref{equation:funsubstj+n} that
  \begin{equation}
  \label{equation:rhozero}
  \funsubst_0 (\Expression) 
        \in \indti{\tbullet[n]\TypeS}{\indj+n}
  \end{equation}
   Therefore\comma we  obtain   
    \[\begin{array}{lll}
     (\Fun \varX \Expression) \ExpressionF \red
       \funsubst(\Expression)\subst{\ExpressionF}{\varX}
       = \funsubst_0 (\Expression) 
        & \in \indti{\tbullet[n]\TypeS}{\indj+n} & \mbox{by 
        (\ref{equation:rhozero}}) \\
        & = \indti{\TypeS}{\indj} & \mbox{by  \ri{\cref{lemma:interpretationofmanybullets}}{\cref{lemma:interpretationofmanybullets2}}}
        \end{array}
    \]    
      By \ri{\cref{lem:B}}{\cref{lemma:closedunderreductionandexpansion}}\comma we conclude
      \[\begin{array}{lll}
     (\Fun \varX \Expression) \ExpressionF 
        & \in \indti{\TypeS}{\indj}.
        \end{array}
        \qedhere
    \]       
    \end{enumerate}
\end{proof}

\newpage

\section{Auxiliary Functions for the Type Inference Algorithm}
\label{appendix:local}

This section gives several algorithms (functions) that are used in the
unification and type inference algorithms.

Let $\Subtrees \MetaType $ denote the set of subtrees of $\MetaType$ and 
$\Psubtrees \MetaType $ denote the set of proper subtrees of $\MetaType$.
If $\MetaType$ is a regular pseudo meta-type then,
$\Subtrees \MetaType $ and $\Psubtrees \MetaType $ are finite.

\subsection{Guardness}
\cref{algorithm:guards} defines the function $\guards$ that given  
a set $\setC$ of meta-type constraints returns a set $\setE$ of integers constraints
that guarantees guardedness.

\begin{algorithm}
\SetKwInOut{Input}{input}\SetKwInOut{Output}{output}
\SetKwFunction{True}{true}
\SetKwFunction{False}{false}
\SetKw{where}{{\small where}}
\SetKwHangingKw{sea}{{let}}
\SetKw{and}{{\small and}}
\SetKw{Return}{{\small return}}
\SetKw{Disj}{{\small or}}

\Fn{$\Guards \setC$}{
\Input{$\setC$ is a  set of substitutive constraints}
\Output{  $\Guards \setC = \setE$ such that
      $\substE \circ \substT_{\setC}$ is guarded iff
       $\substE \models \setE$ for all  natural $\substE$.
      }

\BlankLine

\Return{$\bigcup_{\TypeVar \in \dom (\substT_{\setC})} \Guards {\substT_{\setC}(\TypeVar)}$
}}

\Fn{$\Guards \MetaType$}{
\Input{$\MetaType$ regular pseudo meta-type}
\Output{$\Guards \MetaType = \setE$  such that
      $\substE(\MetaType)$ is guarded iff
       $\substE \models \setE$ for all  natural  $\substE$.
      }

\BlankLine

 $\assign{\setE}{\emptyset}$\;
 
\ForEach{$\MetaTypeS \in \Psubtrees \MetaType $ such that $\MetaTypeS \in \Psubtrees \MetaTypeS $ 
\tcc*[f]{$\MetaTypeS = \ldots \MetaTypeS \ldots $}}
{
$\assign{\setE}{\setE \cup \{ \IntExp \mayor 1 \mid \IntExp \in \Countb{\MetaTypeS}{\MetaTypeS}} \}$
}
     
\Return{$\setE$}}

\Fn{$\Countb{\MetaTypeS}{\MetaType}$}{
\Input{$\MetaTypeS$, $ \MetaType$ regular  pseudo meta-types}
\Output{
the set of $\IntExp_{1} + \ldots + \IntExp_{n}$ such that 
$\MetaType = \bullet^{\IntExp_{1}} (\ldots (\bullet^{\IntExp_{n}} \MetaTypeS) \ldots)$
} 

\BlankLine    

\lIf{$\MetaTypeS \not \in \Psubtrees {\MetaType}$}{\Return {$\emptyset$}}
\lCase{$\MetaType = \MetaTypeS$ } {\Return {$\{ 0 \}$}}
\lCase{$\MetaType = \bullet^{\IntExp} \MetaType'$ } 
{\Return {$\{ \IntExp + \IntExp' \mid \IntExp' \in \Countb{\MetaTypeS}{\MetaType'}\}$}}
\lCase{$\MetaType = (\MetaType_{1} \binaryconstructor \MetaType_{2})$}
{\Return{$\Countb{\MetaTypeS}{\MetaType_{1}}\cup  \Countb{\MetaTypeS}{\MetaType_{2}}$}}
}

  \caption{Integer constraints to guarantee guardedness}
\label{algorithm:guards}
\end{algorithm} 
%
 The function $\Countb{\MetaTypeS}{\MetaTypeS}$ 
 counts the number of bullets from the root of  
 $\MetaTypeS$ to each recursive occurrence of $\MetaTypeS$, i.e. it  gives the  
  set of $\IntExp_{1} + \ldots + \IntExp_{n}$ such that 
$\MetaTypeS = \bullet^{\IntExp_{1}} (\ldots (\bullet^{\IntExp_{n}} \MetaTypeS) \ldots)$
  In the figure below, $\Countb{\MetaTypeS}{\MetaTypeS}$ returns
  $\{ \IntExp_{1} + \IntExp_{2}+ \IntExp_{3}, \IntExp_{4} + \IntExp_{5}\}$.
  
\[
 \includegraphics[scale=0.1]{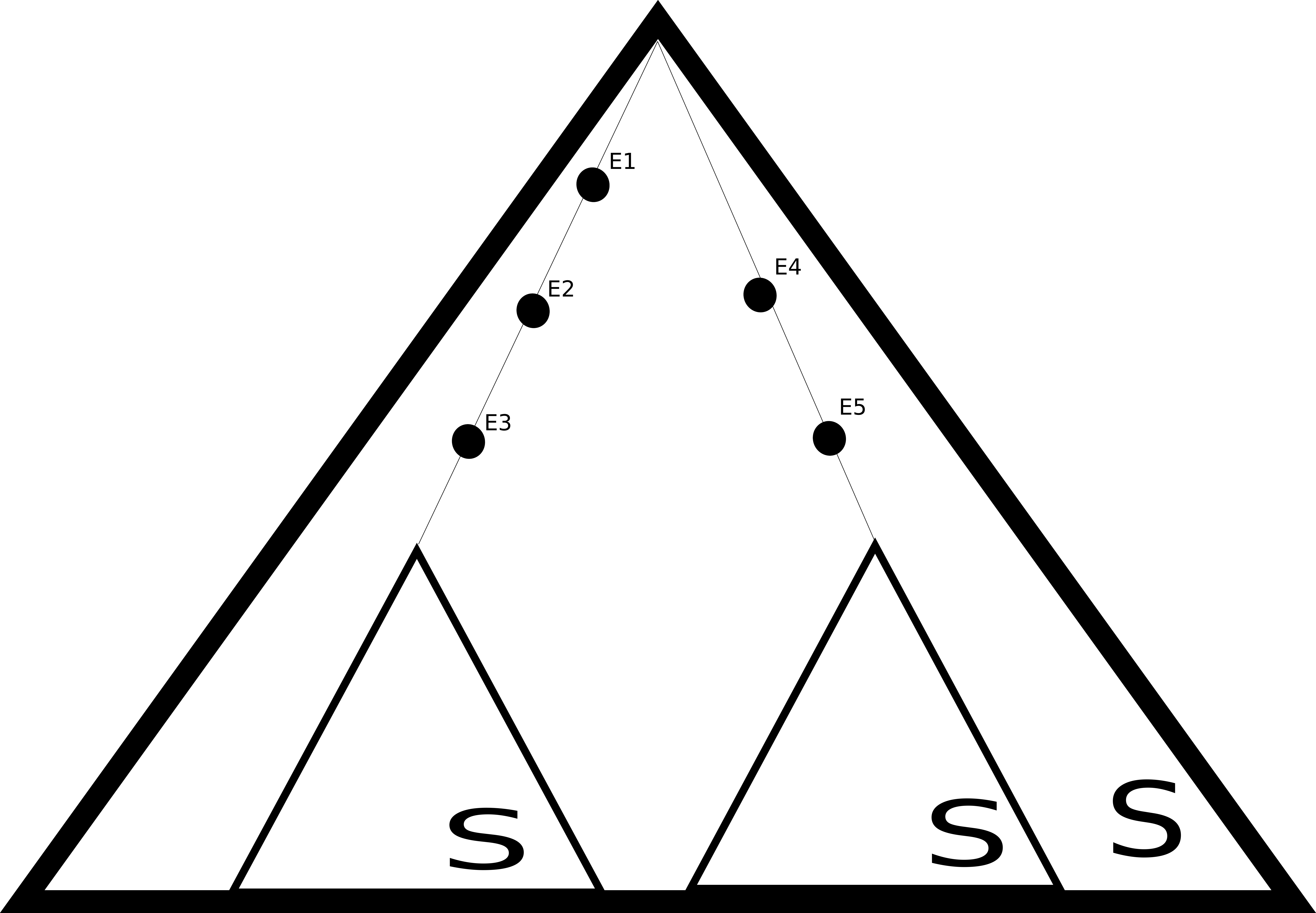}
\]
For example,

\begin{enumerate}

\item 
Let 
$\setC = \{ \TypeVar  \equal \bullet^{\IntVar} (\nat \arrow \bullet^{\IntVarM} \TypeVar) \}$.
The set $\setE =\{ \IntVar + \IntVarM \mayor 1 \}$ 
 guarantees that $\substT_{\setC}$ is guarded, i.e. 
$\substE \circ \substT_{\setC}$ is guarded for all $\substE$ such that
$\substE \models \setE$.

\item Let 
$ \setC = \{
\TypeVar_1 \equal  \bullet^{N_1} (\bullet^{N_2} \TypeVar_1 \rightarrow \bullet^{N_3} \TypeVar_2),  \ 
\TypeVar_2 \equal  \bullet^{M_1} (\bullet^{M_2} \TypeVar_2 \rightarrow \bullet^{M_3} \TypeVar_3), \
\TypeVar_3 \equal \bullet^{K_1} (\bullet^{K_2} \TypeVar_1 \rightarrow \bullet^{K_3} \TypeVar_3)
\}$.
The set $\setE =\{ N_1 + N_2 \geq 1, N_1 + N_3 + M_1 + M_3 + K_1 + K_2 \geq 1,
M_1 + M_2 \geq 1, K_1 + K_3 \geq 1 \}
$  enforces that  $\substT_{\setC}$ is guarded, i.e.
i.e. 
$\substE \circ \substT_{\setC}$ is guarded for all $\substE$ such that
$\substE \models \setE$.
 
\end{enumerate}

\subsection{Other Auxiliary Functions}

Algorithm \ref{algorithm:test} checks whether
 a given  meta-type has certain shape and 
 Algorithm \ref{algorithm:equalcons} checks whether
 two meta-types have the same shape `at depth 0'.
  Algorithm \ref{algorithm:infinite} defines two functions
  for checking whether a meta-type is  equal to $\infinite$.
  \cref{algorithm:infinitefree} checks  whether
  the meta-type is $\infinite$-free  and 
  \cref{algorithm:tailfinite} checks whether the meta-type is tail finite.
  
 \begin{algorithm}[h]
 \SetKwInOut{Input}{input}\SetKwInOut{Output}{output}
\LinesNumbered
\SetKw{where}{{\small where}}
\SetKwHangingKw{sea}{{let}}
\SetKw{and}{{\small and}}
\SetKw{Return}{{\small return}}
\SetKw{Disj}{{\small or}}
 \Fn{$\isNatOpBox \MetaType$}{
\Return{
 $\MetaType =\nat $
 \Disj{ $\MetaType = \MetaTypeS_1 \binaryconstructor \MetaTypeS_2$  } 
 }}

  \caption{Checking the shape of the meta-type}
\label{algorithm:test}
 \end{algorithm}

\begin{algorithm}[h]

 \SetKwInOut{Input}{input}\SetKwInOut{Output}{output}
\LinesNumbered
\SetKw{where}{{\small where}}
\SetKwHangingKw{sea}{{let}}
\SetKw{and}{{\small and}}
\SetKw{Return}{{\small return}}
\SetKw{Disj}{{\small or}}

 \Fn{$\equalcons {\MetaType}{\MetaTypeS}$}{
\Return{
 $\MetaType =\nat $ \and{} $\MetaTypeS =\nat $ 
 \Disj{}
 $\MetaType = \MetaType_1 \binaryconstructor \MetaType_2$ \and{ }
 $\MetaTypeS = \MetaTypeS_1 \binaryconstructor \MetaTypeS_2$ 
 }}
 
  \caption{Checking that meta-types are equally constructed}
\label{algorithm:equalcons}
 \end{algorithm}

 \begin{algorithm}[h]
 \SetKwInOut{Input}{input}\SetKwInOut{Output}{output}
\SetKw{where}{{\small where}}
\SetKwHangingKw{sea}{{let}}
\SetKw{and}{{\small and}}
\SetKw{Return}{{\small return}}
\SetKw{Disj}{{\small or}}

\Fn{$\equalinfinite \MetaType$}{
\Input{$\MetaType$ regular meta-type}
\Output{$\equalinfinite \MetaType$ is true iff $\MetaType$ is different from 
$\infinite$.}

\If{
$\MetaType = \bullet^{\IntExp} \MetaTypeS$ \and{}
($\MetaTypeS =  \TypeVar$ \Disj{}
 $\MetaTypeS =  \nat$ \Disj{}
$\MetaTypeS =  (\Type_1 \arrow \Type_2)$ \Disj{}
$\MetaTypeS =(\Type_1 \times \Type_2)$ 
}{\Return{\True}}\lElse{\Return{\False}}
}

  \caption{Checking that a meta-type is equal to $\infinite$}
\label{algorithm:infinite}
\end{algorithm}

\begin{algorithm}[h]
 \SetKwInOut{Input}{input}\SetKwInOut{Output}{output}

\SetKwHangingKw{sea}{{let}}
\SetKw{and}{{\small and}}
\SetKw{Return}{{\small return}}
\SetKw{Disj}{{\small or}}

\Fn{$\freeinfinite \MetaType$}{
\Input{$\MetaType$ regular meta-type}
\Output{$\freeinfinite \MetaType$ is true iff $\MetaType$ is
$\infinite$-free.}

\ForEach{$\MetaTypeS \in \Subtrees{\MetaType}$}
        {\lIf{ $\neg \equalinfinite{\MetaTypeS}$}{\Return{\False}}}
\Return{\True}}

  \caption{Checking that a meta-type is $\infinite$-free}
\label{algorithm:infinitefree}
\end{algorithm}

\begin{algorithm}[H]
 \SetKwInOut{Input}{input}\SetKwInOut{Output}{output}

\SetKwHangingKw{sea}{{let}}
\SetKw{and}{{\small and}}
\SetKw{Return}{{\small return}}
\SetKw{Disj}{{\small or}}

\Fn{$\tailfinite \MetaType$}{
\Input{$\MetaType$ regular meta-type}
\Output{$\tailfinite \MetaType$ is true iff $\MetaType$ is
tail finite.}

\lIf{$\neg\freeinfinite{\MetaType}$}{\Return{\False}}
{
 $\assign{\setE}{\emptyset}$\;
 
\ForEach{$\MetaTypeS \in \Psubtrees \MetaType $ such that $\MetaTypeS \in \Psubtrees \MetaTypeS $ 
\tcc*[f]{$\MetaTypeS = \ldots \MetaTypeS \ldots $}}
{\lIf{$\Finitealt{\MetaTypeS}{\MetaTypeS}$}{\Return{$\False$}}
}
     
\Return{$\True$}
}
}

\Fn{$\Finitealt{\MetaTypeS}{\MetaType}$}{
\Input{$\MetaTypeS$, $ \MetaType$ regular  pseudo meta-types}
\Output{$\Finitealt{\MetaTypeS}{\MetaType}$ true iff
$\MetaType = \bullet^{\IntExp_{1}} (\MetaType_1 \arrow \ldots (\MetaType_n \arrow \bullet^{\IntExp_{n}} \MetaTypeS) \ldots)$
}

\BlankLine

\lIf{$\MetaType = \bullet^{\IntExp} (\MetaType_{1} \arrow \MetaType_{2})$ and
$\MetaTypeS \in \Psubtrees {\MetaType_2}$}
{\Return{$ \Finitealt{\MetaTypeS}{\MetaType_{2}}$}}
\lElse{\Return{$\False$}}

}

  \caption{Checking that a meta-type is tail finite}
\label{algorithm:tailfinite}
\end{algorithm}

\end{document}
